\documentclass[11pt,letterpaper]{article}

\usepackage[margin=0.95in]{geometry}
\usepackage{amsmath,amssymb,amsthm,mathtools,bm}
\usepackage{algorithm}
\usepackage{algpseudocode}
\usepackage{booktabs}
\usepackage{enumitem}
\usepackage{booktabs,array,longtable,tabularx}
\usepackage{graphicx}
\usepackage{float}
\usepackage[all,cmtip]{xy}
\usepackage{qcircuit}
\usepackage{xcolor}
\usepackage{mdframed}
\usepackage{tikz}
\usetikzlibrary{arrows.meta,calc}
\usepackage{microtype}
\usepackage[colorlinks=true,linkcolor=blue!55!black,citecolor=red!55!black,urlcolor=blue!55!black]{hyperref}
\usepackage{authblk}
\allowdisplaybreaks
\setlist{leftmargin=*,nosep}
\newcolumntype{Y}{>{\raggedright\arraybackslash}X}
\newcolumntype{Z}{>{\centering\arraybackslash}X}

\theoremstyle{plain}
\newtheorem{theorem}{Theorem}[section]
\newtheorem{lemma}[theorem]{Lemma}
\newtheorem{proposition}[theorem]{Proposition}
\newtheorem{corollary}[theorem]{Corollary}
\newtheorem{property}[theorem]{Property}
\theoremstyle{definition}
\newtheorem{definition}[theorem]{Definition}

\theoremstyle{remark}

\newcommand{\F}{\mathbb{F}}

\newcommand{\im}{\operatorname{im}}
\newcommand{\rank}{\operatorname{rank}}
\newcommand{\rad}{\operatorname{rad}}
\newcommand{\supp}{\operatorname{supp}}

\newcommand{\Tr}{\operatorname{Tr}}

\newcommand{\Span}{\operatorname{span}}

\newcommand{\TFD}{\operatorname{TFD}}
\newcommand{\Bell}{\operatorname{Bell}}

\newcommand{\ket}[1]{\left|#1\right\rangle}

\newcommand{\norm}[1]{\left\lVert #1\right\rVert}

\newcommand{\floor}[1]{\left\lfloor #1\right\rfloor}
\newcommand{\ceil}[1]{\left\lceil #1\right\rceil}

\newcommand{\details}[2]{\hyperref[#1]{#2}}

\definecolor{revisionyellow}{RGB}{255,242,166}
\newif\ifshowyellowrevisions
\showyellowrevisionstrue

\title{Quantum Gibbs State Preparation via Relative Decoding:
From Fast-Mixing Sources to Broader Target Classes}

\author[1]{Zhong-Xia Shang%
  \thanks{\texttt{zhongxia.shang@math.ku.dk}}}

\author[1]{Yufei Wang%
  \thanks{\texttt{yuw@math.ku.dk}}}
  
\author[1]{Daniel Stilck Fran\c{c}a%
  \thanks{\texttt{dsfranca@math.ku.dk}}}

\affil[1]{Department of Mathematical Sciences,
University of Copenhagen, Denmark}

\date{}

\begin{document}
\hypersetup{pageanchor=false}

\pagenumbering{gobble}

\begin{titlepage}
\maketitle
\centering
\vspace{0.7em}

\begin{minipage}{0.93\textwidth}

\begin{abstract}
Provable guarantees for preparing quantum Gibbs states, such as rapid
mixing or gapped coherent preparation paths, are known only for
restricted Hamiltonian families and usually must be re-derived for each
new family. Building on homomorphic polynomial transduction, a generalization of decoded quantum interferometry (DQI) and Hamiltonian DQI (HDQI), we show how
relative decoding transfers such a guarantee from a source Hamiltonian
$H_A$ to a target $H_B$. Writing each Hamiltonian as a sum of $m$ Pauli terms, the subsets of
terms whose product is proportional to the identity, its relations,
form a binary linear code, $K_A$ for the source and $K_B$ for the
target, with the Pauli-label matrix as parity-check matrix as in DQI. Starting from the canonical thermofield
double (TFD) of $H_A$, a Bell transform, a reversible label map, and a
coherent decoder for the quotient code $K_B/K_A$ prepare an approximate TFD of $H_B$, and
hence its Gibbs state. Because this decoder only resolves target
relations absent from the source, the reachable inverse temperature is
set by the relative distance $d_{\rm rel}$, the fewest terms in any such
relation. It can far exceed the ordinary distance $d_{\rm ord}$, the
fewest terms in any target relation, which bounds the uniform exact decoding
radius of DQI and HDQI. We show that linear
$d_{\rm rel}$ with efficient decoding at linear radius certifies a
constant inverse temperature. As an example, starting from a
nearest-neighbor spin chain whose TFD is known to be preparable at every finite
temperature, a sparse classical parity-check matrix yields
bounded-degree, geometrically nonlocal, noncommuting targets with
$d_{\rm ord}=3$ and $d_{\rm rel}=\Theta(m)$. To our knowledge, this gives a new class of efficiently preparable TFDs.

\end{abstract}

\end{minipage}
\vfill
\end{titlepage}
\clearpage
\thispagestyle{empty}
\begingroup
\small
\setcounter{tocdepth}{1}
\tableofcontents
\endgroup
\clearpage
\hypersetup{pageanchor=true}
\pagenumbering{arabic}

\section{Introduction}
\label{sec:introduction}

Preparing the Gibbs state $\rho_\beta(H)=e^{-\beta H}/\Tr\,e^{-\beta H}$ of a
many-body Hamiltonian is a central task in quantum simulation and one of the
most natural candidates for a quantum advantage. Two routes dominate the
recent literature. Quantum Gibbs
samplers~\cite{ChenEtAlNature2025,ChenKastoryanoGilyen} run efficiently
implementable dynamics whose fixed point is $\rho_\beta(H)$, and coherent
versions of the same constructions~\cite{ChenKastoryanoGilyen,RouzeFrancaAlhambra}
prepare the thermofield double (TFD), the canonical purification of
$\rho_\beta(H)$. Their cost is governed by a mixing time or a spectral gap,
and provable bounds exist only for restricted families, such as local
Hamiltonians at high temperature~\cite{RouzeFrancaAlhambra} or finite-range
spin chains at every fixed
temperature~\cite{BergamaschiChen1D,Kim2026RapidMixing1D}. Every new family
requires a new analysis.

Decoded quantum interferometry (DQI)~\cite{JordanEtAlDQI} and its
Hamiltonian version HDQI~\cite{HamiltonianDQI} take a different route. They
prepare states whose amplitudes are a polynomial $Q_D$ of the Hamiltonian by
combining Fourier interference with coherent decoding of a classical code
attached to the Hamiltonian; choosing $Q_D\approx e^{-\beta x/2}$ gives an
approximate TFD. The certified inverse temperature is set by the polynomial
approximation and the decoding radius: the standard exact-decoding guarantee
for a degree-$D$ filter is $2D<d_{\rm ord}$, where $d_{\rm ord}$ is the
distance of the code.

\paragraph{This work.}
We combine the two routes. A Gibbs-preparation guarantee for a source
Hamiltonian $H_A$ supplies its TFD; a relative-decoding circuit then converts
this state into the TFD of a target Hamiltonian $H_B$ with different, possibly
geometrically nonlocal, interactions. The general mechanism, homomorphic
polynomial transduction, was introduced in our companion
paper~\cite{ShangFranca2026Transduction} for general operator systems,
including fermionic, qudit and truncated bosonic ones. Here we specialize it
to Pauli Hamiltonians, where compatibility reduces to three checkable
linear-algebraic conditions and the decoding task to a quotient code, and we
work out the thermal application in detail: an explicit circuit, a
trace-distance error bound, a degree-to-temperature certificate built on the
polynomial approximation of~\cite{ShangFranca2026Transduction}, structural
results on which source--target pairs are possible, and a concrete family of
targets with an efficient decoder.

\paragraph{Relations, codes and relative decoding.}
Write source and target as sums of $m$ matched Pauli terms,
$H_A=\sum_i\lambda_i\alpha_iP_A(a_i)$ and $H_B=\sum_i\lambda_i\beta_iP_B(b_i)$,
with $\lambda_i>0$, signs $\alpha_i,\beta_i\in\{\pm1\}$, and binary Pauli
labels collected into matrices $A=[a_1,\ldots,a_m]$ and $B=[b_1,\ldots,b_m]$.
A \emph{relation} of $H_B$ is a subset of its terms whose product is
proportional to the identity, for instance $Z_1$, $Z_2$ and $Z_1Z_2$. The
indicator vectors of relations form a binary linear code $K_B=\ker B$ with
parity-check matrix $B$, and likewise $K_A=\ker A$ for the source. Relations
are precisely what a decoder has to resolve. Expanding $Q_D(H_B)$ gives a sum
over subsets $z\in\F_2^m$ of at most $D$ terms; two subsets produce the same
Pauli label $Bz$ exactly when they differ by a relation; and recovering $z$
from $Bz$ is syndrome decoding, unambiguous only if
$2D<d_{\rm ord}:=\min_{0\ne c\in K_B}|c|$. This is the bottleneck of DQI and
HDQI.

The observation behind this paper is that $z$ itself is never needed. The
transduction circuit only has to erase the \emph{source} label $Az$, which is
already blind to source relations. When $K_A\subseteq K_B$, and the
anticommutation patterns and relation phases of the two Hamiltonians agree,
$Bz$ determines $Az$ for all $|z|\le D$ as soon as
\begin{equation*}
    2D<d_{\rm rel}:=\min_{c\in K_B\setminus K_A}|c| ,
\end{equation*}
the distance of the quotient code $K_B/K_A$. Always $d_{\rm rel}\ge d_{\rm ord}$,
with strict inequality exactly when every minimum-weight target relation is
already a source relation. Absorbing the short target relations can therefore
enlarge the decoding window and, with an efficient relative decoder, certify
lower temperatures. DQI and HDQI are the relation-free case $K_A=0$.

\paragraph{Main results.}
The first result gives the preparation error and temperature range in terms
of the available relative-decoding degree.

\begin{theorem}[Informal; see Theorems~\ref{thm:efficient-poly}
and~\ref{thm:tfd-transduction}]
Let $H_A$ and $H_B$ have the same anticommutation pattern, $K_A\subseteq K_B$,
and matching relation phases, and let a coherent relative decoder for
$K_B/K_A$ be efficient up to degree $D_{\rm dec}<d_{\rm rel}/2$. Given the
canonical TFD of $H_A$ at inverse temperature $\beta$ to trace distance
$\delta_{\rm src}$, an efficient circuit prepares the TFD of $H_B$, and hence
its Gibbs state, to trace distance $\delta_{\rm src}+2\eta+\delta_{\rm circ}$
whenever
\[
    D_{\rm dec}\ \ge\ \beta\Lambda
    +O\bigl(\sqrt{\beta\Lambda\ln(1/\eta)}+\ln(1/\eta)\bigr),
    \qquad \Lambda\ge\|H_A\| .
\]
In particular, $D_{\rm dec}=\Omega(m)$ and $\Lambda=O(m)$ certify every
inverse temperature below a fixed constant.
\end{theorem}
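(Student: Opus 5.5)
The plan is to build the target TFD by running the source TFD through a circuit that realizes $e^{-\beta H_B/2}$ "homomorphically" from $e^{-\beta H_A/2}$. I would work in the Bell (label) picture. Writing $\ket{\TFD_\beta(H_A)}\propto (e^{-\beta H_A/2}\otimes I)\ket{\Bell}$, I replace $e^{-\beta x/2}$ by a degree-$D$ polynomial $Q_D$. I then expand $Q_D(H_A)=\sum_{|z|\le D} c_z(\lambda,\alpha)\,\omega_A(z)\,P_A(Az)$, where $\omega_A(z)$ collects ordering phases from products of Pauli terms.

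The three compatibility conditions enter at this point:
\begin{itemize}
\item equal anticommutation patterns make the ordered-product phases of the $A$- and $B$-expansions agree term by term;
\item $K_A\subseteq K_B$ makes $Az\mapsto Bz$ a well-defined linear map $\phi$ on $\im A$;
\item matching relation phases make collisions $Az=Az'$ contribute identical signs on both sides.
\end{itemize}
Together these give $Q_D(H_B)=\Phi(Q_D(H_A))$ for the linear map $\Phi:P_A(s)\mapsto \theta(s)P_B(\phi(s))$, where the phase $\theta$ is fixed by the conditions. Applying the Bell transform to $(Q_D(H_A)\otimes I)\ket{\Bell}$ produces $\sum_{s}\hat q(s)\ket{s}$ on the source label register. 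The reversible label map then appends $\ket{\phi(s)}$.

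The central step is uncomputing $s$ from $\phi(s)=Bz$. Every $s$ with nonzero amplitude has the form $Az$ with $|z|\le D$. If $Bz=Bz'$ for $|z|,|z'|\le D$, then $z+z'\in K_B$ has weight at most $2D<d_{\rm rel}$, so $z+z'\in K_A$ and hence $Az=Az'$. The coherent relative decoder for $K_B/K_A$ therefore computes $s$ from $Bz$ exactly, and it is efficient whenever $D\le D_{\rm dec}$. After erasing $s$, I apply the inverse Bell transform on the $B$-labels, phases included. The result is exactly $(Q_D(H_B)\otimes I)\ket{\Bell}$ up to normalization. I would then use the paper's Theorem~\ref{thm:efficient-poly} for the efficient implementation, with circuit error $\delta_{\rm circ}$.

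One subtlety is that the input is the exponential TFD, not the polynomial state. I would handle it by running the same circuit, since it acts linearly and isometrically on the relevant subspace. I would then bound
\[
\norm{e^{-\beta H_A/2}-Q_D(H_A)}\le \eta\,\norm{e^{-\beta H_A/2}}_{\rm relevant}.
\]
More precisely, I would take $Q_D$ as the truncated approximation from~\cite{ShangFranca2026Transduction} with uniform relative error $\eta$ on $[-\Lambda,\Lambda]$. This approximation needs degree $\beta\Lambda+O(\sqrt{\beta\Lambda\ln(1/\eta)}+\ln(1/\eta))$, which is a Chebyshev/Bessel tail estimate.

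The point I expect to be the main obstacle is that the circuit is only guaranteed correct on the span of degree-$\le D$ label components. Components of $e^{-\beta H_A/2}$ of higher degree may be mis-decoded. I would therefore argue in two steps. First, replace the source TFD by the normalized $Q_D$ state at cost $\eta$. Second, compare the output $Q_D(H_B)$ state with the true $B$-TFD at another cost $\eta$, using relative error with respect to $\Tr e^{-\beta H}$. This second comparison needs the Gibbs normalizations to be comparable through the homomorphism, since $\Tr\, Q(H_A)^2=\Tr\, Q(H_B)^2$ up to dimension factors, so that normalization does not inflate the error. Adding $\delta_{\rm src}$ by contractivity gives the total $\delta_{\rm src}+2\eta+\delta_{\rm circ}$.

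For the final claim, with $\Lambda=O(m)$ and $D_{\rm dec}=cm$, the condition holds for every $\beta<c/C$ once $m$ is large enough at fixed $\eta$. Tracing out the second copy gives the Gibbs state.
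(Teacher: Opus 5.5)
Your skeleton matches the paper's. The whole circuit is a unitary that maps $\ket{\Psi_A^Q}$ exactly to $\ket{\Psi_B^Q}$, by coefficient matching plus injectivity of $M$ on $\mathcal S_D(A)$. So the high-degree components of the true source TFD never need to be decoded correctly. The budget $\delta_{\rm src}+\eta+\eta+\delta_{\rm circ}$ then follows from unitary invariance, the triangle inequality, and contractivity under partial trace.

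The gap is in the second $\eta$. You justify it by saying that $2^{-n_A}\Tr Q_D(H_A)^2=2^{-n_B}\Tr Q_D(H_B)^2$ makes the Gibbs normalizations comparable. That identity does not bound $\norm{((e^{-\beta H_B/2}-Q_D(H_B))\otimes I)\ket{\Phi_{n_B}}}$ at all. It concerns only the degree-$D$ filter, and it holds only because target words do not collide on $\mathcal S_D(A)$. The operator $e^{-\beta H_B/2}$ contains all degrees, where words do collide, and in general $2^{-n_A}\Tr e^{-\beta H_A}\ne 2^{-n_B}\Tr e^{-\beta H_B}$. Already for $H_A=Z_1+Z_2$ and $H_B=Z_1+Z_1$ one gets $\cosh^2\beta$ versus $\cosh 2\beta$.

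Normalization is not the real issue anyway. What you need is the pointwise bound $0\le e^{-\beta x/2}-Q_D(x)\le\eta\, e^{-\beta x/2}$ at every eigenvalue $x$ of $H_B$. This gives the unnormalized Choi-vector error $\le\eta$ times the norm of $(e^{-\beta H_B/2}\otimes I)\ket{\Phi_{n_B}}$. The distance from $v$ to the ray through $w$ then gives pure-state trace distance $\sin\theta\le\eta$. That pointwise bound requires $\mathrm{spec}(H_B)\subseteq[-\Lambda,\Lambda]$, which you never establish; the hypothesis only bounds $\norm{H_A}$.

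The missing ingredient follows from your own map $\Phi:P_A(u)\mapsto\chi(u)P_B(Mu)$. The identity $Q(H_B)=\Phi(Q(H_A))$ actually holds for \emph{every} polynomial, of any degree. Word by word, the phase ratio is $\chi(Az(w))$ by Lemma~\ref{lem:chi-well-defined} together with the common reordering sign, and no injectivity is used. Injectivity is needed only for $N_A=N_B$ and for decoding. Take $p$ to be the minimal polynomial of $H_A$. Then $p(H_B)=\Phi(p(H_A))=0$, so $\mathrm{spec}(H_B)\subseteq\mathrm{spec}(H_A)\subseteq[-\Lambda,\Lambda]$. Alternatively, assume a common bound such as $\Lambda\ge\sum_i\lambda_i$, which is what the paper's 1D example does.

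Separately, the stated degree is not a Chebyshev/Bessel estimate. Those are naturally absolute-error bounds, and converting them to relative error on $[-\Lambda,\Lambda]$ costs up to a factor $e^{\beta\Lambda}$ in accuracy. It is the Taylor--Poisson certificate the paper cites. Write $e^{-\beta x/2}=e^{-\beta\Lambda/2}e^{\beta(\Lambda-x)/2}$ and truncate the Taylor series of the second factor at degree $D$. The relative error at $x\in[-\Lambda,\Lambda]$ is then $\Pr[\mathrm{Poi}(\beta(\Lambda-x)/2)\ge D+1]\le\Pr[\mathrm{Poi}(\beta\Lambda)\ge D+1]$. The sub-gamma Poisson tail makes this at most $\eta$ exactly under \eqref{eq:tfd-degree-bound}. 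This argument uses both ends of $[-\Lambda,\Lambda]$, which is precisely why the spectral inclusion for $H_B$ is needed.
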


For an extensive spectral bound, efficient relative decoding at linear
radius certifies a constant-inverse-temperature preparation window.
Our main example realizes these conditions using a classical code and a
one-dimensional source with an efficiently preparable TFD.

\begin{theorem}[Informal; see Theorem~\ref{thm:1d-source-transduction}]
Fix $N$ and real coefficients $\lambda_i^{\mu}$ with $0<|\lambda_i^{\mu}|\le J_0$.
Let the source be the nearest-neighbor chain on $N$ two-qubit cells
$(q_i,c_i)$,
\[
    H_A=\sum_{i=1}^{N}\bigl(
        \lambda_i^xX_{q_i}+\lambda_i^zZ_{q_i}
        +\lambda_i^cZ_{c_i}+\lambda_i^{cq}Z_{c_i}Z_{q_i}\bigr)
    +\sum_{i=1}^{N-1}\lambda_i^{zz}Z_{q_i}Z_{q_{i+1}},
\]
whose canonical TFD is efficiently preparable at every fixed
temperature~\cite{BergamaschiChen1D}. Let $F\in\F_2^{(N-k)\times N}$ be a
parity-check matrix with columns $f_1,\ldots,f_N$, distance
$d_F\ge\delta N$, and an efficient syndrome decoder that recovers every
error of weight at most $\lfloor\tau N\rfloor$, where $0<2\tau<\delta$ are
constants. Replace the $N$ central qubits by $N-k$ syndrome qubits
$R_1,\ldots,R_{N-k}$ and each $Z_{c_i}$ by the parity check
$Z_R(f_i):=\prod_{a:(f_i)_a=1}Z_{R_a}$, so that
\[
    H_B=\sum_{i=1}^{N}\bigl(
        \lambda_i^xX_{q_i}+\lambda_i^zZ_{q_i}
        +\lambda_i^cZ_R(f_i)+\lambda_i^{cq}Z_R(f_i)Z_{q_i}\bigr)
    +\sum_{i=1}^{N-1}\lambda_i^{zz}Z_{q_i}Z_{q_{i+1}}.
\]
Then $H_A$ and $H_B$ are compatible, with $m=5N-1$ terms,
\[
    d_{\rm ord}=3,\qquad d_{\rm rel}=d_F=\Theta(m),
    \qquad D_{\rm dec}=\lfloor\tau N\rfloor ,
\]
and relative decoding is exactly syndrome decoding of $F$. The TFD and Gibbs
state of $H_B$ are prepared in polynomial time for every fixed
$\beta<\beta_0=\Theta(\tau/J_0)$. If $F$ has bounded row and column weights,
$H_B$ has bounded-weight terms and bounded qubit degree; if the supports of
the columns of $F$ have unbounded diameter, its interactions are
geometrically nonlocal.
\end{theorem}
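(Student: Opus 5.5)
The plan is to check the three compatibility conditions of the general framework, compute the two codes $K_A$, $K_B$ explicitly, and then invoke the informal Theorem above (Theorems~\ref{thm:efficient-poly} and~\ref{thm:tfd-transduction}) with the 1D source TFD from~\cite{BergamaschiChen1D}.

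\textbf{Compatibility.} Order the $m=5N-1$ terms the same way in both Hamiltonians and take $\alpha_i=\beta_i=\operatorname{sign}\lambda$. For the anticommutation pattern, the only non-$Z$ terms are the $X_{q_i}$. In both $H_A$ and $H_B$, $X_{q_i}$ anticommutes exactly with $Z_{q_i}$, $Z_{c_i}Z_{q_i}$ (respectively $Z_R(f_i)Z_{q_i}$), and $Z_{q_{i-1}}Z_{q_i}$, $Z_{q_i}Z_{q_{i+1}}$. All other pairs commute, since the $c$ and $R$ factors are pure $Z$. So the symplectic pattern is identical.

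\textbf{Relation codes.} Write $z=(x,z^q,z^c,z^{cq},z^{zz})$. The $X$ part of $Bz$ forces $x=0$. The $R$ part gives $F(z^c+z^{cq})=0$. The $q$ part gives $z^q_i+z^{cq}_i+z^{zz}_{i-1}+z^{zz}_i=0$. In $H_A$ the $c$ part instead gives $z^c=z^{cq}$. Hence $K_A$ consists of the vectors with $x=0$, $z^c=z^{cq}=u$ and $z^q_i=u_i+z^{zz}_{i-1}+z^{zz}_i$. These are generated by weight-3 relations $\{Z_{c_i},Z_{q_i},Z_{c_i}Z_{q_i}\}$ and weight-3 relations $\{Z_{q_i},Z_{q_{i+1}},Z_{q_i}Z_{q_{i+1}}\}$. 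In $K_B$ the condition is relaxed to $z^c+z^{cq}\in\ker F$, so $K_A\subseteq K_B$.

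The relation phases agree. In both Hamiltonians every relation is a product of commuting $Z$-type operators that multiplies to exactly $+I$, and the signs are matched, so the phases coincide. Any $c\in K_B\setminus K_A$ has $w:=z^c+z^{cq}\in\ker F\setminus\{0\}$. Each coordinate with $w_i=1$ contributes at least one nonzero entry among $z^c_i,z^{cq}_i$, so $|c|\ge|w|\ge d_F$.

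For the matching upper bound, take a minimum-weight codeword $w$ of $\ker F$. Set $z^c=w$, $z^{cq}=0$, $z^{zz}=0$ and $z^q=0$. This satisfies the $q$ constraint, since $z^{cq}=0$ and $z^{zz}=0$. It lies in $K_B\setminus K_A$ and has weight $d_F$. Hence $d_{\rm rel}=d_F\ge\delta N=\Theta(m)$. For $d_{\rm ord}=3$: weight-3 relations exist, and there is no weight 1 or 2 relation, because all terms are distinct non-identity Paulis and the columns $f_i$ are nonzero and distinct. That last fact follows from $d_F\ge 3$, which holds for large $N$.

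\textbf{Relative decoder.} Given the label $Bz$ with $|z|\le D$, we must output $Az$. From $Bz$ we read off $x$, the $q$ part, and the syndrome $s=F(z^c+z^{cq})$. The vector $Az$ differs from the data in $Bz$ only in the $c$ register, which equals $z^c+z^{cq}=:e$, with $|e|\le|z|\le D$. So we run the syndrome decoder of $F$ on $s$, obtaining $e$ whenever $D\le\lfloor\tau N\rfloor$. Uniqueness holds since $2D<d_F$. The rest of $Az$ is copied, so $Az$ is a reversible function of $Bz$. Making the classical decoder reversible and coherent gives the circuit, so $D_{\rm dec}=\lfloor\tau N\rfloor$.

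\textbf{Temperature and structure.} Take $\Lambda=\sum|\lambda|\le 5NJ_0$. The degree condition $\tau N\ge\beta\Lambda+O(\sqrt{\beta\Lambda\ln(1/\eta)}+\ln(1/\eta))$ holds for $\beta<\beta_0=c\,\tau/J_0$, with $\eta=1/\mathrm{poly}$ and $N$ large. The source TFD is prepared by~\cite{BergamaschiChen1D} in polynomial time at fixed $\beta$, with $\delta_{\rm src}$ inverse-polynomial. Locality claims follow directly: term weight is at most $|f_i|+1$, and a qubit's degree is bounded by the row weights of $F$ plus a constant.

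\textbf{Main obstacle.} I expect the delicate step to be verifying that the relation-phase condition and the identification "relative decoding is exactly syndrome decoding" hold in the precise form of the earlier definitions. In particular one must check how $Az$ is encoded when the $N$ central qubits are replaced by $N-k$ syndrome qubits. The registers differ in size, so the label map must be written explicitly.
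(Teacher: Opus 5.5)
Your proposal is correct and follows essentially the same route as the paper. The shared steps are:
\begin{itemize}
\item the explicit label maps $As=(s_X,0,\xi,e)$ and $Bs=(s_X,0,\xi,Fe)$ with $e=s_c+s_{cq}$;
\item the bound $|c|\ge|e|$ together with the witness $(0,0,0,e,0)$, giving $d_{A\to B}=d_F$;
\item the relative decoder $(x,0,z,y)\mapsto(x,0,z,\operatorname{Dec}_F(y))$;
\item the Taylor--Poisson degree count with $\Lambda\le 5NJ_0$, which in fact pins $\beta_0=\tau/(5J_0)$.
\end{itemize}
The only cosmetic difference is how you check conditions (a)--(c). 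You compute the kernels and phases directly, whereas the paper argues that the substitution $Z_{c_i}\mapsto Z_R(f_i)$ touches only commuting $Z$ factors and therefore preserves all ordered products and signed relations.
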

To our knowledge this yields a
new class of noncommuting Hamiltonians whose TFD, and hence Gibbs state, is
efficiently preparable at constant inverse temperature.
Physically, $H_A$ is a mixed-field Ising chain on the spins $q_i$, with
transverse field $\lambda_i^x$, longitudinal field $\lambda_i^z$ and coupling
$\lambda_i^{zz}$, in which each longitudinal field is shifted by
$\lambda_i^{cq}$ times a classical spin $Z_{c_i}$. In $H_B$ that classical
spin is replaced by a parity $Z_R(f_i)$ of the syndrome spins: the quantum
sector is unchanged and remains one-dimensional, while the classical sector
becomes a parity-check spin system whose checks drive the longitudinal
fields of the chain.

Relative decoding is what makes this work. Every cell carries the
weight-three relation $Z_{q_i}\cdot Z_{c_i}\cdot Z_{c_i}Z_{q_i}=I$ and every
bond the relation $Z_{q_i}\cdot Z_{q_{i+1}}\cdot Z_{q_i}Z_{q_{i+1}}=I$, so
$d_{\rm ord}=3$ and the ordinary exact-decoding guarantee for HDQI on $H_B$
reaches only degree one. Also vHDQI is limited to only uniform coefficients. The chain $H_A$ already
contains all of these short relations, and its TFD is efficiently
preparable. The quotient code $K_B/K_A$ then consists exactly of the
codewords of $F$, so $d_{\rm rel}=d_F$, and the coherent relative decoder is
the classical decoder for $F$ run reversibly.

\paragraph{Further results.}
For commuting relation-free sources, the setting of DQI for Max-XOR-SAT, the
polynomial-filtered Choi state is Clifford-equivalent to a matrix-product
state of bond dimension $D+1$
(Theorem~\ref{thm:commuting-relation-free-source}); this handles nonuniform
coefficients directly and recovers, by a shorter argument, an observation
of~\cite{GeneralPauliHDQI}. Section~\ref{sec:properties} collects structural
properties of the construction: for a fixed anticommutation pattern $\Sigma$,
exactly the chains $K_A\subseteq K_B\subseteq\ker\Sigma$ are realizable; the
largest achievable $d_{\rm rel}$ for a fixed source or a fixed target has an
exact coset description; finite $d_{\rm rel}$ rules out termwise unitary
conjugation; a minimum-weight ordinary decoder gives a valid relative
decoder throughout the relative window; and the kernel dimension and
anticommutation rank bound the number of interaction terms in reduced
Pauli families. Section~\ref{sec:discussion} identifies sampling separations
and comparisons with target dynamics as further applications to pursue.

\paragraph{Organization.}
Section~\ref{sec:related-work} reviews prior work, including the precise
relation to the companion paper. Section~\ref{sec:quotient-decoding-main}
defines quotient codes and relative decoding, and Section~\ref{sec:framework}
presents the transduction framework and circuit.
Section~\ref{sec:dqi-hdqi-special} recovers DQI and HDQI as relation-free
special cases, and Section~\ref{sec:tfd-main} proves the thermal guarantees.
Sections~\ref{sec:examples1} and~\ref{sec:examples2} treat commuting
relation-free sources and the one-dimensional construction.
Section~\ref{sec:properties} presents structural properties, and
Section~\ref{sec:discussion} discusses open directions. Proofs are in the
appendices.

\section{Relation to previous work}
\label{sec:related-work}
Quantum Gibbs sampling has received considerable attention over the recent years, as we will see below. In this section, we will both collect constructions of TFD states that can be used as input to our algorithm, complementary methods for Gibbs sampling and other DQI-related results.
\paragraph{Thermofield-double preparation.}
Our framework takes an efficiently prepared canonical TFD as input.
Preparation methods include energy filtering and imaginary-time
evolution~\cite{PoulinWocjan2009,ChowdhurySomma2017,AnChildsLin2026Nonunitary},
quantum simulated annealing based on a quantum--quantum Metropolis
walk~\cite{YungAspuruGuzik2012QuantumMetropolis}, and Szegedy quantization
of quantum maps satisfying GNS detailed balance~\cite{WocjanTemme2023Szegedy}.
Parent-Hamiltonian approaches include exact TFD constructions and their
approximate realizations~\cite{CottrellEtAl2019}, as well as parents derived
from detailed-balance dynamics~\cite{ChenEtAlNature2025,ChenKastoryanoGilyen}.
Ge, Moln{\'a}r, and Cirac give adiabatic preparation of thermal purifications
for commuting lattice Hamiltonians with near-linear gate count and
polylogarithmic depth under a uniform path-gap assumption, together with a
polynomial-time extension to noncommuting systems at high
temperature~\cite{GeMolnarCirac2016Adiabatic}.
Recent developments include detectability-lemma annealing for commuting
Hamiltonians~\cite{FangLuTongZhao2026Detectability} and walk-free filtering
with inverse-square-root gap dependence from constant-overlap warm starts, including
explicit noncommuting examples~\cite{LengJiangLin2026WalkFree}.
Other constructions use weakly interacting fermionic
parents~\cite{SmidMeisterBertaBondesan2025FermiHubbard}, modular
annihilators~\cite{YiTakahashiZhou2026ModularAnnihilator}, or double-bracket
imaginary-time evolution~\cite{Wright2026DoubleBracketTFD}.
Their costs are controlled by spectral gaps, coherent access, initial
overlap, or normalization. Another approach is through fluctuation-theorem algorithms that transform a
simpler thermal purification into the desired TFD~\cite{HolmesEtAl2022Fluctuation}.

Rigorous efficient canonical-TFD regimes include high-temperature local
Hamiltonians~\cite{RouzeFrancaAlhambra} and finite-range spin chains at
every fixed finite inverse temperature~\cite{BergamaschiChen1D}.
We use the latter's canonical-purification corollary, which gives
polylogarithmic depth and near-linear size with temperature-dependent
constants. 

\paragraph{DQI and its extensions.}
DQI combines Fourier interference with coherent decoding~\cite{JordanEtAlDQI},
following a mechanism related to Regev's reduction~\cite{RegevLWE}; HDQI
extends it to noncommuting Pauli Hamiltonians~\cite{HamiltonianDQI}.
Extensions address arbitrary coefficients~\cite{GeneralPauliHDQI},
multivariate weighted constraints~\cite{BuGuLi2026MultivariateDQI}, and
structured matrix-product pilot states~\cite{WocjanPilot}.
Further work treats algebraic-geometry codes~\cite{GuJordan2025AlgebraicGeometry,MatthewsShapiro2026COFI}
and non-Hamming metrics~\cite{KrajenbrinkEtAl2026BeyondHamming}. Decoding-based approaches to Gibbs preparation include Refs~\cite{HastingsSchmidhuber2026RedundancyFree,PiveteauQuekFrancoGarrido2026QuantumDecoding,ZhouEtAl2026ShatteredGibbs}. To the best of our knowledge, our work and Ref.~\cite{ShangFranca2026Transduction} are the first to exploit relative decoding in the context of DQI and can directly combine existing TFD preparation methods. Several works also have studied the limitations of DQI-like approaches to prepare interesting states or states that achieve a quantum advantage. Simulation complexity, conditional spin-glass obstructions, and numerical
MCMC benchmarks are studied in
Refs.~\cite{MarwahaDQIComplexity,AnschuetzEtAl2025SpinGlass,GilFusterEtAl2026MCMCDQI}. Other advantages hold in membership-oracle or streaming-space
models~\cite{KramerEtAl2026OracleDQI,WuYang2026StreamingDQI,BoulandDingJain2026RandomStreams}. To the best of our knowledge, the examples covered here are not readily covered by any of these works.

\paragraph{Relation to the companion framework.}
Shang and Stilck Fran{\c{c}}a~\cite{ShangFranca2026Transduction} introduce
homomorphic polynomial transduction and the transfer of a supplied source
TFD through relative decoding for general operator systems. We specialize
the construction to Pauli Hamiltonians and give an explicit noncommuting
chain application. A classical syndrome decoder and an efficiently
prepared source TFD yield geometrically nonlocal targets with
$d_{\rm ord}=3$, $d_{\rm rel}=\Theta(m)$, and a constant-inverse-temperature
preparation window. We also give a thorough exploration of various properties of the framework.

\section{Quotient codes and relative decoding}
\label{sec:quotient-decoding-main}

We start by recalling some basic notions of quotient and relative codes~\cite{HuffmanPless2003}.
The decoding layer can be isolated from the Hamiltonian realization.  Let
\begin{equation}
 A\in\F_2^{r_A\times m},\qquad B\in\F_2^{r_B\times m},
 \qquad K_A:=\ker A,\quad K_B:=\ker B,
 \label{eq:main-code-matrices}
\end{equation}
The label $u=Az$ with $z\in\F_2^m$ identifies the coset $z+K_A$.  If $K_A\subseteq K_B$, then $Bz$ depends only on this coset and $B$ descends to
\begin{equation}
 \overline B:\F_2^m/K_A\to\im B,
 \qquad \overline B(z+K_A)=Bz,
 \qquad \ker\overline B=K_B/K_A.
 \label{eq:main-quotient-map}
\end{equation}
Equivalently, under $\F_2^m/K_A\cong\im A$, the induced map is
\begin{equation}
 M:\im A\to\im B,\qquad M(Az)=Bz.
 \label{eq:main-induced-map}
\end{equation}
Thus $K_B/K_A$ is exactly the ambiguity that remains after source relations have been identified.

For a nonnegative integer $D$ and $u\in\im A$, define
\begin{equation}
 w_A(u):=\min_{z:Az=u}|z|,
 \qquad
 \mathcal S_D(A):=\{u:w_A(u)\le D\},
 \qquad
 d_{A\to B}:=\min_{\substack{0\ne u\in\ker M}}w_A(u)
              =\min_{c\in K_B\setminus K_A}|c|.
 \label{eq:main-source-weight-support}
\end{equation}
We set $d_{A\to B}=+\infty$ when $K_A=K_B$.  The fundamental code fact is
\begin{equation}
 M|_{\mathcal S_D(A)}\text{ is injective}
 \quad\Longleftrightarrow\quad
 2D<d_{A\to B}.
 \label{eq:main-relative-injectivity}
\end{equation}

\paragraph{Standard versus quotient-aware decoding.}
An ordinary decoder, on the promise $|z|\le D$, recovers the full representative
\begin{equation}
 \operatorname{Dec}_B(Bz)=z,
\end{equation}
whereas relative decoding only requires
\begin{equation}
 \operatorname{Dec}_{A\to B}(Bz)=Az.
 \label{eq:main-relative-decoder-goal}
\end{equation}
Let $d_B:=\min_{0\ne c\in K_B}|c|$, with $d_B=+\infty$ when $K_B=0$, be the ordinary target distance. Then $d_{A\to B}\ge d_B$. When $d_B<\infty$, the inequality is strict exactly when every minimum-weight target relation already lies in $K_A$.  Appendix~\ref{app:relative} proves these statements.

Distance guarantees that the output $Az$ is unique, not that it is easy to compute.  We call the relative decoder efficient to degree $D$ when a polynomial-size quantum circuit implements $\ket y\ket0\mapsto\ket y\ket{Az}$ on the promise $y=Bz$ and $w_A(Az)\le D$.  The interesting regime here is
\begin{equation}
 d_B\le 2D<d_{A\to B}.
 \label{eq:main-relative-only-window}
\end{equation}

\section{Transduction framework and circuit}
\label{sec:framework}
\label{sec:main-results}

\noindent\textbf{Setup.}
\begin{definition}[Pauli label and Pauli operator]\label{pauli}
The binary label $g=(x,z)\in \F_2^{2n}$ represents the $n$-qubit Pauli operator
\begin{equation}
P_n(g)=P_n(x,z):=\bigotimes_{j=1}^n i^{x_jz_j}X_j^{x_j}Z_j^{z_j}.
\label{eq:pauli-representative}
\end{equation}
\end{definition}
\noindent Under this definition, for $n=1$, we have $P_1(1,0)=X$, $P_1(0,1)=Z$, and $P_1(1,1)=Y$. 

Write the source and target Hamiltonians as
\begin{equation}
H_A=\sum_{i=1}^m \lambda_i\alpha_i P_{A}(a_i),\qquad
H_B=\sum_{i=1}^m \lambda_i\beta_i P_{B}(b_i),
\label{eq:hamiltonians}
\end{equation}
where $P_A=P_{n_A}$, $P_B=P_{n_B}$, $\lambda_i>0$, $\alpha_i,\beta_i\in\{\pm1\}$, $a_i\in\F_2^{2n_A}$, and $b_i\in\F_2^{2n_B}$. We call $H_A$ the source Hamiltonian and $H_B$ the target Hamiltonian. The source and the target can act on different numbers of qubits (i.e. $n_A\neq n_B$).

Our goal is to prepare the polynomial-filtered Choi state of $H_B$ from the polynomial-filtered Choi state of $H_A$ with the definition as follows.
\begin{definition}[Polynomial-filtered Choi state ]\label{pfcs}
Given a $D$-degree polynomial 
\begin{equation}
Q_D(x)=\sum_{\ell=0}^D q_\ell x^\ell,\qquad q_\ell\in\mathbb R,
\end{equation}
Assuming $Q_D(H_L)\ne0$, the normalized polynomial-filtered Choi state of $H_L$ with $L\in\{A,B\}$ is
\begin{equation}
\ket{\Psi_L^Q}:=\frac{(Q_D(H_L)\otimes I)\ket{\Phi_{n_L}}}
{\norm{(Q_D(H_L)\otimes I)\ket{\Phi_{n_L}}}},\label{eq:filtered-choi-def}
\end{equation}
where
\begin{equation}
\ket{\Phi_n}=2^{-n/2}\sum_{x\in\F_2^n}\ket{x}\ket{x}.
\end{equation}
is the normalized maximally entangled state on $n$ physical and $n$ reference qubits.
\end{definition}
The polynomial-filtered Choi state is a purification of the polynomial-filtered Hamiltonian density matrix
\begin{equation}
\rho_L^Q:=\frac{Q_D(H_L)^2}{\Tr[Q_D(H_L)^2]}.
\label{eq:filtered-density}
\end{equation}
Choosing the polynomial allows approximate optimization and Gibbs-state preparation, the latter being the main focus here. 

\noindent\textit{\textbf{Transduction conditions.—}}$H_A$ and $H_B$ require satisfying several conditions to enable the transduction. To bring out these conditions, we need to define several notations. We define
\begin{equation}
A:=[a_1,\ldots,a_m],\qquad B:=[b_1,\ldots,b_m],
\label{eq:AB-def}
\end{equation}
as the Hamiltonian matrices of $H_A$ and $H_B$ respectively. We also use the shorthand $\ell_i^A=a_i$ and $\ell_i^B=b_i$. Let
\begin{equation}
J_n=\begin{pmatrix}0&I_n\\ I_n&0\end{pmatrix}.
\label{eq:symplectic-form}
\end{equation}
For $L\in\{A,B\}$, define the binary anticommutation matrix
\begin{equation}
(\Sigma_L)_{ij}:=(\ell_i^L)^TJ_{n_L}\ell_j^L\in\F_2.
\label{eq:Sigma-def}
\end{equation}
$\Sigma$ can be understood as the adjacency matrix of the anti-commutation graph of the Hamiltonian. $m$ Pauli terms in the Hamiltonian lead to an $m$-vertex graph, and two vertices are connected by an edge only when the two corresponding Paulis are anti-commuting.

We can view $A$ and $B$ as the parity check matrices of linear codes. Then, by defining the kernels of $A$ and $B$ as
\begin{equation}
K_A:=\ker A,\qquad K_B:=\ker B,
\label{eq:relation-codes-def}
\end{equation}
the distance of the code related to $A$ and $B$ is 
\begin{equation}
d_A:=\min_{0\ne c\in K_A}|c|,\qquad d_B:=\min_{0\ne c\in K_B}|c|.
\end{equation}
The mapping 
\begin{equation}
z\rightarrow Lz
\end{equation}
is injective on $\{z:|z|\le D\}$ whenever $2D<d_L$, for $L=A,B$.
An empty minimum is $+\infty$; for finite $d_L$ the largest such integer radius is $\floor{(d_L-1)/2}$.

When $K_A\subseteq K_B$, the induced quotient map has kernel $K_B/K_A$. Its nonzero classes are represented by the residual relations in $K_B\setminus K_A$. The relative distance is
\begin{equation}
d_{A\to B}=\min_{c\in K_B\setminus K_A}|c|.
\end{equation}
The mapping 
\begin{equation}
Az\rightarrow Bz
\end{equation}
is injective on $\mathcal S_D(A)$ exactly when $2D<d_{A\to B}$, as proved in Proposition~\ref{thm:injectivity-radius}. 

With $K_A$ and $K_B$, we can also define the ordered signed-relation products
\begin{equation}
R_A(s):=\prod_{i=1}^m(\alpha_iP_A(a_i))^{s_i}=\zeta_A(s)I,~s\in K_A.
\label{eq:idea-zeta}
\end{equation}
\begin{equation}
R_B(s):=\prod_{i=1}^m(\beta_iP_B(b_i))^{s_i}=\zeta_B(s)I,~s\in K_B.
\end{equation}

Products of at most $D$ indexed source terms contain every Pauli label that can occur in $Q_D(H_A)$. Since $P_n(g)P_n(h)=i^{\mu_n(g,h)}P_n(g+h)$, the resulting support envelope is $\mathcal S_D(A)$
\begin{equation}
\mathcal S_D(A):=\{Az: |z|\le D\}\subseteq \im A,
\end{equation}
and we have
\begin{equation}
Q_D(H_A)=\sum_{u\in\mathcal S_D(A)}C_A(u)P_{A}(u).
\end{equation}
Define $|P_n(g)\rangle:=(P_n(g)\otimes I)|\Phi_n\rangle$. These canonical Pauli Choi vectors are orthonormal. With
\begin{equation}
 N_L^2:=2^{-n_L}\Tr(Q_D(H_L)^2),\qquad L\in\{A,B\},
\end{equation}
we have
\begin{equation}
\ket{\Psi_A^Q}=\frac{1}{N_A}\sum_{u\in\mathcal S_D(A)}C_A(u)|P_{A}(u)\rangle.
\end{equation}

To enable the transduction, we want $Q_D(H_B)$ has the form
\begin{equation}\label{hbform}
Q_D(H_B)=\sum_{u\in\mathcal S_D(A)}\chi(u)C_A(u)P_{B}(M u),
\end{equation}
where 
\begin{equation}
M:\im A\to\im B,\qquad M(Az)=Bz.
\end{equation}
Thus, our goal, the state $\ket{\Psi_B^Q}$, has the form
\begin{equation}
\ket{\Psi_B^Q}=\frac{1}{N_A}\sum_{u\in\mathcal S_D(A)}\chi(u)C_A(u)|P_{B}(Mu)\rangle.
\end{equation}

Sufficient conditions for Equation~\eqref{hbform} are summarized in Theorem~\ref{thm:algebraic-main}. The proof combines Proposition~\ref{thm:injectivity-radius} with Proposition~\ref{prop:coefficient-matching}. In \cite{ShangFranca2026Transduction}, we also provide an algebraic homomorphism view of Theorem~\ref{thm:algebraic-main}.
\begin{theorem}[Algebraic transduction and residual distance]
\label{thm:algebraic-main}
Let $Q_D$ be a real polynomial of degree at most $D$.  Suppose:
\begin{enumerate}[label=(\alph*),leftmargin=2em]
    \item Same anti-commutation graph $\Sigma_A=\Sigma_B$;
    \item Kernel inclusion $K_A\subseteq K_B$;
    \item Signed-relation compatibility $\zeta_A(s)=\zeta_B(s)
        \qquad\text{for every }s\in K_A.$
    \item Polynomial degree $2D<d_{A\to B}$.
\end{enumerate}
Then, we have
\begin{equation}
Q_D(H_A)=\sum_{u\in\mathcal S_D(A)}C_A(u)P_{A}( u)\qquad\text{implies}\qquad Q_D(H_B)=\sum_{u\in\mathcal S_D(A)}\chi(u)C_A(u)P_{B}(M u).
\end{equation}
If $Q_D(H_A)\ne0$, then $N_A=N_B>0$, so this also yields the normalized state identity above.
\end{theorem}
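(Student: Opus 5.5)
We will expand both filters termwise, $Q_D(H_L)=\sum_{\ell\le D}q_\ell H_L^\ell$, and write $H_L^\ell$ as a sum over ordered words $(i_1,\dots,i_\ell)\in[m]^\ell$ of products $\prod_k \lambda_{i_k}\sigma^L_{i_k}P_L(\ell^L_{i_k})$, with $\sigma^A=\alpha$ and $\sigma^B=\beta$. The scalar weight $\prod_k\lambda_{i_k}$ is common to both Hamiltonians, so the whole comparison reduces to one statement about words. For each word $w$ with multiplicity vector $z(w)\in\F_2^m$ (the parity of occurrences) we need
\[
\prod_k \beta_{i_k}P_B(b_{i_k})=\chi\bigl(Az(w)\bigr)\,\gamma_A(w)\,P_B\bigl(Bz(w)\bigr),
\]
where $\prod_k\alpha_{i_k}P_A(a_{i_k})=\gamma_A(w)P_A(Az(w))$. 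Here $\chi$ depends only on the label $u=Az(w)$, not on the word. Summing over words with fixed $u$ then gives $C_B$ on label $Mu$ equal to $\chi(u)C_A(u)$. Condition (d) and \eqref{eq:main-relative-injectivity} make $u\mapsto Mu$ injective on $\mathcal S_D(A)$, so distinct source labels never collide in the target, which is exactly the claimed form.

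To build $\chi$, first reorder each word into canonical increasing order. Commuting adjacent factors past each other produces the sign $(-1)^{\Sigma_{ij}}$, identical for $A$ and $B$ by (a). Squares $P(\ell_i)^2=I$ and $\alpha_i^2=\beta_i^2=1$ cancel in both. Hence both word products equal a common sign $\epsilon(w)$ times the ordered products $\prod_{i}(\alpha_iP_A(a_i))^{z_i}$ and $\prod_i(\beta_iP_B(b_i))^{z_i}$ with $z=z(w)$. It remains to define $\chi$ on ordered products indexed by $z$ so that it depends only on $z+K_A$.

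For each $u\in\mathcal S_D(A)$, fix a representative $z_u$ and define $\chi(u)$ as the phase relating $\prod_i(\beta_iP_B(b_i))^{z_{u,i}}$ to $P_B(Mu)$, divided by the phase relating the source product to $P_A(u)$. The main obstacle is showing that another representative $z=z_u+s$ with $s\in K_A$ gives the same $\chi$. The ordered product for $z_u+s$ equals, up to a reordering sign computable from $\Sigma$ alone (via the standard symplectic cocycle identity $\prod(z)\prod(s)=\pm\prod(z+s)$ with the sign a bilinear expression in $\Sigma$), the product of the ordered products for $z_u$ and for $s$. The $s$ factor is $\zeta_A(s)I$ on the source and $\zeta_B(s)I$ on the target; these are well defined because $s\in K_B$ by (b), and equal by (c). The cocycle sign agrees by (a). So $\chi$ is well defined.

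I expect the cocycle identity to be the main point to get right. I would prove it by induction on moving the factors of $s$ into position, each transposition contributing $(-1)^{\Sigma_{ij}}$ and each square contributing $+1$.

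Finally, for normalization: $\chi(u)$ is a phase, and the Pauli Choi vectors $|P_B(Mu)\rangle$ are orthonormal with $Mu$ distinct. Therefore $N_B^2=\sum_u|\chi(u)C_A(u)|^2=\sum_u|C_A(u)|^2=N_A^2$, which is positive when $Q_D(H_A)\ne0$. Hermiticity of $Q_D(H_B)$ is not needed for this step.
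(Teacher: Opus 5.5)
Your proposal is correct and follows essentially the same route as the paper. The paper also expands over words with a common $\Sigma$-determined reordering sign, defines $\chi(Az)$ as the target/source phase ratio of the ordered signed products, and proves it well defined on cosets $z+K_A$ via the canonical-product cocycle identity together with $\zeta_A(s)=\zeta_B(s)$; it then uses injectivity of $M$ on $\mathcal S_D(A)$ (from $2D<d_{A\to B}$) to match the word sets label by label, and orthonormality of the Pauli Choi vectors to get $N_A=N_B$.
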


\noindent\textit{\textbf{Efficient verification.---}}
The algebraic conditions (a)--(c) of
Theorem~\ref{thm:algebraic-main} require checking equality of the
anticommutation matrices, inclusion of the source relation kernel in
the target kernel, and agreement of the signed relation phases.
For explicitly supplied Pauli labels and signs, these checks can be
performed in polynomial time using binary linear algebra and Pauli
phase arithmetic. In particular, kernel inclusion and signed-phase
compatibility need only be checked on a basis of $K_A$, rather than
on every source relation. The degree condition (d), efficient relative
decoding, and efficient source-state preparation remain separate
requirements.

\begin{figure}[t]
\centering
\[
\Qcircuit @C=0.92em @R=1.1em {
\lstick{S:\ \ket{\Psi_A^Q}} & \gate{U_{\Bell,A}} & \multigate{1}{U_M} & \gate{U_\chi} & \qw & \targ & \qw & \qw & \qw & \rstick{\ket{0}} \qw \\
\lstick{T:\ \ket{0}} & \qw & \ghost{U_M} & \qw & \multigate{1}{U_{\rm dec}} & \qw & \multigate{1}{U_{\rm dec}^\dagger} & \gate{U_{\Bell,B}^\dagger} & \qw & \rstick{\ket{\Psi_B^Q}} \qw \\
\lstick{R:\ \ket{0}} & \qw & \qw & \qw & \ghost{U_{\rm dec}} & \ctrl{-2} & \ghost{U_{\rm dec}^\dagger} & \qw & \qw & \rstick{\ket{0}} \qw
}
\]

\caption{The transduction workflow.  The circuit starts from the source polynomial-filtered Choi state $\ket{\Psi_A^Q}$. The first gate $U_{\Bell,A}$ converts it into the source Bell-label state, where the relative-decoding operations $U_M$, $U_\chi$, and $U_{\rm dec}$ are performed.  After the source and recovery registers are uncomputed, the target label state is converted back to the target Choi state by $U_{\Bell,B}^\dagger$. }
\label{fig:transduction-circuit}
\end{figure}

\noindent\textbf{Circuit.}
Given the source polynomial-filtered Choi state, the circuit in
Figure~\ref{fig:transduction-circuit} prepares the corresponding target
state. The source Bell transform acts as
\begin{equation}
U_{\Bell,A}|P_{A}(u)\rangle=|u\rangle,
\end{equation}
which is the tensor product of the following single-pair Bell transform circuit, followed by a fixed regrouping of label bits
\begin{equation}
\Qcircuit @C=1.05em @R=0.9em {
\lstick{} & \ctrl{1} & \gate{H} & \ctrl{1} & \qswap & \rstick{} \qw \\
\lstick{} & \targ & \qw & \gate{S} & \qswap \qwx & \rstick{} \qw
}.
\end{equation}
Applying $U_{\Bell,A}$ gives
\begin{equation}
\ket{\widetilde\Psi_A^Q}_S=\frac{1}{N_A}\sum_{u\in\mathcal S_D(A)}C_A(u)\ket{u}_S,
\end{equation}
where we use $S$ to label the source register.

The unitary $U_M$ computes $Mu$ into the target register:
\begin{equation}
\ket{\widetilde\Psi_A^Q}_S |0\rangle_T\xrightarrow{\ U_M\ }\frac{1}{N_A}\sum_{u\in\mathcal S_D(A)}C_A(u)\ket{u}_S|Mu\rangle_T,
\end{equation}
where we use $T$ to label the target register.

The unitary $U_\chi$ applies the phase correction:
\begin{equation}
\frac{1}{N_A}\sum_{u\in\mathcal S_D(A)}C_A(u)\ket{u}_S|Mu\rangle_T\xrightarrow{\ U_\chi\ }\frac{1}{N_A}\sum_{u\in\mathcal S_D(A)}\chi(u)C_A(u)\ket{u}_S|Mu\rangle_T.
\end{equation}

Relative decoding erases the source label using a clean recovery register $R$. Apply $U_{\rm dec}$ to compute $u$ from $Mu$ into $R$, XOR $R$ into $S$, and apply $U_{\rm dec}^\dagger$ to reset $R$. Writing this composite operation as $U_{\rm erase}$ and suppressing the restored register $R$, the output is
\begin{equation}
\frac{1}{N_A}\sum_{u\in\mathcal S_D(A)}\chi(u)C_A(u)\ket{u}_S|Mu\rangle_T\xrightarrow{\ U_{\rm erase}\ }|0\rangle_S\frac{1}{N_A}\sum_{u\in\mathcal S_D(A)}\chi(u)C_A(u)|Mu\rangle_T.
\end{equation}

The target register is now in the Bell-label state
\begin{equation}
\ket{\widetilde\Psi_B^Q}_T=U_{\Bell,B}\ket{\Psi_B^Q}_T,
\end{equation}
Applying $U_{\Bell,B}^\dag$ yields $\ket{\Psi_B^Q}_T$.
Appendix~\ref{subsec:circuit-workflow} gives the full circuit analysis.

The circuit has the following efficiency guarantee.
\begin{theorem}[Efficient polynomial-filter transduction]
\label{thm:efficient-poly}
Assume the hypotheses of Theorem~\ref{thm:algebraic-main} and $Q_D(H_A)\ne0$. Suppose further that:
\begin{enumerate}[label=(\alph*),leftmargin=2em]
    \item the source polynomial-filtered Choi state $\ket{\Psi_A^Q}$ is efficiently preparable;
    \item there is an efficient coherent relative decoder
    \begin{equation}
        U_{\rm dec}:\ket{Mu}\ket{0}\longmapsto \ket{Mu}\ket{u},
        \qquad u\in\mathcal S_D(A).
    \end{equation}
\end{enumerate}
Then an efficient quantum circuit prepares $\ket{\Psi_B^Q}$. 
\end{theorem}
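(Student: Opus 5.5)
The plan is to verify that each gate in the circuit of Figure~\ref{fig:transduction-circuit} is implementable by a polynomial-size circuit and acts as claimed on the relevant support. Correctness of the final state then follows from Theorem~\ref{thm:algebraic-main}. That theorem gives $Q_D(H_B)=\sum_{u\in\mathcal S_D(A)}\chi(u)C_A(u)P_B(Mu)$ together with $N_A=N_B$. Combined with orthonormality of the Pauli Choi vectors $|P_B(g)\rangle$ and the injectivity of $M$ on $\mathcal S_D(A)$ (Proposition~\ref{thm:injectivity-radius}, using hypothesis (d)), the target Bell-label state $\frac1{N_A}\sum_u\chi(u)C_A(u)|Mu\rangle$ is normalized and equals $U_{\Bell,B}\ket{\Psi_B^Q}$.

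The gates are handled in the following order.
\begin{enumerate}
\item \emph{Source preparation.} The input $\ket{\Psi_A^Q}$ is efficiently preparable by assumption (a).
\item \emph{Bell transforms.} $U_{\Bell,A}$ and $U_{\Bell,B}^\dagger$ are tensor products of $n_A$ (resp.\ $n_B$) constant-size two-qubit Clifford circuits followed by a fixed permutation of wires. Using the phase convention $i^{x_jz_j}$ of Definition~\ref{pauli}, they map $|P_n(g)\rangle\mapsto|g\rangle$ exactly. This is a direct single-qubit check on the four Bell states, which then extends by tensor product.
\item \emph{The label map $U_M$.} $M$ is linear on $\im A$, since $M(Az)=Bz$. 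I would extend it to a linear map $\widetilde M$ on all of $\F_2^{2n_A}$. Concretely, choose a basis $Az_1,\dots,Az_r$ of $\im A$ by Gaussian elimination, send $Az_j\mapsto Bz_j$, and send a complementary basis to $0$. Then $U_M:\ket u\ket t\mapsto\ket u\ket{t+\widetilde Mu}$ is a CNOT network of size $O(n_An_B)$.
\item \emph{The erasure step.} Efficiency is given directly by assumption (b): $U_{\rm dec}$ is efficient, the XOR of $R$ into $S$ is a layer of CNOTs, and then $U_{\rm dec}^\dagger$ is applied. On each branch $\ket u_S\ket{Mu}_T\ket0_R$ with $u\in\mathcal S_D(A)$, this sequence gives $\ket0_S\ket{Mu}_T\ket0_R$. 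By linearity the superposition is disentangled correctly. No garbage problem arises, because $U_{\rm dec}$ is specified to act cleanly on this promise set.
\end{enumerate}

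The main obstacle is making $U_\chi$ explicit and efficient, since $\chi(u)$ is defined only implicitly through Theorem~\ref{thm:algebraic-main}. My plan is to use the description from Proposition~\ref{prop:coefficient-matching}: $\chi(u)$ is the phase relating $\prod_i(\alpha_iP_A(a_i))^{z_i}$ to $\prod_i(\beta_iP_B(b_i))^{z_i}$ for any representative $z$ with $Az=u$. This phase is independent of the representative by conditions (a)--(c). It can be computed as the ratio of the signs $\pm1,\pm i$ of ordered products, which is determined by $\Sigma$, the signs, and the phase functions $\mu_n$.

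Because the phase is independent of the representative, I would choose a fixed linear right-inverse $z=Gu$ of $A$ on $\im A$, again found by Gaussian elimination. This avoids any need for a low-weight representative. With that choice, $\chi(u)$ becomes a quadratic form in $u$ over $\mathbb Z_4$ (signs times $i$ raised to a quadratic expression), plus a linear sign term. Such a phase is implementable by a Clifford circuit of $S$, $Z$ and $CZ$ gates, of polynomial size. If the quadratic-form claim fails, the fallback is to compute $z=Gu$ reversibly into an ancilla, evaluate the phase bit by bit through polynomial-time Pauli phase arithmetic, kick it back as a phase, and uncompute.

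Composing these polynomial-size pieces prepares $\ket{\Psi_B^Q}$ exactly, which proves the theorem.
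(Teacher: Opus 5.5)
Your proposal follows the paper's proof. It uses the same workflow (Appendix~\ref{subsec:circuit-workflow}) and obtains correctness from Theorem~\ref{thm:algebraic-main}, i.e.\ coefficient matching and $N_A=N_B$. The erasure is the same $U_{\rm dec}$, XOR, $U_{\rm dec}^\dagger$ sequence, and $\chi$ is evaluated on a fixed linear section $z_0=Gu$ of $A$, as in Subsection~\ref{subsec:chi-efficient}. Your explicit linear extension of $M$ to build $U_M$ is a harmless detail the paper leaves implicit.

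One correction: your primary claim about $U_\chi$ is false, so your ``fallback'' is in fact the argument, and it is exactly the paper's procedure. Under the Hermitian labels of Definition~\ref{pauli}, $\chi(Gu)$ need not be a $\mathbb Z_4$-quadratic phase realizable with $S$, $Z$, $CZ$.

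\emph{Counterexample.} Take $H_A=X_1+Z_1+X_2+Z_2$ and its termwise CNOT conjugate $H_B=X_1X_2+Z_1+X_2+Z_1Z_2$. Then $\Sigma_A=\Sigma_B$ and $K_A=K_B=0$. Write $u=(x_1,x_2\,|\,z_1,z_2)$. The canonical products give $\nu_A=i^{-(x_1z_1+x_2z_2)}$ and $\nu_B=i^{-[x_1(z_1\oplus z_2)+(x_1\oplus x_2)z_2]}$. Hence $\chi(u)=(-1)^{x_1z_2+x_1z_1z_2+x_1x_2z_2}$, which contains cubic terms.

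\emph{Bell transform.} Likewise, the pair map $\ket{P_1(x,z)}\mapsto\ket{x,z}$ is not Clifford. It needs the controlled-$S$ in the paper's circuit.

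Neither slip affects efficiency. Reversible Pauli-phase arithmetic (running label, mod-$4$ accumulator, parity $r\cdot z_0$) followed by phase kickback costs $O(m(n_A+n_B))$.

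For what it is worth, your two claims become jointly true if you use the Clifford convention $(X^xZ^z\otimes I)\ket{\Phi_n}\mapsto\ket{x,z}$ for both Bell transforms. The required diagonal is then $\chi(u)\,i^{(Mu)_x\cdot(Mu)_z-u_x\cdot u_z}$, with dot products counted as integers mod $4$. This is a genuine diagonal Clifford, because the non-Clifford part of $\chi$ is exactly this label-convention mismatch.
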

The phase correction $\chi$ is efficiently computable by
Subsection~\ref{subsec:chi-efficient}.

\section{DQI and HDQI as relation-free Hamiltonian transduction}
\label{sec:dqi-hdqi-special}

Ordinary DQI and the original HDQI construction are the relation-free
special cases of our framework.  In both cases, the source relation
kernel is trivial,
\begin{equation}
    K_A=\{0\},
\end{equation}
so the source label is the complete term-selection vector and relative
decoding reduces to ordinary decoding.  The appropriate relation-free
source, however, depends on the target anticommutation matrix.  A
product Hamiltonian suffices for diagonal DQI, whereas a general
noncommuting HDQI target requires a noncommuting relation-free source
with the same anticommutation graph. In \cite{ShangFranca2026Transduction}, we already provide a brief explanation for these results, and here is a more detailed explanation.

\paragraph{Diagonal DQI and Max-XOR-SAT.}
In this paragraph, diagonal Pauli labels are written using only their $Z$ coordinates; their zero $X$ rows are suppressed.
Let
\begin{equation}
    B=[b_1,\ldots,b_m]\in\mathbb F_2^{n\times m},
    \qquad
    v\in\mathbb F_2^m,
\end{equation}
and consider the parity constraints
\begin{equation}
    B^T x=v
    \pmod 2.
\end{equation}
The associated Max-XOR score and diagonal Hamiltonian are
\begin{equation}
\begin{aligned}
    g_{B,v}(x)
    &:=
    \sum_{i=1}^{m}
    (-1)^{v_i+b_i\cdot x},
    \label{eq:main-xor-score}
\end{aligned}
\end{equation}
\begin{equation}
\begin{aligned}
    H_{B,v}
    &:=
    \sum_{i=1}^{m}
    (-1)^{v_i}Z(b_i),
    \label{eq:main-xor-hamiltonian}
\end{aligned}
\end{equation}
where
\begin{equation}
    Z(b_i)
    :=
    \prod_{q:(b_i)_q=1}Z_q.
\end{equation}
Thus
\begin{equation}
    H_{B,v}\ket{x}
    =
    g_{B,v}(x)\ket{x}.
\end{equation}

For a real polynomial \(Q_D\) of degree at most \(D\), with \(Q_D(H_{B,v})\ne0\), the DQI output is the amplitude-encoded state below. Its preparation requires the pilot and a decoder on the degree-$D$ support (in the relation-free construction, $2D<d_B$).
\begin{equation}
    \ket{\mathrm{DQI}_{Q_D}}
    \propto
    \sum_{x\in\mathbb F_2^n}
    Q_D\!\left(g_{B,v}(x)\right)\ket{x},
    \label{eq:main-dqi-state}
\end{equation}
whereas diagonal HDQI retains a coherent reference copy:
\begin{equation}
\begin{aligned}
    \ket{\mathrm{HDQI}_{Q_D}}
    &=\frac{(Q_D(H_{B,v})\otimes I)\ket{\Phi_n}}
      {\norm{(Q_D(H_{B,v})\otimes I)\ket{\Phi_n}}}
      \\
    &\propto\sum_{x\in\mathbb F_2^n}
       Q_D\!\left(g_{B,v}(x)\right)\ket{x}\ket{x}.
    \label{eq:main-diagonal-hdqi-state}
\end{aligned}
\end{equation}
Hence DQI and diagonal HDQI differ only by the coherent copy isometry
\(\ket{x}\mapsto\ket{x}\ket{x}\).

The corresponding relation-free source is the product Hamiltonian
\begin{equation}
    H_{\mathrm{sel}}
    :=
    Z_1+\cdots+Z_m,
    \qquad
    A=I_m,
    \qquad
    K_A=\{0\}.
    \label{eq:main-selection-hamiltonian}
\end{equation}
The usual DQI pilot~\cite{JordanEtAlDQI} can be written as a weighted Dicke superposition
\begin{equation}
    \ket{R_{Q_D}^{\mathrm{DQI}}}
    \propto
    \sum_{k=0}^{D}
    \beta_k
    \sum_{\substack{z\in\mathbb F_2^m\\|z|=k}}
    \ket{z},
    \label{eq:main-dqi-pilot}
\end{equation}
where the coefficients \(\beta_k\) are determined by \(Q_D\).  Appendix~\ref{app:dqi-hdqi-relation-free} proves
\begin{equation}
    U_{\Bell,Z}
    (Q_D(H_{\mathrm{sel}})\otimes I)\ket{\Phi_m}
    \propto
    \ket{R_{Q_D}^{\mathrm{DQI}}}.
    \label{eq:main-dqi-pilot-bell-state}
\end{equation}
Equivalently, if \(V_{\mathrm{copy}}\ket{x}=\ket{x}\ket{x}\), then
\begin{equation}
    V_{\mathrm{copy}}H^{\otimes m}
    \ket{R_{Q_D}^{\mathrm{DQI}}}
    \propto
    (Q_D(H_{\mathrm{sel}})\otimes I)\ket{\Phi_m}.
    \label{eq:main-dqi-hadamard-copy}
\end{equation}
Indeed,
\begin{equation}
    H_{\mathrm{sel}}\ket{x}
    =
    (m-2|x|)\ket{x},
\end{equation}
so the copied post-Hadamard pilot is a polynomial-filtered Choi state
of a product Hamiltonian.

In the Bell-label basis, the source-to-target map is therefore
\begin{equation}
    \ket{z}
    \longmapsto
    (-1)^{v\cdot z}\ket{Bz}.
    \label{eq:main-dqi-transduction}
\end{equation}
Since \(K_A=0\), the decoder must recover the complete vector,
\begin{equation}
    Bz\longmapsto z,
\end{equation}
and
\begin{equation}
    K_B/K_A\cong K_B,
    \qquad
    d_{A\to B}=d_B.
    \label{eq:main-dqi-distance}
\end{equation}
Thus diagonal DQI is precisely the product-source, relation-free
endpoint of the transduction framework.

\paragraph{General noncommuting HDQI.}
The product source in Eq.~\eqref{eq:main-selection-hamiltonian} has
vanishing anticommutation matrix and therefore cannot be paired
directly with a genuinely noncommuting target.  Consider instead
\begin{equation}
    H_B
    =
    \sum_{i=1}^{m}
    \sigma_i P_B(b_i),
    \qquad
    \sigma_i\in\{\pm1\},
    \label{eq:main-general-hdqi-target}
\end{equation}
with label matrix
\begin{equation}
    B=[b_1,\ldots,b_m]
\end{equation}
and anticommutation matrix
\begin{equation}
    \Sigma
    :=
    B^T J_{n_B}B.
\end{equation}
Let \(T\in\mathbb F_2^{m\times m}\) be defined by
\begin{equation}
    T_{ji}
    :=
    \begin{cases}
        \Sigma_{ji}, & j<i,\\
        0,           & j\ge i,
    \end{cases}
    \qquad
    T+T^T=\Sigma.
    \label{eq:main-triangular-sigma}
\end{equation}
On an \(m\)-qubit source register, define
\begin{equation}
    \widehat P_i
    :=
    X_i
    \prod_{j<i}
    Z_j^{\Sigma_{ji}},
    \qquad
    \widehat H_G
    :=
    \sum_{i=1}^{m}\widehat P_i.
    \label{eq:main-universal-source}
\end{equation}
The binary label matrix of this source is
\begin{equation}
    \widehat A
    =
    \begin{pmatrix}
        I_m\\
        T
    \end{pmatrix}.
    \label{eq:main-universal-source-matrix}
\end{equation}
Consequently,
\begin{equation}
    \ker\widehat A=\{0\},
    \qquad
    \widehat A^T J_m\widehat A
    =
    T+T^T
    =
    \Sigma.
    \label{eq:main-universal-source-properties}
\end{equation}
Thus \(\widehat H_G\) is relation-free while reproducing exactly the
pairwise anticommutation pattern of the target. This relation-free Hamiltonian is also observed in \cite{HastingsSchmidhuber2026RedundancyFree} where they further show the hardness of this type of Hamiltonian and the cost of preparing its TFD.

To identify its polynomial state, write
\begin{equation}
    Q_D(x)
    =
    \sum_{k=0}^{D}q_kx^k,
\end{equation}
and let \(\beta_G^{(k)}(y)\) denote the graph-dependent coefficient
obtained by collecting all length-\(k\) words with parity vector
\(y\in\mathbb F_2^m\), including the signs generated by reordering
anticommuting terms.  Define
\begin{equation}
    a_G(y)
    :=
    \sum_{k=0}^{D}
    q_k\beta_G^{(k)}(y).
    \label{eq:main-beta-amplitude}
\end{equation}
The general HDQI pilot~\cite{HamiltonianDQI} is
\begin{equation}
    \ket{\mathcal R_D(G)}
    :=
    \frac{1}{N_G}
    \sum_{y\in\mathbb F_2^m}
    a_G(y)\ket{y}.
    \label{eq:main-general-hdqi-pilot}
\end{equation}
Because \(\widehat A\) is injective, the Choi vectors
\begin{equation}
    (\widehat P^{\,y}\otimes I)\ket{\Phi_m},
    \qquad
    \widehat P^{\,y}
    :=
    \prod_{i=1}^{m}\widehat P_i^{y_i},
\end{equation}
are mutually orthogonal.  A graph-adapted Bell-label transform
\(U_G^{\mathrm{lab}}\) can therefore be chosen so that
\begin{equation}
    U_G^{\mathrm{lab}}
    (\widehat P^{\,y}\otimes I)\ket{\Phi_m}
    =
    \ket{y}\ket{0^m}.
\end{equation}
Appendix~\ref{app:dqi-hdqi-relation-free} gives an efficient construction of this transform and proves, for $Q_D(\widehat H_G)\ne0$, the exact identity
\begin{equation}
    U_G^{\mathrm{lab}}
    \frac{
        (Q_D(\widehat H_G)\otimes I)\ket{\Phi_m}
    }{
        \left\|
        (Q_D(\widehat H_G)\otimes I)\ket{\Phi_m}
        \right\|
    }
    =
    \ket{\mathcal R_D(G)}\ket{0^m}.
    \label{eq:main-hdqi-pilot-source-state}
\end{equation}
Hence the noncommuting HDQI pilot is precisely the Bell-label
polynomial Choi state of the universal relation-free source
\(\widehat H_G\), up to a fixed Bell-basis gauge and a clean ancillary
register.

Since \(K_{\widehat A}=0\), the decoder again recovers the complete
selection vector,
\begin{equation}
    By\longmapsto y,
\end{equation}
and
\begin{equation}
    K_B/K_{\widehat A}\cong K_B,
    \qquad
    d_{\widehat A\to B}=d_B,
    \qquad
    2D<d_B.
    \label{eq:main-general-hdqi-distance}
\end{equation}
Thus the relative distance reduces to the ordinary distance.


\section{Thermofield-double transduction}
\label{sec:tfd-main}

\begin{figure}[t]
\centering
\begingroup
\definecolor{figTwoBlue}{HTML}{DAE8FC}
\definecolor{figTwoBlueEdge}{HTML}{6C8EBF}
\definecolor{figTwoGreen}{HTML}{D5E8D4}
\definecolor{figTwoGreenEdge}{HTML}{82B366}
\definecolor{figTwoGold}{HTML}{FFF2CC}
\definecolor{figTwoGoldEdge}{HTML}{D6B656}
\resizebox{0.94\linewidth}{!}{%
\begin{tikzpicture}[
  x=1cm,y=1cm,
  every node/.style={text=black,font=\sffamily\fontsize{10}{12}\selectfont},
  fig2 text/.style={align=center,inner sep=0pt,outer sep=0pt},
  fig2 state/.style={fig2 text,font=\fontsize{12}{14}\selectfont},
  fig2 heading/.style={fig2 text,font=\sffamily\fontsize{12}{14}\selectfont},
  fig2 arrow/.style={-{Stealth[length=2.2mm,width=1.6mm]},line width=.65pt},
  fig2 leader/.style={line width=.6pt}
]
  \path[use as bounding box] (0,0) rectangle (16.10,10.05);

  \draw[fill=figTwoBlue,draw=figTwoBlueEdge,line width=.7pt,rounded corners=9pt]
    (7.25,6.30) rectangle (15.90,9.45);
  \draw[fill=figTwoGreen,draw=figTwoGreenEdge,line width=.7pt,rounded corners=7pt]
    (7.25,.16) rectangle (15.90,2.23);
  \draw[fill=figTwoGold,draw=figTwoGoldEdge,line width=.7pt,rounded corners=9pt]
    (1.10,7.25) rectangle (3.85,9.45);

  \node[fig2 heading] at (11.575,9.85) {Source};
  \node[fig2 heading] at (11.575,2.64) {Target};

  \node[fig2 state,anchor=west] at (7.55,8.94)
    {$\ket{\TFD_A(\beta)}$};
  \draw[fig2 leader] (8.62,8.57) -- (8.62,8.23) -- (9.18,8.23);
  \node[fig2 text,anchor=west] at (9.32,8.23)
    {Purification of $\rho_{A,\beta}$};
  \node[fig2 text] at (11.575,7.65)
    {Within trace distance $\eta$ of $\lvert\Psi_A^Q\rangle$};
  \node[fig2 text,text width=7.8cm] at (11.575,6.90)
    {\shortstack{Prepared by another method, e.g.\\a purified Lindbladian Gibbs sampler}};

  \node[fig2 state] at (2.475,8.94)
    {$\lvert\widetilde{\TFD}_A\rangle$};
  \node[fig2 text,text width=2.30cm] at (2.475,8.16)
    {Bell-label state};
  \node[fig2 state] at (2.475,7.60)
    {$\approx\lvert\widetilde\Psi_A^Q\rangle$};
  \draw[fig2 arrow] (7.10,8.94) -- (3.86,8.94);
  \node[fig2 text] at (5.48,9.29) {Bell transform};
  \node[fig2 state] at (5.48,8.50) {$U_{\Bell,A}$};

  \draw[fig2 arrow] (1.09,8.94) -- (.30,8.94) -- (.30,1.53) -- (1.69,1.53);
  \node[fig2 state] at (2.63,1.53)
    {$\lvert\widetilde\Omega_D\rangle$};
  \draw[fig2 arrow] (3.87,1.53) -- (7.10,1.53);
  \node[fig2 text] at (5.48,1.88) {Bell transform};
  \node[fig2 state] at (5.48,1.07) {$U_{\Bell,B}^{\dagger}$};

  \node[fig2 text,anchor=west,font=\sffamily\fontsize{9}{11}\selectfont]
    at (1.10,6.77) {Polynomial comparison only ($|z|\le D$)};
  \node[fig2 state,anchor=west] at (1.11,6.19) {$\ket{Az}\ket0$};
  \draw[fig2 arrow] (1.74,5.86) -- (1.74,4.95);
  \node[fig2 text,anchor=west] at (2.06,5.40)
    {\shortstack{Matrix multiplication\\and phase correction}};
  \node[fig2 state,anchor=west] at (.58,4.49)
    {$\chi(Az)\ket{Az}\ket{Bz}$};
  \draw[fig2 arrow] (1.74,4.11) -- (1.74,3.15);
  \node[fig2 text,anchor=west] at (2.06,3.63)
    {\textcolor{red}{\shortstack{Relative decoding of\\$K_B/K_A$}}};
  \node[fig2 state,anchor=west] at (.74,2.72)
    {$\chi(Az)\ket0\ket{Bz}$};

  \node[fig2 text,text width=8.10cm] at (11.575,4.57)
    {Source-to-target Hamiltonian transduction\\
     requires $H_A$ and $H_B$ satisfy\\
     the algebraic compatibility};

  \node[fig2 state,anchor=west] at (7.55,1.53)
    {$\ket{\Omega_D}\approx\ket{\TFD_B(\beta)}$};
  \draw[fig2 leader] (8.62,1.15) -- (8.62,.82) -- (9.18,.82);
  \node[fig2 text,anchor=west] at (9.32,.82)
    {Approximate target TFD};
  \node[fig2 text] at (11.575,.38)
    {Joint trace-distance error $\le2\eta$};
\end{tikzpicture}%
}
\endgroup
\caption[TFD transduction via relative decoding]{TFD transduction via relative decoding. }
\label{fig2}
\end{figure}
The thermal protocol uses the same relative-decoding circuit as polynomial
transduction but the input is the source TFD as shown in Fig.~\ref{fig2}.

\subsection{Available source states}
The source resource is a circuit preparing the canonical physical--reference
state. Coherent Gibbs-state constructions provide such circuits under their
gap and implementation assumptions~\cite{ChenEtAlNature2025,RouzeFrancaAlhambra}.
For bounded-strength, finite-range one-dimensional spin chains,
Corollary~I.1 of~\cite{BergamaschiChen1D} supplies
canonical-TFD preparation at every fixed finite inverse temperature.
Its cost depends on temperature and the requested precision.

\subsection{Canonical input and available decoding degree}
For $L\in\{A,B\}$ and $\beta\ge0$, define
\begin{equation}
 Z_L(\beta):=\Tr(e^{-\beta H_L}),\qquad
 \ket{\TFD_L(\beta)}:=
 \frac{(e^{-\beta H_L/2}\otimes I)\ket{\Phi_{n_L}}}
 {\sqrt{2^{-n_L}Z_L(\beta)}}.
 \label{eq:tfd-main-definition}
\end{equation}
The physical marginal is $e^{-\beta H_L}/Z_L(\beta)$. The algorithm takes
an efficiently prepared approximation to this \emph{canonical purification}
on the specified physical and reference registers.

Assume the algebraic compatibility conditions of
Theorem~\ref{thm:algebraic-main}. Let $D_{\rm dec}$ be a degree for which
an efficient coherent relative decoder is supplied on
$\mathcal S_{D_{\rm dec}}(A)$. When $d_{A\to B}<\infty$, require
\begin{equation}
 0\le D_{\rm dec}\le
 \left\lfloor\frac{d_{A\to B}-1}{2}\right\rfloor.
 \label{eq:tfd-decoding-degree}
\end{equation}

\subsection{Error, degree, and temperature guarantees}
\label{subsec:tfd-error-degree-temperature}

We combine the direct transduction error estimate with the Taylor--Poisson
approximation certificate of \cite{ShangFranca2026Transduction}.
An alternative approximation via the Chebyshev polynomials is given in HDQI~\cite{HamiltonianDQI}

\begin{theorem}[TFD transduction and inverse-temperature window]
\label{thm:tfd-transduction}
Let $\Lambda\ge\|H_A\|$, $\beta\ge 0$, $0<\eta<1$, and
$\ell:=\ln(1/\eta)$. Let $D_{\mathrm{dec}}$ be the efficient relative decoding degree ($2D_{\mathrm{dec}}< d_{A\rightarrow B}$). Suppose the canonical source TFD is efficiently prepared to
trace-distance error $\delta_{\mathrm{src}}$.
If an integer $0\le D\le D_{\mathrm{dec}}$ satisfies
\begin{equation}
\label{eq:tfd-degree-bound}
    D+1
    \ge
    \beta\Lambda
    +\sqrt{2\beta\Lambda\,\ell}
    +\frac{\ell}{3},
\end{equation}
then the transduction circuit prepares the target TFD to trace-distance error at most
\begin{equation}
\label{eq:tfd-error-bound}
    \delta_{\mathrm{src}}+2\eta+\delta_{\mathrm{circ}},
\end{equation}
where $\delta_{\mathrm{circ}}$ is the output implementation
error. The same bound holds for the target Gibbs marginal.

Writing $k:=D_{\mathrm{dec}}+1$, for $\Lambda>0$ and
$k\ge\ell/3$, the degree condition is feasible precisely when
\begin{equation}
\label{eq:tfd-finite-temperature-window}
    0\le\beta\le
    \frac{1}{\Lambda}
    \left(
        \sqrt{k+\frac{\ell}{6}}
        -
        \sqrt{\frac{\ell}{2}}
    \right)^2.
\end{equation}
For total error $0<\epsilon<1$, take $\eta=\frac{\epsilon-\delta_{\mathrm{src}}
              -\delta_{\mathrm{circ}}}{2}>0$ and $\ell=\ln(1/\eta)$. If $D_{\mathrm{dec}}\ge\delta_{\mathrm{dec}}m$,
$\Lambda\le Jm$, and $\ell=o(m)$ for fixed
$\delta_{\mathrm{dec}},J>0$, then every fixed
\begin{equation}
\label{eq:tfd-temperature-window}
    0\le\beta<\frac{\delta_{\mathrm{dec}}}{J}
\end{equation}
is certified for sufficiently large $m$.

\end{theorem}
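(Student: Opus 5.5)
The plan is to reduce the thermal statement to the exact polynomial transduction of Theorem~\ref{thm:algebraic-main} and Theorem~\ref{thm:efficient-poly}, and then to control two approximation errors by the triangle inequality.

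Let $Q_D$ be the degree-$D$ Taylor truncation of $x\mapsto e^{-\beta x/2}$, expanded around a shift so that the spectrum lies in $[-\Lambda,\Lambda]$. Equivalently, write $e^{-\beta H/2}=e^{\beta\Lambda/2}e^{-\beta(H+\Lambda)/2}$ and truncate the exponential series of $e^{-\beta(H+\Lambda)/2}$, where $0\le H+\Lambda\le 2\Lambda$. The tail is a Poisson tail with mean of order $\beta\Lambda$. A Bernstein/Chernoff bound then shows that the relative operator error of the normalized filtered Choi state is at most $\eta$ once
\[
D+1\ge\beta\Lambda+\sqrt{2\beta\Lambda\ell}+\ell/3 .
\]
This is the Taylor--Poisson certificate of \cite{ShangFranca2026Transduction}, which I would invoke directly rather than rederive. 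It gives $\tfrac12\norm{\ket{\TFD_A}\!\bra{\TFD_A}-\ket{\Psi_A^Q}\!\bra{\Psi_A^Q}}_1\le\eta$.

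The key point is that the same polynomial and the same bound work for the target. By Theorem~\ref{thm:algebraic-main}, $N_A=N_B$. Moreover, the matched spectral data make the trace moments $\Tr H_A^k$ and $\Tr H_B^k$ agree for $k\le 2D$, and the certificate is stated only through the Choi-vector $2$-norm of the tail. I would take care here: the certificate should be phrased via the Choi-vector norm $\norm{(e^{-\beta H/2}-Q_D(H))\otimes I\ket{\Phi}}$. That norm is a trace quantity, expressible through $\Tr$ of powers, so it transfers provided the bound uses only $\|H_A\|\le\Lambda$ together with moment matching of $H_B$ to the needed order. If moments beyond $2D$ are needed, I would instead bound the tail termwise by $\sum_{k>D}(\beta\Lambda)^k/k!$ using $\|H_B\|\le\sum\lambda_i$. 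This is the step I expect to be the main obstacle, namely justifying the use of $\Lambda\ge\|H_A\|$ rather than $\|H_B\|$ on the target side. The cleanest route is to show that the unnormalized tails' Choi norms coincide up to degree $2D$, together with a Poisson domination argument.

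Given these two bounds, the circuit maps the input state, which is $\delta_{\rm src}$-close to $\ket{\TFD_A}$ and hence $(\delta_{\rm src}+\eta)$-close to $\ket{\Psi_A^Q}$, through a unitary (up to $\delta_{\rm circ}$) onto a state close to $\ket{\Psi_B^Q}$. Trace distance is contractive under channels, and applying the triangle inequality with the target $\eta$ then gives the bound $\delta_{\rm src}+2\eta+\delta_{\rm circ}$. Partial trace yields the Gibbs marginal bound.

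For the window, set $x=\sqrt{\beta\Lambda}$ and require
\[
x^2+\sqrt{2\ell}\,x+\ell/3\le k .
\]
Completing the square gives $(x+\sqrt{\ell/2})^2\le k+\ell/6$. The condition $k\ge\ell/3$ makes the resulting range of $\beta$ nonempty, and the interval in \eqref{eq:tfd-finite-temperature-window} follows. Finally, with $k\ge\delta_{\rm dec}m$, $\Lambda\le Jm$ and $\ell=o(m)$, the right-hand side equals $(\delta_{\rm dec}/J)(1-o(1))$, so every fixed $\beta<\delta_{\rm dec}/J$ is certified for sufficiently large $m$.
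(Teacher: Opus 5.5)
Your overall structure (certificate on the source side, exact polynomial transduction, certificate on the target side, triangle inequality, then completing the square) is the right one. The gap is the target-side bound: that $\ket{\Psi_B^Q}$ lies within trace distance $\eta$ of $\ket{\TFD_B(\beta)}$ when only $\Lambda\ge\|H_A\|$ is assumed. This is where the second $\eta$ in $\delta_{\rm src}+2\eta+\delta_{\rm circ}$ comes from. You flag this step, but neither route you offer closes it.

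\textbf{Moment matching cannot work.} Compatibility gives $2^{-n_A}\Tr H_A^k=2^{-n_B}\Tr H_B^k$ only for $k<d_{A\to B}$, and only for normalized traces, since $n_A\ne n_B$. The Choi norm of the tail $e^{-\beta H_B/2}-Q_D(H_B)$ and the normalization $2^{-n_B}\Tr e^{-\beta H_B}$ involve moments of every order. Finitely many matched moments do not stop the spectral measure of $H_B$ from putting weight outside $[-\Lambda,\Lambda]$, where $e^{-\beta x}$ is exponentially large and the Taylor polynomial is useless.

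\textbf{The fallback changes the theorem.} Using $\|H_B\|\le\sum_i\lambda_i$ requires $\Lambda\ge\sum_i\lambda_i$, which can be far larger than $\|H_A\|$. This is harmless in the chain example, where $\Lambda=J_0m$, but it is not the statement as given.

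\textbf{The missing idea is spectral inclusion.} This is what the algebraic-homomorphism view of the transduction (cited from the companion work) supplies. Write $\widetilde{\mathcal P}_A(z):=\prod_i(\alpha_iP_A(a_i))^{z_i}$ and $\widetilde{\mathcal P}_B(z):=\prod_i(\beta_iP_B(b_i))^{z_i}$, in increasing order.
\begin{enumerate}
\item Because $\Sigma_A=\Sigma_B$, both families multiply with the same constants, $\widetilde{\mathcal P}_L(z)\widetilde{\mathcal P}_L(s)=\kappa_\Sigma(z,s)\widetilde{\mathcal P}_L(z+s)$, and take adjoints with the same $\Sigma$-determined signs.
\item The only linear dependencies among source products are the coset proportionalities $\widetilde{\mathcal P}_A(z+s)=\kappa_\Sigma(z,s)\zeta_A(s)\widetilde{\mathcal P}_A(z)$ for $s\in K_A$. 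These hold verbatim in the target by $K_A\subseteq K_B$ and $\zeta_A=\zeta_B$.
\item Hence $\widetilde{\mathcal P}_A(z)\mapsto\widetilde{\mathcal P}_B(z)$ extends to a unital $*$-homomorphism $\pi$ between the two finite-dimensional algebras, with $\pi(H_A)=H_B$.
\item A unital homomorphism maps invertible elements to invertible ones, so $\mathrm{spec}(H_B)\subseteq\mathrm{spec}(H_A)\subseteq[-\Lambda,\Lambda]$.
\end{enumerate}
The Taylor--Poisson certificate then applies to $H_B$ with the same $\Lambda$ and the same polynomial. This step needs no moment matching, and not even condition~(d).

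\textbf{The explicit polynomial uses the wrong shift.} Truncating the alternating series of $e^{-y}$ with $y=\beta(H+\Lambda)/2\in[0,\beta\Lambda]$ leaves a remainder of order $y^{D+1}/(D+1)!$. For $y$ comparable to $\beta\Lambda$ this is exponentially larger, in $\beta\Lambda$, than the target value $e^{-y}$ when $D$ is only $\beta\Lambda+O(\sqrt{\beta\Lambda\ell}+\ell)$. So the stated degree condition would not follow from your construction.

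The certificate instead writes $e^{-\beta H/2}=e^{-\beta\Lambda/2}e^{y}$ with $y=\beta(\Lambda-H)/2\in[0,\beta\Lambda]$. All Taylor terms are then nonnegative, and $0\le e^{y}-\sum_{j\le D}y^j/j!=e^{y}\Pr[\mathrm{Pois}(y)\ge D+1]\le e^{y}\Pr[\mathrm{Pois}(\beta\Lambda)\ge D+1]$. This is a uniform relative error on the spectrum, and it bounds the pure-state trace distance. The sub-gamma Poisson tail $\Pr[\mathrm{Pois}(\mu)\ge\mu+\sqrt{2\mu\ell}+\ell/3]\le e^{-\ell}$ then gives exactly $D+1\ge\beta\Lambda+\sqrt{2\beta\Lambda\ell}+\ell/3$. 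Since you defer to the certificate this is repairable, but your description of it should be corrected.

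\textbf{Minor point and what is correct.} In the last step the window is only bounded below by $(\delta_{\rm dec}/J)(1-o(1))$, not equal to it; the lower bound is all that is needed. The rest is correct: contractivity and the triangle inequality through the exact transduction circuit, partial trace for the Gibbs marginal, and completing the square for the window.
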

Thus an extensive spectral bound and efficient decoding at linear radius
certify a constant-inverse-temperature preparation window.

The potential advantage concerns target sampling. Within the certified
low-degree regime, target observables can already be estimated using the
source Gibbs state through the algebraic
homomorphism~\cite{ShangFranca2026Transduction}.

\section{Example: relation-free commuting source}
\label{sec:examples1}\label{sec:examples-main}

For commuting Hamiltonians, the original DQI and HDQI both only work for those with uniform coefficients, i.e. the target $H_B$ in Equation~\eqref{eq:hamiltonians} has $\lambda_i=1$ since for non-uniform coefficients, the initial state to be prepared can not be cast as a linear combination of Dicke states. Here, using the Hamiltonian transduction picture, we show that for any commuting target Hamiltonian $H_B$, the $D$-degree polynomial-filtered Choi state of a relation-free source $H_A$ is Clifford-equivalent to an MPS of bond
dimension at most $D+1$. A similar result is also observed in \cite{GeneralPauliHDQI} while our transduction picture provides a simpler proof. The following theorem summarize the result, and Appendix~\ref{sec: classicalsimu} proves this
bound and provides explicit tensors. Sequential MPS preparation
is efficient on quantum computers~\cite{SchonEtAl2005,MalzEtAl2024MPS}.
\begin{theorem}[Commuting relation-free sources]
\label{thm:commuting-relation-free-source}
Let
\begin{equation}
    H_A=\sum_{i=1}^{m}\lambda_i\alpha_iP_A(a_i),
    \qquad
    \lambda_i>0,\quad \alpha_i\in\{\pm1\},
\end{equation}
be an $n_A$-qubit Hamiltonian with pairwise commuting
Pauli terms and $K_A=\ker A=\{0\}$.
For every real polynomial $Q_D$ of degree at most $D$
with $Q_D(H_A)\ne0$, there is an efficiently computable
Clifford unitary $U$ such that
\begin{equation}
    (U\otimes U^*)\ket{\Psi_A^Q}
       =\ket{\phi_D}\otimes\ket{\Phi_{n_A-m}},
\end{equation}
where $\ket{\phi_D}$ has an exact open-boundary MPS
representation of bond dimension at most $D+1$ in
the paired physical--reference ordering.
Its tensors can be computed efficiently and consequently, $\ket{\Psi_A^Q}$ is efficiently preparable.
\end{theorem}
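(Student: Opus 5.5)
Since the $m$ terms pairwise commute and $\ker A=\{0\}$, the labels $a_1,\ldots,a_m$ are linearly independent and isotropic with respect to $J_{n_A}$. The plan is first to use the standard symplectic Gram--Schmidt procedure to produce an efficiently computable Clifford $V$ on $n_A$ qubits with
\begin{equation*}
V P_A(a_i)V^\dagger=\pm Z_i,\qquad i=1,\ldots,m.
\end{equation*}
This requires $m\le n_A$, which is automatic for an isotropic independent set. The signs are absorbed into $\alpha_i$, so that
\begin{equation*}
VH_AV^\dagger=\sum_i\lambda_i\alpha_i'Z_i\otimes I_{n_A-m}.
\end{equation*}
The transpose trick $(X\otimes I)\ket{\Phi}=(I\otimes X^T)\ket{\Phi}$ gives
\begin{equation*}
(V\otimes V^*)(Q_D(H_A)\otimes I)\ket{\Phi_{n_A}}=(Q_D(VH_AV^\dagger)\otimes I)\ket{\Phi_{n_A}}.
\end{equation*}
I therefore take $U=V$. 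Because the last $n_A-m$ qubits carry the identity, the state factorizes as $\ket{\phi_D}\otimes\ket{\Phi_{n_A-m}}$, where $\ket{\phi_D}\propto(Q_D(H')\otimes I)\ket{\Phi_m}$ with $H'=\sum_i\lambda_i\alpha_i'Z_i$. Normalization is unaffected.

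Next I expand in the computational basis. We have
\begin{equation*}
\ket{\phi_D}\propto\sum_{x\in\F_2^m}Q_D\Bigl(\sum_i\lambda_i\alpha_i'(-1)^{x_i}\Bigr)\ket{x}\ket{x}.
\end{equation*}
In the paired physical--reference ordering, each site $i$ carries the pair $\ket{x_i}\ket{x_i}$. The task is then to give an MPS for a function of the form $Q_D(\sum_i h_i(x_i))$. Write
\begin{equation*}
Q_D\Bigl(\sum_i s_i\Bigr)=\sum_\ell q_\ell\Bigl(\sum_i s_i\Bigr)^\ell,\qquad s_i:=\lambda_i\alpha_i'(-1)^{x_i}.
\end{equation*}
The natural bond index is the power $p\in\{0,\ldots,D\}$ of the partial sum carried so far. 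The binomial transfer $(S+s)^p=\sum_{j\le p}\binom{p}{j}S^j s^{p-j}$ is lower-triangular in $(p,j)$ and gives local tensors
\begin{equation*}
A_i^{(x_i)}[j,p]=\binom{p}{j}s_i^{\,p-j},
\end{equation*}
with physical leg $\ket{x_i}\ket{x_i}$. The left boundary vector is $e_0$ (the empty partial sum contributes $S^0=1$ only), and the right boundary vector is $(q_0,\ldots,q_D)$. Contracting these gives exactly $Q_D$ of the total sum. The bond dimension is $D+1$, and every tensor entry is computable in $\mathrm{poly}(m,D)$ time. Efficient preparation then follows from the sequential MPS preparation results cited above: apply the isometries for $\ket{\phi_D}$, prepare $\ket{\Phi_{n_A-m}}$ with Bell pairs, and finally apply $U^\dagger\otimes U^T$, which is Clifford.

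I expect the main obstacle to be numerical stability of the MPS, not its existence. Binomial coefficients and powers up to $\Lambda^D$ can make the unnormalized tensors badly conditioned, so the left-canonical isometries needed for sequential preparation must be computed with care. My first attempt is to rescale $s_i\mapsto s_i/\Lambda$, absorbing $\Lambda^\ell$ into $q_\ell$, and then perform the canonicalization by sequential QR decompositions with exact rational or polynomially bounded-precision arithmetic. Bond dimension $D+1$ keeps this $\mathrm{poly}(m,D)$.

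A secondary point is the efficiency of constructing the Clifford $V$ by symplectic Gaussian elimination. This is standard, taking $O(n_A^3)$ binary operations, and extending $\{a_i\}$ to a symplectic basis via the $X$-partners is routine.
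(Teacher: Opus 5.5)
Your proposal is correct and follows the paper's proof. The paper also uses a symplectic-elimination Clifford to send the terms to $\pm Z_i$, factors off the spectator pairs $\ket{\Phi_{n_A-m}}$, and writes the explicit transfer matrices $B(t)_{rs}=\binom{s}{r}t^{s-r}$ with boundaries $\mathbf e_0^T$ and $\mathbf q$, which are exactly your tensors. Beyond this, the paper adds a Schmidt-rank argument across each cut that the explicit tensors already make redundant, and it does not address the numerical-conditioning question you raise.
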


\section{Example: 1d source}
\label{sec:examples2}

We construct geometrically nonlocal target Hamiltonians whose TFDs can be
prepared from a one-dimensional source. The source preparation follows
from the spin-chain result of~\cite{BergamaschiChen1D}, and a classical
syndrome decoder supplies the relative decoder. The following theorem
states the resulting guarantee; the rest of the section gives the
construction and proof.
\begin{theorem}[Transduction from a one-dimensional source]
\label{thm:1d-source-transduction}
Consider $H_A,H_B$ in
\eqref{eq:example-1d-source-hamiltonian} and
\eqref{eq:example-code-target-hamiltonian}. Suppose $F$ has distance $d_F\ge \delta N$
and an efficient syndrome decoder correcting every
error of weight at most $\lfloor\tau N\rfloor$,
where $0<2\tau<\delta$ are fixed constants.
Then the pair is algebraically compatible and, with
$m=5N-1$,
\begin{equation}
    d_{B}=3,\qquad
    d_{A\rightarrow B}=d_F=\Theta(m),\qquad
    D_{\rm dec}=\lfloor\tau N\rfloor.
\end{equation}
For every fixed inverse temperature
\begin{equation}
    0\le\beta<\frac{\tau}{5 J_0},
\end{equation}
transduction from the efficiently preparable
one-dimensional source TFD prepares the target TFD
and its Gibbs marginal to fixed or inverse-polynomial
trace-distance error in polynomial time, for all
sufficiently large $N$.

If $F$ has bounded row and column weights, the target
has bounded Pauli locality and participation.
If its column supports have unbounded diameter in
the specified target layout, the target is not
finite-range one-dimensional in that layout.
\end{theorem}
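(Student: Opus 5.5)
The plan is to verify the three algebraic hypotheses of Theorem~\ref{thm:algebraic-main}, compute $d_B$ and $d_{A\to B}$ directly from the term structure, build the relative decoder from the syndrome decoder of $F$, and then invoke Theorem~\ref{thm:tfd-transduction} with $\Lambda=\sum_i|\lambda_i|\le 5J_0N$.

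\textbf{Step 1: term matching and compatibility.} Match the $m=5N-1$ terms in the obvious way. The sign of $\lambda_i^\mu$ is absorbed into $\alpha_i=\beta_i$, and $\lambda$ is replaced by $|\lambda|$. Source and target agree on every term acting only on $q$-qubits, and all $c$/$R$ parts are $Z$-type. Hence the anticommutation matrix is determined by the $q$-parts alone: $X_{q_i}$ anticommutes exactly with $Z_{q_i}$, $Z_{c_i}Z_{q_i}$ (resp.\ $Z_R(f_i)Z_{q_i}$), $Z_{q_{i-1}}Z_{q_i}$ and $Z_{q_i}Z_{q_{i+1}}$. Therefore $\Sigma_A=\Sigma_B$.

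For the kernels, write a selection vector as $z$ and split the label equation into its $X$-part and its $Z$-part on the $q$-qubits. These parts are identical for source and target. The only difference is the $c$/$R$ part. In the source this reads $z^c_i+z^{cq}_i=0$ for each $i$. In the target it reads $F(z^c+z^{cq})=0$, so $K_B=\{z:\text{$q$-part vanishes},\ z^c+z^{cq}\in\ker F\}$ and $K_A=\{z:\text{$q$-part vanishes},\ z^c+z^{cq}=0\}$. Thus $K_A\subseteq K_B$, and the map $z\mapsto z^c+z^{cq}$ induces an isomorphism $K_B/K_A\cong\ker F$.

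For phase compatibility, note that every relation in $K_A$ has zero $X$-selection (the $X$-part forces $z^x=0$). Such a relation is therefore a product of commuting $Z$-type Paulis with signs $\alpha_i=\beta_i$, so $\zeta_A(s)=\zeta_B(s)=\prod\alpha_i^{s_i}$. It suffices to check this on the basis of weight-three cell and bond relations.

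\textbf{Step 2: distances.} The cell relation shows $d_B\le3$. For the lower bound $d_B\ge3$, observe that no term is the identity and no two terms share a label. The second fact needs the columns $f_i$ to be nonzero and distinct; this holds because $d_F\ge3$ for large $N$.

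For $d_{A\to B}$, take $c\in K_B\setminus K_A$. Then $w:=c^c+c^{cq}$ is a nonzero codeword of $F$, so $|c|\ge|w|\ge d_F$. Conversely, setting $c^c=w$ and all other components to zero gives a target relation of weight exactly $|w|$. Hence $d_{A\to B}=d_F\ge\delta N=\Theta(m)$.

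\textbf{Step 3: the relative decoder.} This is the main step. Given $Bz$, the $q$-part of $Az$ is copied directly. The remaining $c_i$-component of $Az$ is $e_i:=z^c_i+z^{cq}_i$. The target's $R$-part equals $Fe$, and $|e|\le|z|\le D$. So whenever $D\le\lfloor\tau N\rfloor$, running the syndrome decoder of $F$ reversibly recovers $e$, and this yields $Az$. This gives $D_{\rm dec}=\lfloor\tau N\rfloor<d_F/2$.

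The subtlety is that the promise in the paper's definition is $w_A(Az)\le D$, not $|z|\le D$. However, the minimum of $|e|$ over preimages is at most $w_A(Az)$, so the decoder's weight guarantee is still met.

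\textbf{Step 4: temperature and preparation.} The source TFD is obtained from Corollary~I.1 of~\cite{BergamaschiChen1D}, after grouping each cell $(q_i,c_i)$ into a 4-dimensional site so that $H_A$ is nearest-neighbor. Apply Theorem~\ref{thm:tfd-transduction} with $\Lambda=5J_0N$, $k=\lfloor\tau N\rfloor+1$ and $\ell=O(\log N)$. The window \eqref{eq:tfd-finite-temperature-window} then tends to $\tau N/(5J_0N)=\tau/(5J_0)$, so every fixed $\beta<\tau/(5J_0)$ is certified for large $N$.

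The locality claims follow directly from the term structure. If $F$ has bounded row and column weights, each term has bounded weight and each $R_a$ appears in boundedly many terms. If the column supports have unbounded diameter, some $Z_R(f_i)$ cannot be finite-range in the given layout.
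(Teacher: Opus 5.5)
Your proposal is correct and follows essentially the same route as the paper. You argue compatibility from the fact that only commuting $Z$-type central factors change. You use the label maps $As=(s_X,0,\xi,e)$ and $Bs=(s_X,0,\xi,Fe)$ to get $K_B/K_A\cong\ker F$ and $d_{A\to B}=d_F$. You build the relative decoder from $\operatorname{Dec}_F$. Finally, you invoke Theorem~\ref{thm:tfd-transduction} with $\Lambda=O(J_0N)$ and $D_{\rm dec}=\lfloor\tau N\rfloor$.

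You are slightly more careful than the paper on two points. You note that $d_B=3$ needs the columns $f_i$ to be nonzero and distinct, which is guaranteed once $d_F\ge 3$, i.e.\ for large $N$. You also check that the decoder promise $w_A(Az)\le D$ still bounds $|e|$.
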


\subsection{A one-dimensional source and its non-geometrically-local target}
\label{subsec:1d-fast-mixing-code-quotient}

Consider an open chain of $N\ge2$ two-qubit cells
$(q_i,c_i)$ and we call qubits with $c_i$ the central qubits. Define
\begin{equation}
\begin{aligned}
    H_A
    ={}&
    \sum_{i=1}^{N}
    \left(
        \lambda_i^xX_{q_i}
        +\lambda_i^zZ_{q_i}
        +\lambda_i^cZ_{c_i}
        +\lambda_i^{cq}Z_{c_i}Z_{q_i}
    \right)
    \\
    &+
    \sum_{i=1}^{N-1}
        \lambda_i^{zz}Z_{q_i}Z_{q_{i+1}},
    \label{eq:example-1d-source-hamiltonian}
\end{aligned}
\end{equation}
where all displayed coefficients are real and nonzero, with
$0<|\lambda_i^\mu|\le J_0$ for each displayed term and a
system-size-independent constant $J_0$.
Every term acts within one cell or two adjacent cells.
Thus $H_A$ is explicitly a nearest-neighbor chain of
four-dimensional sites, with
\begin{equation}
    n_A=2N,\qquad m=5N-1.
\end{equation}

Choose a full-row-rank binary parity-check matrix
\begin{equation}
    F\in\F_2^{(N-k)\times N},
    \qquad 0<k<N,
    \qquad
    d_F:=\min_{\substack{e\ne0\\Fe=0}}|e|.
    \label{eq:ex-F}
\end{equation}
The $N$ columns correspond to the central source spins.
Introduce $N-k$ target spins $R_1,\ldots,R_{N-k}$ and set
\begin{equation}
    Z_R(y):=\prod_{a=1}^{N-k}Z_{R_a}^{y_a}.
\end{equation}
Writing $f_i$ for column $i$ of $F$, replace
$Z_{c_i}$ by $Z_R(f_i)$ to obtain
\begin{equation}
\begin{aligned}
    H_B
    ={}&
    \sum_{i=1}^{N}
    \left(
        \lambda_i^xX_{q_i}
        +\lambda_i^zZ_{q_i}
        +\lambda_i^cZ_R(f_i)
        +\lambda_i^{cq}Z_R(f_i)Z_{q_i}
    \right)
    \\
    &+
    \sum_{i=1}^{N-1}
        \lambda_i^{zz}Z_{q_i}Z_{q_{i+1}}.
    \label{eq:example-code-target-hamiltonian}
\end{aligned}
\end{equation}
The target retains the $N$ quantum spins but replaces
the $N$ central spins by $N-k$ syndrome spins, giving
$n_B=2N-k$. The replacement acts only on commuting $Z$ factors and preserves all
quantum factors and signed coefficients. It therefore preserves ordered
products and signed scalar relations, proving algebraic conditions
(a)--(c) of Theorem~\ref{thm:algebraic-main}; the degree condition is checked below.

\paragraph{Source preparation.}
$H_A$ is a 1$D$ source. The on-site and nearest-neighbor operators in
\eqref{eq:example-1d-source-hamiltonian} have norms
at most $4J_0$ and $J_0$, respectively. Its canonical TFD is therefore efficiently preparable
at every fixed finite inverse temperature.

\paragraph{Target geometry.}
The source field $Z_{c_i}$ becomes
\begin{equation}
    Z_R(f_i)=\prod_{a:(f_i)_a=1}Z_{R_a},
\end{equation}
whose support is determined by $F$, rather than
adjacency along the source chain. Choose $F$ so that their supports have unbounded diameter
in the specified target layout.
Then the terms $Z_R(f_i)$ are geometrically nonlocal,
and $H_B$ is not a finite-range one-dimensional
Hamiltonian in that layout.

This does not require large Pauli weight.
If $F$ has column weight at most $d_v$, every target
term has weight at most $\max\{2,d_v+1\}$.
If its row weight is at most $d_c$, each $R$ spin
participates in at most $2d_c$ terms.
Thus sparse choices of $F$ can give bounded-locality
targets with geometrically nonlocal interactions.

\subsection{Relative decoding reduces to decoding
\texorpdfstring{$F$}{F}}
\label{subsec:examples-relative-decoding}

Let $\Gamma\in\F_2^{N\times(N-1)}$ be $\Gamma_{ij}=\delta_{ij}+\delta_{i,j+1}$.
Order the indexed terms as
$X_q,Z_q,Z_qZ_q,Z_c,Z_cZ_q$.
Their Pauli-label matrices are
\begin{equation}
    A=
    \begin{pmatrix}
        I_N&0&0&0&0\\
        0&0&0&0&0\\
        0&I_N&\Gamma&0&I_N\\
        0&0&0&I_N&I_N
    \end{pmatrix},
    \qquad
    B=
    \begin{pmatrix}
        I_N&0&0&0&0\\
        0&0&0&0&0\\
        0&I_N&\Gamma&0&I_N\\
        0&0&0&F&F
    \end{pmatrix}.
    \label{eq:ex-AB}
\end{equation}
The row blocks are $(x_q,x_c,z_q,z_c)$ for $A$
and $(x_q,x_R,z_q,z_R)$ for $B$.

For a selection vector
$s=(s_X,s_Z,s_{ZZ},s_c,s_{cq})$, set
\begin{equation}
    \xi:=s_Z+\Gamma s_{ZZ}+s_{cq},
    \qquad
    e:=s_c+s_{cq}.
\end{equation}
Then
\begin{equation}
    As=(s_X,0_N,\xi,e),
    \qquad
    Bs=(s_X,0_{N-k},\xi,Fe).
    \label{eq:ex-label-maps}
\end{equation}
Relative decoding therefore only needs
to recover the central label $e$ from its
syndrome $Fe$.

Suppose an efficient decoder for $F$
satisfies
\begin{equation}
    \operatorname{Dec}_F(Fe)=e
    \quad\text{whenever }|e|\le D_{\rm cl},
    \qquad
    2D_{\rm cl}<d_F.
\end{equation}
Then the relative decoder can be built from $\operatorname{Dec}_F$
\begin{equation}
    \operatorname{Dec}_{A\to B}(x,0_{N-k},z,y)
       :=(x,0_N,z,\operatorname{Dec}_F(y)).
    \label{eq:ex-decoder}
\end{equation}

\paragraph{Distance comparison.}
A residual relation $s\in K_B\setminus K_A$ has a nonzero central label
$e$ satisfying $Fe=0$.
Its selection weight is at least $|e|$.
Conversely, for every nonzero $e$ with $Fe=0$, the selection
$s=(0,0,0,e,0)$ is a residual relation of weight $|e|$.
Therefore
\begin{equation}
    d_{A\to B}
       =\min_{\substack{e\ne0\\Fe=0}}|e|
       =d_F.
    \label{eq:ex-relative-distance}
\end{equation}
Write $\alpha_i^\mu:=\operatorname{sgn}(\lambda_i^\mu)$ and
$\omega_i:=\alpha_i^z\alpha_i^c\alpha_i^{cq}\in\{\pm1\}$.
Then the source has the weight-three signed Pauli relations
\begin{equation}
    \alpha_i^zZ_{q_i}\,\alpha_i^cZ_{c_i}\,
    \bigl(\alpha_i^{cq}Z_{c_i}Z_{q_i}\bigr)=\omega_i I,
    \qquad i\in[N],
\end{equation}
Thus each relation involves
three indexed terms, giving $d_B\le3$.
The source labels are nonzero and distinct,
excluding source relations of weight one or two.
Since every new target relation has weight at
least $d_F$, whenever $d_F\ge4$ we obtain the ordinary distance-relative distance separation.

\subsection{Linear distance and constant inverse temperature}
\label{subsec:examples-temperature}

Choose a sparse parity-check family with
\begin{equation}
    d_F\ge\delta N,
    \qquad
    D_{\rm cl}=\floor{\tau N},
    \qquad
    0<2\tau<\delta,
    \label{eq:ex-code-family}
\end{equation}
for fixed $\delta,\tau>0$, together with an efficient
decoder through $D_{\rm cl}$.
Since $m=5N-1$, the construction gives, for sufficiently
large $N$,
\begin{equation}
    d_B=3,\qquad
    d_{A\rightarrow B}=\Theta(m),\qquad
    D_{\rm dec}=D_{\rm cl}=\Theta(m).
\end{equation}
The coefficient bound supplies the common norm bound $ \Lambda=J_0m=J_0(5N-1)$.

For fixed $\beta\ge0$ and $0<\eta<1$ with
$\ell:=\ln(1/\eta)=o(N)$, the Taylor--Poisson certificate
gives the sufficient degree
\begin{equation}
\begin{aligned}
    D_{\rm req}
    &:=
    \max\left\{
        0,\,
        \ceil{
            \beta\Lambda
            +\sqrt{2\beta\Lambda\,\ell}
            +\frac{\ell}{3}
        }-1
    \right\} \\
    &=5\beta J_0N+o(N).
\end{aligned}
\end{equation}
Comparing with $D_{\rm dec}=\tau N+O(1)$ shows that
$D_{\rm req}\le D_{\rm dec}$ for sufficiently large $N$
whenever
\begin{equation}
    0\le\beta<\frac{\tau}{5J_0}.
    \label{eq:ex-temperature}
\end{equation}
Under the same source-preparation and circuit-implementation
assumptions, Theorem~\ref{thm:tfd-transduction} therefore
certifies target-TFD preparation with total trace-distance
error at most $\delta_{\rm src}+2\eta+\delta_{\rm circ}$.

Regarding the candidates of the code family, one example is the binary expander
codes of Sipser and Spielman
\cite{SipserSpielman1996}. They have positive rate, linear minimum distance, bounded
parity-check row and column weights, and an efficient decoder
correcting every error pattern of weight at most
$\lfloor\tau_0N\rfloor$ for some constant $\tau_0>0$.

\section{Source design and structural constraints}
\label{sec:properties}

Throughout this section, $\Sigma_A=\Sigma_B=: \Sigma$, $K_A\subseteq K_B$, and $K_\Sigma:=\ker\Sigma$.
Proofs are in Appendices~\ref{app:graph-kernel-independence},
\ref{app:relative}, and~\ref{app:dense}.

\begin{property}[Existence of Hamiltonian]
Given a symmetric matrix $\Sigma\in\mathbb{F}_2^{m\times m}$ with zero diagonal and let $K\subseteq\mathbb{F}_2^m$ be a linear subspace. Then there exist an $n$-qubit Hamiltonian $H_L$ with the Pauli-label matrix $L\in\mathbb{F}_2^{2n\times m}$ such that
\begin{equation}
    \ker L=K,
    \qquad
    L^T J_n L=\Sigma,
\end{equation}
if and only if
\begin{equation}
    K\subseteq K_\Sigma.
\end{equation}
Moreover, a concrete construction of $H_L$ can be built on $n=q+s$ qubits with $\operatorname{rank}\Sigma=2q$ and $s=\dim\ker\Sigma-\dim K$. And we have $n\leq m$ and 
\begin{equation}
    n=m
    \quad\Longleftrightarrow\quad
    \Sigma=0
    \text{ and }
    K=\{0\}.
\end{equation}
\end{property}
\noindent Proofs can be found in Proposition~\ref{thm:prescribed-target-kernel}.

For a fixed anticommutation matrix, compatible source and target relation
kernels are exactly the chains $K_A\subseteq K_B\subseteq K_\Sigma$.
The relation-free realization in
Proposition~\ref{prop:universal-pauli-graph} supplies the DQI and HDQI source
states described in Section~\ref{sec:dqi-hdqi-special}.

\begin{property}[Kernel and relative distance]
Given two source Hamiltonians $H_A$ and $H_{A'}$ with the kernels $K_A$ and $K_{A'}$, the relative distance is monotone under absorbing source relations
\begin{equation}
K_{A'}\subseteq K_A\subseteq K_B \quad\Longrightarrow\quad d_{A'\to B}\le d_{A\to B}.
\end{equation}
When $K_{A'}=0$, $d_{A'\to B}=d_B$ and the relative distance $d_{A\to B}$ is strictly larger than the oridinary distance $d_B$ if and only if
\begin{equation}
\{c\in K_B: |c|=d_B\}\subseteq K_A.
\end{equation}
\end{property}
\noindent Proofs can be found in Subsection~\ref{sec: kernel affect distance}.

To increase the relative distance, the source should absorb the low-weight
target relations. The resulting degree window can be used when source
preparation and relative decoding remain efficient.

\begin{property}[Achievable distance]
\begin{enumerate}[label=(\roman*)]
\item Fixed source. Fixing $K_A$ and $K_\Sigma$ with $K_A\subsetneq K_\Sigma$, then the largest relative distance by varying the target is 
\begin{equation}
\max_{K_A\subsetneq K_B\subseteq K_\Sigma}d_{A\to B}=\max_{c\in K_\Sigma}\min_{s\in K_A}|c+s|.
\end{equation}
A one-dimensional extension
\begin{equation}
        K_B=K_A\oplus\Span\{c_\star\}
        \label{eq:main-one-dimensional-best-extension}
\end{equation}
attains the optimum whenever $c_\star+K_A$ is a farthest coset.

\item Fixed target. Fix $K_B$ and $K_\Sigma$ with $K_B\subseteq K_\Sigma$. Define $K_B^{<t}:=\Span\{c\in K_B:|c|<t\}$.
There exists a proper source kernel $K_A\subsetneq K_B$ making $d_{A\to B}\ge t$ if and only if
\begin{equation}
K_B^{<t}\subsetneq K_B.
\end{equation}
Any $K_A$ satisfying $K_B^{<t}\subseteq K_A\subsetneq K_B$ has the relative distance at least $t$.
\end{enumerate}
\end{property}
\noindent Proofs can be found in Subsection~\ref{subsec:linear-relative-distance-existence} and Subsection~\ref{subsec:fixed-target-source-design}.

For a fixed source, linear relative distance is possible precisely when
$K_\Sigma/K_A$ contains a coset at linear distance from $K_A$. For a fixed
target and a proper source kernel, it is possible when the target relations
below the desired linear threshold span a proper subspace of $K_B$.

\begin{property}[Finite distance implies relation-changing transductions]
Let
\begin{equation}
        \widetilde P_i^A:=\alpha_iP_A(a_i),
        \qquad
        \widetilde P_i^B:=\beta_iP_B(b_i).
\end{equation}
If a unitary satisfies
\begin{equation}
        \widetilde P_i^B
        =U\widetilde P_i^AU^\dagger
        \quad\text{for every }i,
\end{equation}
then
\begin{equation}
        K_A=K_B,
        \qquad
        d_{A\to B}=+\infty,
\end{equation}
and every signed relation phase is preserved.  If $U$ is Clifford, there is an invertible symplectic matrix $S$ such that
\begin{equation}
        B=SA,
        \qquad
        M(Az)=S(Az).
\end{equation}
Conversely,
\begin{equation}
        d_{A\to B}<\infty
        \quad\Longrightarrow\quad
        K_A\subsetneq K_B
        \quad\Longrightarrow\quad
        \text{no termwise unitary conjugation exists}.
\end{equation}
Every finite residual distance belongs to $\{1,\ldots,m\}$.
\end{property}
\noindent Proofs can be found in Subsection~\ref{sec:mapping-hierarchy}.

Termwise unitary conjugation preserves the relation kernels and transports
every polynomial degree. Finite relative distance certifies a change in the
indexed relation structure that no termwise unitary can realize. Relative
decoding implements this change on the degree-$D$ subspace whenever
$2D<d_{A\to B}$.

\begin{property}[Minimum-weight ordinary decoder solves the relative decoding]
Assume $K_A\subsetneq K_B$. Suppose an ordinary target decoder
    \begin{equation}
        \operatorname{Dec}_B:\operatorname{im}B
        \longrightarrow \mathbb{F}_2^m
    \end{equation}
    is a minimum-weight decoder
    \begin{equation}
        \operatorname{Dec}_B(y)=z^\star(y)
 \end{equation}
with
\begin{equation}
  z^\star(y)\in
        \arg\min_{z':\,Bz'=y}|z'|
 \end{equation}
Then
    \begin{equation}
\operatorname{Dec}_{A\to B}^{\mathrm{MW}}(y)
         :=
        A\operatorname{Dec}_B(y)=
        Az^\star(y)
    \end{equation}
is a valid relative decoder whenever
    \begin{equation}
        2D<d_{A\to B}.
    \end{equation}
\end{property}
\noindent Proofs can be found in Subsection~\ref{subsec:standard-vs-relative}.

This gives a relative decoder by composing minimum-weight decoding with
$A$. An efficient implementation may instead compute $Az$ directly from
$Bz$, as in the chain construction of Section~\ref{sec:examples2}.

\begin{property}[Kernel and interacting Hamiltonian\label{pro:inter}]
For a Pauli label $\ell_i=(x_i,z_i)$, define its support and Pauli weight by
\begin{equation}
        S_i:=\{q:(x_{i,q},z_{i,q})\ne(0,0)\},
        \qquad
        \operatorname{wt}_P(\ell_i):=|S_i|.
\end{equation}
The number of inter-qubit interaction terms is defined as
\begin{equation}
        t_{\rm int}(L)
        :=\#\{i:\operatorname{wt}_P(\ell_i)\ge2\}.
        \label{eq:interaction-count-properties}
\end{equation}

Let $H_L$ be a Hamiltonian whose unsigned Pauli labels are nonzero and
pairwise distinct, with label matrix $L$, and set $k:=\dim K_L$.
Let $L_1$ be the submatrix containing the weight-one columns of $L$, and
define $k_1:=\dim\ker L_1$. Then
\begin{equation}
t_{\rm int}(L)\ge k-k_1 \ge
        \max\left\{0,\dim K_L-\frac12\rank\Sigma_L\right\}.
\end{equation}
Consequently, if $H_L$ is a commuting Hamiltonian, then
\begin{equation}
t_{\rm int}(L)\ge k,
\end{equation}
and, after a Clifford change of basis, the family can be written as $m-k$ independent one-qubit $Z$ fields plus exactly $k$ additional commuting $Z$-string interactions. 
\end{property}
\noindent Proofs can be found in Subsection~\ref{sec: howmanyinter} and Subsection~\ref{sec:noncommuting}. 

For reduced commuting families, the relation-code dimension bounds the
number of interaction terms. Noncommuting on-site $X,Y,Z$ triples can
account for some relations without interactions, giving the correction
involving $\rank\Sigma_L$. Source design balances absorbing short target
relations against retaining an efficiently preparable source TFD.

\section{Discussion and outlook}
\label{sec:discussion}

We have given an explicit Gibbs-state preparation algorithm that combines
a noncommuting one-dimensional source with relative decoding. The source
contains the short target relations, while a classical syndrome decoder
resolves the remaining quotient code. This yields linear decoding degree
and a constant-inverse-temperature preparation window for the target family.

Further constructions can use sources with different relation structures
and efficient relative decoders to reach other temperature and accuracy
regimes. The homomorphic formulation of~\cite{ShangFranca2026Transduction}
also supports fermionic and qudit systems, as well as bosonic models with
controlled finite-dimensional truncations. Thermal applications in these
settings call for explicit source TFDs and relative decoders.

Two goals are to establish classical sampling hardness for target Gibbs
states at the certified temperatures and to compare transduction with
direct target-preparation methods. A family with polynomial-time
transduction and exponentially slow mixing for a specified target sampler
at the same temperature would demonstrate a separation between preparation
and equilibration by that dynamics.

\section{Acknowledgments and AI use statement}
\noindent\textbf{Acknowledgments. }The authors would like to thank Samuel Slezak for relevant discussions. Z.S. and D.S.F. acknowledge support by the ERC grant GIFNEQ 101163938. D.S.F. acknowledges financial support from the Novo Nordisk Foundation (Grant No. NNF20OC0059939 Quantum for Life). Y.W. acknowledges support from the Danish e-Infrastructure Consortium (DeiC) through a PhD scholarship in quantum algorithms and quantum software, grant no. 5259-00016B.

\noindent\textbf{AI use statement.} The core ideas of interpreting the pilot state preparations in DQI as the polynomial Choi state of the product source Hamiltonian, exploring non-trivial non-relation-free sources, and involving existing TFD preparations in the source stage are from human authors. ChatGPT 5.6 Sol, ChatGPT 6 Astra were used for formulations and for helping the authors realize that relative decoding is what is needed at the uncomputation stage. The same models were also used for proofreading.

\bibliographystyle{unsrt}
\bibliography{finalgibbs_references}

\clearpage
\appendix
\setcounter{section}{0}

\section{Pauli phase calculus and coefficient matching}
\label{app:phase}\label{sec:phase-compatibility}

This appendix supplies the Pauli algebra and coefficient comparison used in Theorem~\ref{thm:algebraic-main}.  The goal is to show that the phase relating source and target words is a function of the compressed source label $u=Az$, not of the particular term-selection vector $z$.

\subsection{Pauli phase arithmetic}
\label{subsec:pauli-phase-arithmetic}

For the Hermitian representative in Eq.~\eqref{eq:pauli-representative}, every $P_n(g)$ is Hermitian and satisfies $P_n(g)^2=I$.  We use this convention throughout because it keeps every Hamiltonian term Hermitian even when the Pauli label contains both $X$ and $Z$ on the same qubit.  The price is that products carry explicit fourth-root phases.

Let $g=(x,z)$ and $h=(x',z')$.  Starting from the definition,
\begin{equation}
\begin{aligned}
        P_n(g)P_n(h)
        &= i^{x\cdot z+x'\cdot z'}
           X^x Z^z X^{x'}Z^{z'}  \\
        &= i^{x\cdot z+x'\cdot z'}(-1)^{z\cdot x'}
           X^{x+x'}Z^{z+z'}       \\
        &= i^{x\cdot z+x'\cdot z'+2z\cdot x'-(x+x')\cdot(z+z')}
           P_n(g+h),
\end{aligned}
\end{equation}
where binary additions in the Pauli label are over $\F_2$, while the dot products in the exponent are evaluated as ordinary integers before reducing modulo $4$.  Thus
\begin{equation}
        P_n(g)P_n(h)=i^{\mu_n(g,h)}P_n(g+h),
        \label{eq:pauli-product-rule}
\end{equation}
with
\begin{equation}
        \mu_n(g,h)=x\cdot z+x'\cdot z'+2z\cdot x'-(x+x')\cdot(z+z')
        \pmod 4.
        \label{eq:mu-def}
\end{equation}
Exchanging $g$ and $h$ cancels the Hermitian-convention terms and leaves the usual symplectic commutation sign:
\begin{equation}
        P_n(g)P_n(h)=(-1)^{g^TJ_nh}P_n(h)P_n(g).
        \label{eq:commutation-rule}
\end{equation}
Consequently, $(\Sigma_L)_{ij}$ defined in Eq.~\eqref{eq:Sigma-def} records exactly whether the $i$th and $j$th generators commute or anticommute.  The condition $\Sigma_A=\Sigma_B$ means that source and target words acquire identical signs whenever their factors are reordered.  This is the only place where pairwise anticommutation enters the coefficient comparison; the remaining phase difference is carried by the canonical phases $\nu_A,\nu_B$ and the sign vector $r$ below.

\subsection{Words, parity vectors, and canonical products}
For comparisons between source and target we assume
$\Sigma_A=\Sigma_B=\Sigma$. For an individual family, the same
identities hold with its own anticommutation matrix $\Sigma_L$.

A word of length $\ell$ is an ordered tuple
\begin{equation}
        w=(i_1,\ldots,i_\ell),\qquad i_t\in[m].
\end{equation}
Its parity vector $z(w)\in\F_2^m$ is
\begin{equation}
        z(w)_i=\#\{t:i_t=i\}\pmod 2.
\end{equation}
Only this parity vector determines the final binary Pauli label, because every generator squares to identity.

For $z\in\F_2^m$, define the canonical ordered generator product
\begin{equation}
        \mathcal P_L(z):=\prod_{i=1}^m P_L(\ell_i^L)^{z_i},
        \label{eq:canonical-product}
\end{equation}
where the factors are always ordered by increasing $i$.  Its binary Pauli label is $Lz$.  Hence there is a unique phase $\nu_L(z)\in\{\pm 1,\pm i\}$ such that
\begin{equation}
        \mathcal P_L(z)=\nu_L(z)P_L(Lz).
        \label{eq:nu-def}
\end{equation}
If $\supp(z)=\{j_1<\cdots<j_t\}$ and
\begin{equation}
        g_0^L=0,
        \qquad
        g_k^L=g_{k-1}^L+\ell_{j_k}^L,
\end{equation}
then repeated use of the Pauli multiplication rule gives
\begin{equation}
        \nu_L(z)=i^{\sum_{k=1}^t \mu_{n_L}(g_{k-1}^L,\ell_{j_k}^L)}.
        \label{eq:nu-recursion}
\end{equation}
This formula is useful both for proofs and for efficient verification.

\begin{lemma}[Reordering phase]
For every word $w=(i_1,\ldots,i_\ell)$ and $L\in\{A,B\}$,
\begin{equation}
        P_L(\ell_{i_1}^L)\cdots P_L(\ell_{i_\ell}^L)
        =\vartheta_\Sigma(w)\,\mathcal P_L(z(w)),
\end{equation}
where
\begin{equation}
        \vartheta_\Sigma(w)=(-1)^{N_\Sigma(w)},
        \qquad
        N_\Sigma(w)=\sum_{\substack{1\le p<q\le \ell\\ i_p>i_q}} \Sigma_{i_pi_q}
        \pmod 2.
\end{equation}
In particular, the source and target have the same word-reordering phase whenever \eqref{eq:graph-kernel-common-sigma} holds.
\end{lemma}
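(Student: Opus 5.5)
The plan is to prove the reordering formula by sorting the word with adjacent transpositions, the way one bubble-sorts a sequence, and then cancelling equal adjacent factors. Two facts from the preceding subsection do all the work. By the commutation rule \eqref{eq:commutation-rule}, swapping two adjacent factors $P_L(\ell_i^L)P_L(\ell_j^L)$ multiplies the product by $(-1)^{(\Sigma_L)_{ij}}$. Since every $P_L(g)$ is Hermitian with $P_L(g)^2=I$, two equal adjacent factors can be deleted without any phase.

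First I sort the word into nondecreasing index order. Bubble sort does this, and each adjacent swap it performs exchanges exactly one inverted pair: a pair of positions $p<q$ with $i_p>i_q$. Every such inverted pair is exchanged exactly once, and pairs that are already in order are never swapped. The accumulated sign is therefore $(-1)^{\sum_{p<q,\,i_p>i_q}(\Sigma_L)_{i_pi_q}}$. Pairs with $i_p=i_q$ are not inversions, and in any case $\Sigma_{ii}=0$, so equal indices never contribute. This sign is exactly $\vartheta_\Sigma(w)$ with $\Sigma=\Sigma_L$.

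Next I cancel repeats. In the sorted word all copies of the same generator are contiguous. A block of $c$ copies of $P_L(\ell_i^L)$ collapses to $P_L(\ell_i^L)^{c\bmod 2}=P_L(\ell_i^L)^{z(w)_i}$, again with no phase. What remains is the increasingly ordered product $\mathcal P_L(z(w))$ from \eqref{eq:canonical-product}, which proves the displayed identity.

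For the final claim, note that the sign depends only on the word $w$ and on the matrix $\Sigma_L$. Under the standing assumption $\Sigma_A=\Sigma_B=\Sigma$, the source and target words therefore receive the same reordering phase $\vartheta_\Sigma(w)$.

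The only point needing care is the inversion count. I must justify that the sign is independent of which sequence of swaps is used, and that it equals the inversion-weighted sum. The cleanest way is to avoid tracking a particular sequence altogether: argue that any sorting by adjacent swaps changes the relative order of each pair of original positions an odd number of times if the pair is inverted and an even number of times if it is not. The sign contributed by a pair is symmetric in the two indices, because $\Sigma$ is symmetric, so each pair contributes $(\Sigma_L)_{i_pi_q}$ exactly once modulo 2 when it is inverted and nothing otherwise. Everything else in the argument is routine.
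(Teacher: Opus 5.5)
Your proposal is correct and follows the paper's proof: sort the word by adjacent swaps, with each inverted pair $p<q$, $i_p>i_q$ contributing $(-1)^{\Sigma_{i_pi_q}}$, then cancel contiguous repeated generators using $P_L(g)^2=I$, and read off the common phase from $\Sigma_A=\Sigma_B$. Your remark on independence of the swap sequence is not strictly needed, since one concrete sort such as bubble sort already fixes the unique scalar in front of the nonzero operator, but it does make precise the paper's ``exactly one crossing per inversion'' step.
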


\begin{proof}
Start from the ordered product appearing in the word.  We transform it into canonical increasing order by a sequence of adjacent swaps.  If two adjacent generators have labels $i$ and $j$, one swap changes their order by
\[
        P_L(\ell_i^L)P_L(\ell_j^L)
        =(-1)^{\Sigma_{ij}}P_L(\ell_j^L)P_L(\ell_i^L).
\]
Thus the only information needed for reordering is the anticommutation matrix $\Sigma_L$.  An inversion of the word is a pair $p<q$ with $i_p>i_q$; exactly one crossing is needed for each such pair.  Multiplying the signs over all inversions gives $\vartheta_\Sigma(w)$.

After the word is sorted, all copies of a fixed generator are adjacent.  Because each generator squares to identity, an even number of copies cancels completely and an odd number leaves one copy.  Hence the surviving generator set is exactly the parity vector $z(w)$, in increasing order.  This proves the formula.  If $\Sigma_A=\Sigma_B$, the same inversion product occurs in the source and target, so the word-reordering phase is common to both Hamiltonians.
\end{proof}

Combining the lemma with \eqref{eq:nu-def},
\begin{equation}
        P_L(\ell_{i_1}^L)\cdots P_L(\ell_{i_\ell}^L)
        =\vartheta_\Sigma(w)\nu_L(z(w))P_L(Lz(w)).
        \label{eq:word-product}
\end{equation}

\subsection{Canonical phase ratio}

Write
\begin{equation}
        \frac{\beta_i}{\alpha_i}=(-1)^{r_i},
        \qquad r_i\in\F_2.
\end{equation}
For a word with parity vector $z$, the ratio of the target and source sign factors is
\begin{equation}
        \prod_t \frac{\beta_{i_t}}{\alpha_{i_t}}=(-1)^{r\cdot z}.
\end{equation}
Because the reordering phase $\vartheta_\Sigma(w)$ is common under $\Sigma_A=\Sigma_B$, the full source-to-target phase ratio for a word with parity vector $z$ is
\begin{equation}
        \xi(z):=\frac{\nu_B(z)}{\nu_A(z)}(-1)^{r\cdot z}.
        \label{eq:xi-def}
\end{equation}
At this stage $\xi$ is a function of $z$.  Transduction in the compressed source basis requires it to be constant on cosets $z+K_A$.

\subsection{Signed relation phases}

Assume $K_A\subseteq K_B$.  For $s\in K_A$, define the ordered signed relation products
\begin{equation}
        R_A(s):=\prod_{i=1}^m(\alpha_iP_A(a_i))^{s_i}=\zeta_A(s)I,
\end{equation}
\begin{equation}
        R_B(s):=\prod_{i=1}^m(\beta_iP_B(b_i))^{s_i}=\zeta_B(s)I.
\end{equation}
The scalar phases are
\begin{equation}
        \zeta_A(s)=\nu_A(s)\prod_i \alpha_i^{s_i},
        \qquad
        \zeta_B(s)=\nu_B(s)\prod_i \beta_i^{s_i}.
\end{equation}

\begin{definition}[Signed-relation compatibility]
\label{def:signed-rel-compatible}
The pair $(H_A,H_B)$ is signed-relation compatible if
\begin{equation}
        \zeta_A(s)=\zeta_B(s)
        \qquad\text{for every }s\in K_A.
        \label{eq:signed-rel-compatible}
\end{equation}
Equivalently,
\begin{equation}
        \frac{\nu_B(s)}{\nu_A(s)}(-1)^{r\cdot s}=1
        \qquad\text{for every }s\in K_A.
        \label{eq:signed-rel-equivalent}
\end{equation}
\end{definition}

This condition says that every relation already present in the source is represented by the same signed scalar relation in the target.

\subsection{Why signed-relation compatibility is exactly the needed condition}

For $z,s\in\F_2^m$, define
\begin{equation}
        \kappa_\Sigma(z,s):=(-1)^{\sum_{i>j}z_is_j\Sigma_{ij}}.
\end{equation}
The exponent counts the crossings needed to move the generators selected by $s$ into canonical order among those selected by $z$.

\begin{lemma}[Canonical-product multiplication]
For $L\in\{A,B\}$,
\begin{equation}
        \mathcal P_L(z)\mathcal P_L(s)=\kappa_\Sigma(z,s)\mathcal P_L(z+s).
        \label{eq:canonical-mult}
\end{equation}
\end{lemma}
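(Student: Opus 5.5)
We will reduce the statement to the reordering-phase lemma already proved. Write $\mathcal P_L(z)\mathcal P_L(s)$ as one ordered word: the canonical word of $z$ (indices in increasing order on $\supp z$) followed by the canonical word of $s$. Call this concatenated word $w$. Its parity vector is $z+s$. By the reordering-phase lemma,
\[
\mathcal P_L(z)\mathcal P_L(s)=\vartheta_\Sigma(w)\,\mathcal P_L(z+s),
\]
so it only remains to identify $\vartheta_\Sigma(w)$ with $\kappa_\Sigma(z,s)$.

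We count inversions of $w$. Within the first block, indices increase, and likewise within the second block, so neither block contributes inversions. An inversion therefore consists of a position $p$ in the $z$-block and a position $q$ in the $s$-block with $i_p>i_q$. Such pairs are in bijection with pairs $(i,j)$ satisfying $z_i=1$, $s_j=1$ and $i>j$. Hence
\[
N_\Sigma(w)=\sum_{i>j}z_is_j\Sigma_{ij}\pmod 2,
\]
which gives $\vartheta_\Sigma(w)=\kappa_\Sigma(z,s)$.

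We check two details. First, indices common to $\supp z$ and $\supp s$ cause no trouble. The pair with $i=j$ is not an inversion, and in any case $\Sigma_{ii}=0$. The cancellation of repeated generators is already handled by the lemma's sorting-then-squaring step. Second, the lemma is stated with $\Sigma=\Sigma_L$ for a single family, so the identity holds separately for $L=A$ and $L=B$ with their own matrices. Under $\Sigma_A=\Sigma_B$ the same $\kappa_\Sigma$ appears in both.

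There is no real obstacle. The only point needing care is the bijection between inversions of the concatenated word and the index pairs with $i>j$, in particular the treatment of overlapping supports.
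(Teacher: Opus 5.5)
Your proposal is correct and follows the paper's argument. The paper likewise treats $\mathcal P_L(z)\mathcal P_L(s)$ as a concatenation of two increasing words, counts the crossings of each $s$-selected generator $j$ past each $z$-selected generator $i>j$, and cancels duplicates in pairs; routing this through the reordering-phase lemma, as you do, simply makes the inversion count $\sum_{i>j}z_is_j\Sigma_{ij}$ explicit.
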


\begin{proof}
The product $\mathcal P_L(z)\mathcal P_L(s)$ is the concatenation of two increasing-order words.  To restore global increasing order, every generator $j$ selected by $s$ crosses every generator $i>j$ selected by $z$.  Each crossing contributes $(-1)^{\Sigma_{ij}}$.  Duplicate generators then cancel in pairs.
\end{proof}

If $s\in K_A\subseteq K_B$, then $Ls=0$ for $L=A,B$.  Using \eqref{eq:nu-def} and \eqref{eq:canonical-mult},
\begin{equation}
        \nu_L(z)\nu_L(s)P_L(Lz)=\kappa_\Sigma(z,s)\nu_L(z+s)P_L(Lz),
\end{equation}
and hence
\begin{equation}
        \nu_L(z+s)=\kappa_\Sigma(z,s)^{-1}\nu_L(z)\nu_L(s).
\end{equation}
Taking the target/source ratio cancels the common $\kappa_\Sigma$ factor:
\begin{equation}
        \frac{\nu_B(z+s)}{\nu_A(z+s)}
        =\frac{\nu_B(z)}{\nu_A(z)}\frac{\nu_B(s)}{\nu_A(s)}.
\end{equation}
Therefore
\begin{equation}
        \xi(z+s)=\xi(z)\frac{\nu_B(s)}{\nu_A(s)}(-1)^{r\cdot s}.
        \label{eq:xi-coset}
\end{equation}

\begin{lemma}[Well-defined phase correction]
\label{lem:chi-well-defined}
Assume $\Sigma_A=\Sigma_B$ and $K_A\subseteq K_B$.  The function
\begin{equation}
        \chi(Az):=\frac{\nu_B(z)}{\nu_A(z)}(-1)^{r\cdot z}
        \label{eq:chi-def}
\end{equation}
is independent of the representative $z$ if and only if signed-relation compatibility holds.
\end{lemma}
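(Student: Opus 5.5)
The plan is to read everything off the coset identity \eqref{eq:xi-coset}, which was derived just above under exactly the hypotheses $\Sigma_A=\Sigma_B$ and $K_A\subseteq K_B$. The key observation is that the right-hand side of \eqref{eq:chi-def} is precisely $\xi(z)$ from \eqref{eq:xi-def}. So the question ``is $\chi(Az)$ independent of the representative $z$?'' is the same as the question ``is $\xi$ constant on every coset $z+K_A$?''. This is because two representatives $z,z'$ satisfy $Az=Az'$ if and only if $z'=z+s$ with $s\in K_A$.

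For sufficiency, assume signed-relation compatibility in the equivalent form \eqref{eq:signed-rel-equivalent}. Take any $z$ and any $s\in K_A$. Then \eqref{eq:xi-coset} gives
\[
\xi(z+s)=\xi(z)\,\frac{\nu_B(s)}{\nu_A(s)}(-1)^{r\cdot s}=\xi(z),
\]
so $\xi$ is constant on cosets and $\chi$ is well defined on $\im A$.

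For necessity, assume $\chi$ is well defined. Specialize to $z=0$. Here $\mathcal P_L(0)=I=P_L(0)$, so $\nu_A(0)=\nu_B(0)=1$ and $\xi(0)=1$. For any $s\in K_A$ we have $As=0=A\cdot 0$, so well-definedness forces $\xi(s)=\xi(0)=1$. Expanding the definition, this reads $\frac{\nu_B(s)}{\nu_A(s)}(-1)^{r\cdot s}=1$, which is \eqref{eq:signed-rel-equivalent}.

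It remains to justify that \eqref{eq:signed-rel-equivalent} really is equivalent to $\zeta_A(s)=\zeta_B(s)$. Write $\zeta_B(s)/\zeta_A(s)=\frac{\nu_B(s)}{\nu_A(s)}\prod_i(\beta_i/\alpha_i)^{s_i}$, and note that $\prod_i(\beta_i/\alpha_i)^{s_i}=(-1)^{r\cdot s}$. The ratios are well defined because the phases $\nu_L(s)$ are nonzero, being fourth roots of unity.

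The only step needing care is the validity of \eqref{eq:xi-coset} itself. It relies on $Bs=0$ for $s\in K_A$, which is where $K_A\subseteq K_B$ enters; on the common factor $\kappa_\Sigma(z,s)$ canceling in the ratio, which is where $\Sigma_A=\Sigma_B$ enters; and on $P_L(Lz)\neq 0$, so that scalar coefficients can be compared. All of these were established in the preceding derivation, so no real obstacle remains.
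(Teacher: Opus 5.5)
Your proposal is correct and follows the paper's proof: sufficiency from the coset identity \eqref{eq:xi-coset}, and necessity by comparing the two representatives $0$ and $s\in K_A$ of the zero source label. The extra checks you spell out ($\nu_L(0)=1$, and the equivalence of $\zeta_A(s)=\zeta_B(s)$ with \eqref{eq:signed-rel-equivalent}) are right and only make explicit what the paper leaves implicit.
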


\begin{proof}
If $Az=A(z+s)$ with $s\in K_A$, then \eqref{eq:xi-coset} and signed-relation compatibility give $\xi(z+s)=\xi(z)$, so $\chi$ is well defined.  Conversely, compare the two representatives $z=0$ and $z=s\in K_A$ of the zero source label.  Well-definedness requires $\xi(s)=\xi(0)=1$, which is exactly \eqref{eq:signed-rel-equivalent}.
\end{proof}

\subsection{Basis test for signed-relation compatibility}

\begin{proposition}[Basis test]\label{prop:basis-phase-test}
Let $s^{(1)},\ldots,s^{(k)}$ be a basis of $K_A$.  Under $\Sigma_A=\Sigma_B$ and $K_A\subseteq K_B$, signed-relation compatibility holds for every $s\in K_A$ if and only if
\begin{equation}
        \zeta_A(s^{(j)})=\zeta_B(s^{(j)}),
        \qquad j=1,\ldots,k.
\end{equation}
\end{proposition}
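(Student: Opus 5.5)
The plan is to show that the map $s\mapsto \zeta_B(s)/\zeta_A(s)$ is a group homomorphism from $(K_A,+)$ to $\{\pm1,\pm i\}$. Once that holds, agreement on a basis propagates to all of $K_A$. The forward direction is trivial, since basis vectors lie in $K_A$. For the converse, introduce $\theta(s):=\xi(s)=\frac{\nu_B(s)}{\nu_A(s)}(-1)^{r\cdot s}$ for $s\in K_A$. By the definitions of $\zeta_A,\zeta_B$ and $\beta_i/\alpha_i=(-1)^{r_i}$, we have
\[
\frac{\zeta_B(s)}{\zeta_A(s)}=\frac{\nu_B(s)}{\nu_A(s)}\prod_i\Bigl(\frac{\beta_i}{\alpha_i}\Bigr)^{s_i}=\theta(s),
\]
so compatibility on $K_A$ is equivalent to $\theta\equiv1$ on $K_A$.

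The key step reuses the coset identity \eqref{eq:xi-coset}. It was derived for arbitrary $z\in\F_2^m$ and $s\in K_A$, using only $\Sigma_A=\Sigma_B$, $K_A\subseteq K_B$ and the canonical-product multiplication lemma. Specializing to $z=s'\in K_A$ gives
\[
\xi(s'+s)=\xi(s')\,\frac{\nu_B(s)}{\nu_A(s)}(-1)^{r\cdot s}=\theta(s')\,\theta(s).
\]
Hence $\theta(s+s')=\theta(s)\theta(s')$ on $K_A$. This depends on the factor $\kappa_\Sigma(s',s)$ cancelling in the target/source ratio, and it does, because both families share $\Sigma$. We also note $\theta(0)=1$, since $\nu_L(0)=1$. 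The derivation also never needs the phases to be real: the identity holds in the unit group $\{\pm1,\pm i\}$.

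Conclusion: every $s\in K_A$ is a sum $\sum_j c_j s^{(j)}$ of basis vectors. Iterating multiplicativity gives $\theta(s)=\prod_j\theta(s^{(j)})^{c_j}=1$ whenever each $\theta(s^{(j)})=1$. The only point needing care is that the derivation of \eqref{eq:xi-coset} used $Ls=0$ for both $L=A,B$. This is exactly where the hypothesis $K_A\subseteq K_B$ enters, and I expect no further obstacle. As a remark, this also justifies the polynomial-time verification claim: a basis of $K_A$ is found by Gaussian elimination, and each $\zeta_L(s^{(j)})$ is computed via \eqref{eq:nu-recursion}.
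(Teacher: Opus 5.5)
Your proposal is correct and follows essentially the same route as the paper. Both arguments show that $\zeta_B(s)/\zeta_A(s)$ is a character on $K_A$, because the crossing sign from canonical-product multiplication is common to source and target, and then extend from the basis. The only difference is cosmetic: you get multiplicativity by specializing the already-derived coset identity \eqref{eq:xi-coset} to $z=s'\in K_A$, whereas the paper re-argues it directly from the concatenated signed relation products.
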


\begin{proof}
Define the relative relation phase
\[
        \Delta(s):=\frac{\zeta_B(s)}{\zeta_A(s)}.
\]
We first check that $\Delta$ is multiplicative on $K_A$.  For $s,t\in K_A$, multiply the ordered products for $s$ and $t$.  In both source and target, the same reordering sign is needed to convert the concatenated product into the canonical product for $s+t$, because the anticommutation matrices agree.  Therefore the crossing phase cancels in the ratio between target and source, and
\[
        \Delta(s+t)=\Delta(s)\Delta(t).
\]
Thus $\Delta$ is a character of the binary vector space $K_A$.

If signed-relation compatibility holds for every relation, it holds on any basis.  Conversely, suppose $\Delta(s^{(j)})=1$ for each basis vector.  Any relation $s\in K_A$ has a unique expansion $s=\sum_j a_j s^{(j)}$ with $a_j\in\F_2$.  By multiplicativity,
\[
        \Delta(s)=\prod_j \Delta(s^{(j)})^{a_j}=1.
\]
Hence $\zeta_A(s)=\zeta_B(s)$ for all $s\in K_A$.
\end{proof}

\subsection{Cost of evaluating the phase correction \texorpdfstring{$\chi(Az)$}{chi(Az)}}
\label{subsec:chi-efficient}
After algebraic compatibility has been verified, the phase
\(\chi(u)\) depends only on the source label \(u\), even though it is
computed using a representative \(z\) satisfying \(Az=u\). It can be
evaluated coherently as follows.

\begin{enumerate}
    \item \emph{Choose a representative.}
    During classical preprocessing, fix a linear section
    \(S_A:\operatorname{im}A\to\mathbb F_2^m\) satisfying
    \(AS_Au=u\). On input \(u\), compute
    \begin{equation}
        z_0=S_Au.
    \end{equation}

    \item \emph{Compute the source and target Pauli phases.}
    For each \(L\in\{A,B\}\), initialize a running Pauli label
    \(g_L=0\) and a two-bit accumulator \(e_L=0\). Scan the indices
    \(i=1,\ldots,m\). Whenever \(z_{0,i}=1\), update
    \begin{equation}
        e_L
        \leftarrow
        e_L+\mu_{n_L}(g_L,\ell_i^L)
        \pmod 4,
        \qquad
        g_L\leftarrow g_L+\ell_i^L,
    \end{equation}
    where \(\ell_i^L\) is the \(i\)th Pauli label and
    \(\mu_{n_L}\) is the Pauli multiplication exponent. At the end,
    \begin{equation}
        \nu_L(z_0)=i^{e_L}.
    \end{equation}

    \item \emph{Compute the coefficient-sign contribution.}
    Define \(r\in\mathbb F_2^m\) by
    \begin{equation}
        \frac{\beta_i}{\alpha_i}=(-1)^{r_i}.
    \end{equation}
    Compute the parity
    \begin{equation}
        p=r\cdot z_0\pmod 2.
    \end{equation}

    \item \emph{Combine the contributions.}
    The phase exponent is
    \begin{equation}
        \kappa_\chi(u)
        =
        e_B-e_A+2p
        \pmod 4,
    \end{equation}
    and hence
    \begin{equation}
        \chi(u)
        =
        i^{\kappa_\chi(u)}
        =
        \frac{\nu_B(z_0)}{\nu_A(z_0)}
        (-1)^{r\cdot z_0}.
    \end{equation}

\end{enumerate}

The procedure uses only reversible binary linear algebra, Pauli phase
arithmetic, and a parity computation. For dense Pauli labels its gate
cost is \(O(m(n_A+n_B))\). It does not require a minimum-weight
representative or a call to the relative decoder.

\subsection{Explicit coefficient matching}
\label{subsec:explicit-coefficient-matching}

We prove that the phase-corrected source and target coefficients agree term by term.

Expanding $Q_D(H_L)$ using Eq.~\eqref{eq:word-product} gives
\begin{equation}
    Q_D(H_L)
    =\sum_{\ell=0}^D q_\ell
      \sum_{w=(i_1,\ldots,i_\ell)\in[m]^\ell}
      \left(\prod_{t=1}^\ell \lambda_{i_t}\right)
      \left(\prod_{t=1}^\ell \gamma^L_{i_t}\right)
      \vartheta_\Sigma(w)\nu_L(z(w))P_L(Lz(w)),
      \label{eq:polynomial-expansion}
\end{equation}
where $\gamma_i^A=\alpha_i$ and $\gamma_i^B=\beta_i$.  For $u\in\im A$, define the source aggregate coefficient
\begin{equation}
        C_A(u)
        :=\sum_{\ell=0}^D q_\ell
        \sum_{\substack{w=(i_1,\ldots,i_\ell)\in[m]^\ell\\ Az(w)=u}}
        \left(\prod_{t=1}^\ell \lambda_{i_t}\right)
        \left(\prod_{t=1}^\ell \alpha_{i_t}\right)
        \vartheta_\Sigma(w)\nu_A(z(w)).
        \label{eq:source-aggregate-coefficient}
\end{equation}
Similarly, for $v\in\im B$, define
\begin{equation}
        C_B(v)
        :=\sum_{\ell=0}^D q_\ell
        \sum_{\substack{w=(i_1,\ldots,i_\ell)\in[m]^\ell\\ Bz(w)=v}}
        \left(\prod_{t=1}^\ell \lambda_{i_t}\right)
        \left(\prod_{t=1}^\ell \beta_{i_t}\right)
        \vartheta_\Sigma(w)\nu_B(z(w)).
        \label{eq:target-aggregate-coefficient}
\end{equation}
The Bell-label states are then
\begin{equation}
        \ket{\widetilde\Psi_A^Q}=\frac{1}{N_A}\sum_{u\in \mathcal S_D(A)}C_A(u)\ket{u},
        \qquad
        \ket{\widetilde\Psi_B^Q}=\frac{1}{N_B}\sum_{v\in M(\mathcal S_D(A))}C_B(v)\ket{v},
        \label{eq:bell-label-states-coefficients}
\end{equation}
where the target indexing set is $M(\mathcal S_D(A))$: every degree-$D$ target word has the form $v=Bz=M(Az)$ with $Az\in\mathcal S_D(A)$. Some coefficients can vanish, so these are support envelopes rather than assertions of nonzero support.

\begin{proposition}[Coefficient matching]
\label{prop:coefficient-matching}
Assume $\Sigma_A=\Sigma_B$, $K_A\subseteq K_B$, signed-relation compatibility, and injectivity of $M$ on $\mathcal S_D(A)$.  Then
\begin{equation}
        C_B(Mu)=\chi(u)C_A(u),
        \qquad u\in \mathcal S_D(A).
\end{equation}
Consequently, $N_A=N_B$; when $Q_D(H_A)\ne0$, the normalized states satisfy
\begin{equation}
        \ket{\widetilde\Psi_B^Q}
        =\frac{1}{N_A}\sum_{u\in \mathcal S_D(A)}\chi(u)C_A(u)\ket{Mu}.
\end{equation}
\end{proposition}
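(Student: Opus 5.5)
The plan is to compare the two aggregate sums \eqref{eq:source-aggregate-coefficient} and \eqref{eq:target-aggregate-coefficient} word by word, using the $\chi$ of Lemma~\ref{lem:chi-well-defined} as the conversion factor. Fix $u\in\mathcal S_D(A)$. The first step identifies the index set of $C_B(Mu)$. A word $w$ of length $\ell\le D$ contributes to $C_B(Mu)$ exactly when $Bz(w)=Mu$. Since $|z(w)|\le\ell\le D$, the label $Az(w)$ lies in $\mathcal S_D(A)$ and satisfies $M(Az(w))=Bz(w)=Mu$. Injectivity of $M$ on $\mathcal S_D(A)$ (Proposition~\ref{thm:injectivity-radius}, i.e.\ condition (d)) then forces $Az(w)=u$. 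Conversely, $Az(w)=u$ implies $Bz(w)=Mu$. So the two sums run over exactly the same words $w$ of length at most $D$ with $Az(w)=u$.

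The second step compares the summands for a fixed such word. The factors $q_\ell$ and $\prod_t\lambda_{i_t}$ are common to both. By the Reordering-phase lemma and $\Sigma_A=\Sigma_B$, the factor $\vartheta_\Sigma(w)$ is also common. The remaining ratio is
\[
\frac{\prod_t\beta_{i_t}\,\nu_B(z(w))}{\prod_t\alpha_{i_t}\,\nu_A(z(w))}=(-1)^{r\cdot z(w)}\frac{\nu_B(z(w))}{\nu_A(z(w))}=\xi(z(w)).
\]
By Lemma~\ref{lem:chi-well-defined}, which uses $K_A\subseteq K_B$ and signed-relation compatibility, this equals $\chi(Az(w))=\chi(u)$. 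The ratio depends on $w$ only through the coset $z(w)+K_A$. Factoring it out of the sum gives $C_B(Mu)=\chi(u)C_A(u)$.

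The third step handles normalization. Because $M$ is injective on $\mathcal S_D(A)$, the target labels $Mu$ are distinct, and the Pauli Choi vectors are orthonormal. Since $|\chi(u)|=1$, we get
\[
N_B^2=\sum_{v}|C_B(v)|^2=\sum_{u\in\mathcal S_D(A)}|C_A(u)|^2=N_A^2.
\]
Here the target sum runs over $M(\mathcal S_D(A))$, which is the full support envelope by the remark after \eqref{eq:bell-label-states-coefficients}. Equivalently, $2^{-n}\Tr Q_D(H)^2$ is the squared $\ell_2$ norm of the Pauli coefficients. If $Q_D(H_A)\neq0$, then $N_A>0$. Substituting into \eqref{eq:bell-label-states-coefficients} gives the stated normalized identity.

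I expect the only delicate point to be the first step. The argument must use the bound $|z(w)|\le\ell\le D$, not merely $|w|\le D$, so that every target word lands on a label $Mu$ with a unique preimage $u$ in $\mathcal S_D(A)$. Without this, target words from different source cosets could merge into one coefficient. Everything else is the phase bookkeeping already prepared in the preceding lemmas.
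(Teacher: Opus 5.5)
Your proposal is correct and follows essentially the same route as the paper. Like the paper, you show that the inner sums for $C_A(u)$ and $C_B(Mu)$ run over the same words, using $|z(w)|\le\ell\le D$ and injectivity of $M$ on $\mathcal S_D(A)$. You then factor out the common phase $\chi(u)$ word by word, via the sign ratio $(-1)^{r\cdot z(w)}$ and the well-defined phase correction, and obtain $N_A=N_B$ from distinctness of the labels $Mu$ together with $|\chi(u)|=1$.
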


\begin{proof}
We write out the two aggregate coefficients explicitly.  For a word $w=(i_1,\ldots,i_\ell)$, define
\[
        \Lambda(w):=\prod_{t=1}^\ell \lambda_{i_t},
        \qquad
        \gamma_L(w):=\prod_{t=1}^\ell \gamma^L_{i_t},
        \qquad
        z=z(w).
\]
By the expansion \eqref{eq:polynomial-expansion}, the source coefficient of label $u$ is
\begin{equation}
        C_A(u)
        =\sum_{\ell=0}^D q_\ell
          \sum_{\substack{w\in[m]^\ell\\ Az(w)=u}}
          \Lambda(w)\gamma_A(w)\vartheta_\Sigma(w)\nu_A(z(w)).
        \label{eq:CA-explicit}
\end{equation}
Similarly, the target coefficient of label $Mu$ is
\begin{equation}
        C_B(Mu)
        =\sum_{\ell=0}^D q_\ell
          \sum_{\substack{w\in[m]^\ell\\ Bz(w)=Mu}}
          \Lambda(w)\gamma_B(w)\vartheta_\Sigma(w)\nu_B(z(w)).
        \label{eq:CB-explicit}
\end{equation}
The two outer sums use the same polynomial coefficients, and the same word $w$ has the same magnitude factor $\Lambda(w)$ in source and target.  The nontrivial points are therefore: which words appear in the inner sums, and what is the phase ratio for each common word.

First compare the word sets.  If a word appears in the source sum, then $Az(w)=u$, so by definition of $M$,
\[
        Bz(w)=M(Az(w))=Mu.
\]
Thus every source word also appears in the target sum.  Conversely, suppose a word appears in the target sum, so $Bz(w)=Mu$.  Because $\ell\le D$, we have $Az(w)\in \mathcal S_D(A)$.  Also
\[
        M(Az(w))=Bz(w)=Mu.
\]
Since $M$ is injective on $\mathcal S_D(A)$ and $u\in \mathcal S_D(A)$, this implies $Az(w)=u$.  Hence every target word also appears in the source sum.  The two inner sums therefore run over exactly the same words.

Now fix one of these common words.  Since $Az(w)=u$, the well-defined phase correction gives
\[
        \chi(u)=\frac{\nu_B(z(w))}{\nu_A(z(w))}(-1)^{r\cdot z(w)}.
\]
The sign ratio of the Hamiltonian coefficients is
\[
        \frac{\gamma_B(w)}{\gamma_A(w)}
        =\prod_{t=1}^\ell \frac{\beta_{i_t}}{\alpha_{i_t}}
        =(-1)^{r\cdot z(w)}.
\]
Therefore
\[
        \gamma_B(w)\nu_B(z(w))
        =\chi(u)\,\gamma_A(w)\nu_A(z(w)).
\]
All other factors in \eqref{eq:CA-explicit} and \eqref{eq:CB-explicit} are identical.  Thus every target summand is $\chi(u)$ times the corresponding source summand, proving $C_B(Mu)=\chi(u)C_A(u)$.

Finally, $M$ is injective on $\mathcal S_D(A)$, so the target labels $Mu$ are distinct as $u$ ranges over $\mathcal S_D(A)$.  Since $|\chi(u)|=1$,
\[
        N_B^2=\sum_{u\in \mathcal S_D(A)}|C_B(Mu)|^2
        =\sum_{u\in \mathcal S_D(A)}|\chi(u)C_A(u)|^2
        =\sum_{u\in \mathcal S_D(A)}|C_A(u)|^2=N_A^2.
\]
The displayed expression for $\ket{\widetilde\Psi_B^Q}$ follows after normalization.
\end{proof}

\section{Kernel, graph, and their relations}
\label{app:graph-kernel-independence}

\subsection{Definitions and local notation}
\label{subsec:graph-kernel-symbols}
\label{subsec:anticommutation-graph-pairwise}
\label{def:projective-pauli-independence}

Let
\begin{equation}
        L=[\ell_1^L,\ldots,\ell_m^L]
        \in\F_2^{2n_L\times m}
        \label{eq:graph-kernel-label-matrix}
\end{equation}
be the binary label matrix of an indexed family of $m$ Pauli operators on $n_L$ qubits.  A vector $c\in\F_2^m$ is a \emph{term-selection vector}; the projective Pauli label of the selected product is $Lc$.  We use the following spaces:
\begin{equation}
\begin{aligned}
        W_L&:=\im L,
        &K_L&:=\ker L,
        \label{eq:relation-kernel-detailed}
\end{aligned}
\end{equation}
\begin{equation}
\begin{aligned}
        \Sigma_L&:=L^TJ_{n_L}L,
        &K_{\Sigma_L}&:=\ker\Sigma_L,
        \label{eq:graph-kernel-sigma-definition}
\end{aligned}
\end{equation}
\begin{equation}
\begin{aligned}
        W_L^{\perp_J}
        &:=\{g\in\F_2^{2n_L}:g^TJ_{n_L}w=0\text{ for all }w\in W_L\},
        \label{eq:graph-kernel-orthogonal-definition}
\end{aligned}
\end{equation}
\begin{equation}
\begin{aligned}
        \rad(W_L)&:=W_L\cap W_L^{\perp_J}.
        \label{eq:graph-kernel-radical-definition}
\end{aligned}
\end{equation}
The matrix $\Sigma_L$ is the adjacency matrix of the anticommutation graph on the \emph{term indices}: its $(i,j)$ entry equals one exactly when the $i$th and $j$th Pauli terms anticommute.  The relation kernel $K_L$ records higher-order multiplicative dependencies: $c\in K_L$ exactly when the selected product has zero projective Pauli label and hence is a scalar fourth root of unity.

\subsection{Summary of results}

\begin{proposition}
\label{prop:graph-kernel-central-products}
For every $c\in\F_2^m$, the following are equivalent:
\begin{enumerate}[label=(\roman*)]
    \item $c\in K_{\Sigma_L}$;
    \item the selected Pauli product commutes with every indexed generator;
    \item $Lc\in\rad(W_L)$.
\end{enumerate}
Consequently,
\begin{equation}
        K_L\subseteq K_{\Sigma_L}.
        \label{eq:kernel-in-graph-kernel-detailed}
\end{equation}
\end{proposition}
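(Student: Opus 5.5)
The plan is to compute everything directly from the definitions. The key identity is that $\Sigma_L=L^TJ_{n_L}L$, so for any term-selection vector $c$ we have
\begin{equation}
  \Sigma_L c = L^T J_{n_L}(Lc).
\end{equation}
The $j$th entry of this vector is $(\ell_j^L)^TJ_{n_L}(Lc)$. With this in hand, the three conditions become three readings of one fact: $\Sigma_Lc$ vanishes exactly when every entry $(\ell_j^L)^TJ_{n_L}(Lc)$ vanishes.

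For (i)$\Leftrightarrow$(ii), I will use the commutation rule \eqref{eq:commutation-rule}. The selected product $\mathcal P_L(c)$ equals $\nu_L(c)P_L(Lc)$ by \eqref{eq:nu-def}. The scalar phase $\nu_L(c)$ does not affect commutation, so $\mathcal P_L(c)$ commutes with $P_L(\ell_j^L)$ exactly when $(\ell_j^L)^TJ_{n_L}Lc=0$. Requiring this for all $j$ is the statement $\Sigma_Lc=0$. As an alternative route, one can note that moving $P_L(\ell_j^L)$ past the product of the selected generators gives the sign $(-1)^{\sum_i c_i\Sigma_{ji}}$.

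For (i)$\Leftrightarrow$(iii), I will use the definitions directly. The vector $Lc$ always lies in $W_L=\operatorname{im}L$. Next, $Lc\in W_L^{\perp_J}$ means $w^TJ_{n_L}Lc=0$ for every $w\in W_L$. By linearity it suffices to check this on the spanning set $\{\ell_j^L\}$, which is exactly $\Sigma_Lc=0$. Here I use that $J_{n_L}$ is symmetric over $\F_2$, so the order of the bilinear pairing does not matter.

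The consequence then follows immediately. If $c\in K_L$, then $Lc=0$, and hence $\Sigma_Lc=L^TJ_{n_L}\cdot 0=0$. I expect no real obstacle here. The only points needing care are that the phase $\nu_L(c)$ and the Hermitian-representative convention do not change the commutation sign, which \eqref{eq:commutation-rule} already settles, and that the radical condition reduces to checking on generators.
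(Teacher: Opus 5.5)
Your proposal is correct and follows essentially the same route as the paper: both proofs reduce all three conditions to the vanishing of the entries $(\ell_j^L)^TJ_{n_L}(Lc)=e_j^T\Sigma_Lc$, use the symplectic commutation rule for (i)$\Leftrightarrow$(ii), and extend orthogonality from the columns of $L$ to their span $W_L$ for (i)$\Leftrightarrow$(iii), and both obtain $K_L\subseteq K_{\Sigma_L}$ from $\Sigma_Lc=L^TJ_{n_L}Lc=0$. You also state explicitly two small points the paper leaves implicit: the phase $\nu_L(c)$ does not affect commutation, and $J_{n_L}$ is symmetric.
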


\begin{proof}
Fix $c\in\F_2^m$.  The projective label of the selected product is $Lc$.  Its commutation sign with the $j$th indexed generator is determined by
\begin{equation}
\begin{aligned}
        (\ell_j^L)^TJ_{n_L}(Lc)
        &=e_j^TL^TJ_{n_L}Lc \\
        &=e_j^T\Sigma_Lc.
        \label{eq:central-product-row-check}
\end{aligned}
\end{equation}
The selected product commutes with the $j$th generator exactly when the $j$th entry of $\Sigma_Lc$ vanishes.  It commutes with every indexed generator exactly when
\[
        \Sigma_Lc=0,
\]
which is condition (i).  This proves (i)$\Leftrightarrow$(ii).

The vector $Lc$ always belongs to $W_L=\im L$.  Equation~\eqref{eq:central-product-row-check} vanishes for every $j$ exactly when $Lc$ is symplectically orthogonal to every column of $L$, and therefore to every vector in their span $W_L$.  Thus
\[
        \Sigma_Lc=0
        \quad\Longleftrightarrow\quad
        Lc\in W_L\cap W_L^{\perp_J}
        =\rad(W_L),
\]
which proves (i)$\Leftrightarrow$(iii).

Finally, if $c\in K_L$, then $Lc=0$, and therefore
\[
        \Sigma_Lc=L^TJ_{n_L}Lc=0.
\]
Hence $c\in K_{\Sigma_L}$, proving the inclusion.
\end{proof}

Selections in $K_L$ yield scalar products, whereas selections in
$K_{\Sigma_L}\setminus K_L$ yield nonidentity Pauli operators that commute
with every generated operator. These central labels can become relations
without changing the pairwise commutation structure.

\begin{proposition}
\label{thm:graph-kernel-exact-sequence}
The map
\begin{equation}
        \phi_L:K_{\Sigma_L}\longrightarrow\rad(W_L),
        \qquad
        \phi_L(c):=Lc,
        \label{eq:graph-kernel-phi-map}
\end{equation}
is surjective and has kernel $K_L$. Hence,
\begin{equation}
        K_{\Sigma_L}/K_L
        \cong
        \rad(W_L),
        \label{eq:graph-kernel-radical-isomorphism}
\end{equation}
and 
\begin{equation}
        \dim K_{\Sigma_L}
        =\dim K_L+\dim\rad(W_L).
        \label{eq:short-exact-dimension}
\end{equation}
\end{proposition}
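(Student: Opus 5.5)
The plan is to work with the linear map $\phi_L(c)=Lc$ restricted to $K_{\Sigma_L}$, and to establish three facts in turn: it is well defined into $\rad(W_L)$, its kernel is $K_L$, and it is surjective. The dimension formula then follows by rank--nullity, and the isomorphism \eqref{eq:graph-kernel-radical-isomorphism} follows from the first isomorphism theorem.

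Well-definedness is immediate from Proposition~\ref{prop:graph-kernel-central-products}. By (i)$\Leftrightarrow$(iii), every $c\in K_{\Sigma_L}$ satisfies $Lc\in\rad(W_L)$. The kernel is also direct. If $c\in K_{\Sigma_L}$ and $Lc=0$, then $c\in K_L$. Conversely, $K_L\subseteq K_{\Sigma_L}$ by \eqref{eq:kernel-in-graph-kernel-detailed}, and $\phi_L$ vanishes on $K_L$. Hence $\ker\phi_L=K_L$.

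Surjectivity is the only step with content, though it is short. Take $g\in\rad(W_L)$. Since $g\in W_L=\im L$, we can write $g=Lc$ for some $c\in\F_2^m$. We do not yet know that this $c$ lies in $K_{\Sigma_L}$. However, because $g\in W_L^{\perp_J}$, the computation \eqref{eq:central-product-row-check} gives
\[
e_j^T\Sigma_Lc=(\ell_j^L)^TJ_{n_L}g=0
\]
for every $j$. Therefore $\Sigma_Lc=0$, so $c\in K_{\Sigma_L}$ and $\phi_L(c)=g$. Equivalently, this is the implication (iii)$\Rightarrow$(i) of Proposition~\ref{prop:graph-kernel-central-products}, applied to an arbitrary preimage.

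With these three facts, the first isomorphism theorem gives $K_{\Sigma_L}/K_L\cong\rad(W_L)$. Counting dimensions over $\F_2$ then yields $\dim K_{\Sigma_L}=\dim K_L+\dim\rad(W_L)$. The only point that needs care is surjectivity: one must observe that every preimage of a radical vector automatically lies in $K_{\Sigma_L}$, so no special choice of preimage is required.
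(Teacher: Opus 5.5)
Your proposal is correct and follows the paper's proof essentially step for step. The steps match: well-definedness comes from Proposition~\ref{prop:graph-kernel-central-products}, $\ker\phi_L=K_L$ follows from the inclusion $K_L\subseteq K_{\Sigma_L}$, surjectivity comes from taking an arbitrary preimage $c$ of $g\in\rad(W_L)$ and noting $\Sigma_Lc=L^TJ_{n_L}g=0$, and the conclusion follows from the first isomorphism theorem and rank--nullity.
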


\begin{proof}
\emph{Step 1: well-definedness.} If $c\in K_{\Sigma_L}$, then Proposition~\ref{prop:graph-kernel-central-products} gives $Lc\in\rad(W_L)$.  Thus the codomain in Eq.~\eqref{eq:graph-kernel-phi-map} is correct.

\emph{Step 2: the kernel.} By definition,
\[
\begin{aligned}
        \ker\phi_L
        &=\{c\in K_{\Sigma_L}:Lc=0\}\\
        &=K_L,
\end{aligned}
\]
where the second equality again uses $K_L\subseteq K_{\Sigma_L}$.  

\emph{Step 3: surjectivity.}
Take any $g\in\rad(W_L)$.  Since $g\in W_L=\im L$, choose $c\in\F_2^m$ satisfying
\[
        g=Lc.
\]
Since $g\in W_L^{\perp_J}$, it is orthogonal to every column of $L$, so
\[
        L^TJ_{n_L}g=0.
\]
Substituting $g=Lc$ gives
\[
        \Sigma_Lc=L^TJ_{n_L}Lc=0.
\]
Thus $c\in K_{\Sigma_L}$ and $\phi_L(c)=g$. 

The first isomorphism theorem now gives
\[
        K_{\Sigma_L}/\ker\phi_L
        \cong\im\phi_L.
\]
Substituting $\ker\phi_L=K_L$ and $\im\phi_L=\rad(W_L)$ yields Eq.~\eqref{eq:graph-kernel-radical-isomorphism}. Finally, by the rank–nullity theorem, Eq.~\eqref{eq:short-exact-dimension} is proved.
\end{proof}
In particular, $K_L=K_{\Sigma_L}$ is equivalent to $\rad(W_L)=0$.

\begin{corollary}
\label{cor:graph-kernel-dimension-accounting}
For every indexed Pauli family,
\begin{equation}
\begin{aligned}
        \dim K_L&=m-\rank L,
        \label{eq:dim-KL}
\end{aligned}
\end{equation}
\begin{equation}
\begin{aligned}
        \dim K_{\Sigma_L}&=m-\rank\Sigma_L,
        \label{eq:dim-KSigma}
\end{aligned}
\end{equation}
\begin{equation}
\begin{aligned}
        \dim\rad(W_L)&=\rank L-\rank\Sigma_L.
        \label{eq:dim-radical-detailed}
\end{aligned}
\end{equation}
In particular, $\rank\Sigma_L\le\rank L$, and $\rank\Sigma_L$ is even.  If
\begin{equation}
        \rank\Sigma_L=2q,
        \qquad
        s:=\dim\rad(W_L),
\end{equation}
then
\begin{equation}
        \rank L=2q+s,
        \qquad
        \dim K_L=m-(2q+s).
        \label{eq:q-s-decomposition}
\end{equation}
\end{corollary}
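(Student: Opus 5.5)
The four identities in Corollary~\ref{cor:graph-kernel-dimension-accounting} follow from rank--nullity and Proposition~\ref{thm:graph-kernel-exact-sequence}; the only non-formal ingredient is the parity of $\rank\Sigma_L$. I would obtain them in the order stated.

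First, Eq.~\eqref{eq:dim-KL} is rank--nullity for the linear map $L:\F_2^m\to\F_2^{2n_L}$. Eq.~\eqref{eq:dim-KSigma} is rank--nullity for the square matrix $\Sigma_L\in\F_2^{m\times m}$. For Eq.~\eqref{eq:dim-radical-detailed}, substitute both into the dimension identity \eqref{eq:short-exact-dimension}:
\[
\dim\rad(W_L)=\dim K_{\Sigma_L}-\dim K_L=(m-\rank\Sigma_L)-(m-\rank L)=\rank L-\rank\Sigma_L .
\]
Since a dimension is nonnegative, this gives $\rank\Sigma_L\le\rank L$. Alternatively, this follows directly from $\Sigma_L=L^TJ_{n_L}L$, since a product has rank at most that of any factor.

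The one step needing an argument is that $\rank\Sigma_L$ is even. The matrix $\Sigma_L$ is symmetric over $\F_2$ with zero diagonal, because $\ell^TJ_{n_L}\ell=2\,x\cdot z=0$. Hence it is an alternating bilinear form on $\F_2^m$. I would invoke the standard symplectic normal form for alternating forms. Pick $u,v$ with $u^T\Sigma_L v=1$; they span a hyperbolic plane on which the form is nondegenerate. Split off its $\Sigma_L$-orthogonal complement and induct. This writes the form as an orthogonal sum of $q$ hyperbolic planes plus the radical $\ker\Sigma_L$, so $\rank\Sigma_L=2q$.

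An alternative that avoids the induction is to observe that $\rank\Sigma_L$ equals the rank of the restriction of $J_{n_L}$ to $W_L$. That restriction has radical $\rad(W_L)$ and is a nondegenerate alternating form on $W_L/\rad(W_L)$, which must be even-dimensional. This version also re-derives \eqref{eq:dim-radical-detailed}.

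Finally, set $s=\dim\rad(W_L)$. Then $\rank L=2q+s$ by \eqref{eq:dim-radical-detailed}, and $\dim K_L=m-(2q+s)$ by \eqref{eq:dim-KL}, giving \eqref{eq:q-s-decomposition}.
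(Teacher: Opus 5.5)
Your proposal is correct and follows the paper's proof: rank--nullity for $L$ and $\Sigma_L$, substitution into \eqref{eq:short-exact-dimension}, nonnegativity of $\dim\rad(W_L)$ for the inequality, and the even rank of an alternating form. The one difference is that you justify the zero diagonal and sketch the hyperbolic-plane induction, which the paper only cites. Your ``alternative'' still depends on the same fact, that a nondegenerate alternating form lives in even dimension, so it does not actually avoid the induction.
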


\begin{proof}
The first two identities are rank-nullity theorem applied to $L$ and $\Sigma_L$.  Taking Eq.~\eqref{eq:short-exact-dimension} gives
\[
\begin{aligned}
        \dim\rad(W_L)
        &=(m-\rank\Sigma_L)-(m-\rank L)\\
        &=\rank L-\rank\Sigma_L.
\end{aligned}
\]
Nonnegativity gives $\rank\Sigma_L\le\rank L$.  The matrix $\Sigma_L$ represents an alternating bilinear form, and alternating forms have even rank.  Writing this rank as $2q$ gives the final decomposition.
\end{proof}

The $2q$ noncentral directions form $q$ independent anticommuting
symplectic pairs. The $s$ radical directions are independent non-scalar products that commute with
the entire family. The remaining $m- (2q + s)$ degrees of redundancy are scalar relations.

\begin{corollary}
For a commuting family, $\Sigma_L=0$, so $K_{\Sigma_L}=\F_2^m$ and $\rad(W_L)=W_L$.   
\end{corollary}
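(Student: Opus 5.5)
The plan is to read everything off the definition of $\Sigma_L$ and the commutation rule \eqref{eq:commutation-rule}. For a commuting family, every pair of indexed generators $P_L(\ell_i^L)$ and $P_L(\ell_j^L)$ commutes. By \eqref{eq:commutation-rule} this means $(\ell_i^L)^TJ_{n_L}\ell_j^L=0$ in $\F_2$ for all $i\neq j$. The diagonal entries vanish for every family, because $J_{n_L}$ is alternating: $g^TJ_{n_L}g=2\,x\cdot z=0$ over $\F_2$. Hence every entry of $\Sigma_L=L^TJ_{n_L}L$ is zero, so $\Sigma_L=0$. It follows at once that $K_{\Sigma_L}=\ker 0=\F_2^m$.

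For the radical, I will show the inclusion $W_L\subseteq W_L^{\perp_J}$ directly. Take $g=Lc$ and $w=Lc'$ in $W_L=\im L$. Then
\[
g^TJ_{n_L}w=c^TL^TJ_{n_L}Lc'=c^T\Sigma_Lc'=0 .
\]
Thus every vector of $W_L$ is symplectically orthogonal to all of $W_L$, and $\rad(W_L)=W_L\cap W_L^{\perp_J}=W_L$.

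As a consistency check, I will also derive the radical identity from Proposition~\ref{thm:graph-kernel-exact-sequence}. With $K_{\Sigma_L}=\F_2^m$, that proposition gives $\dim\rad(W_L)=m-\dim K_L=\rank L=\dim W_L$. Since $\rad(W_L)\subseteq W_L$ by definition, equal dimensions force equality. Alternatively, Proposition~\ref{prop:graph-kernel-central-products} applies: every $c$ lies in $K_{\Sigma_L}$, so every $Lc$ lies in $\rad(W_L)$.

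No step is a genuine obstacle. The only point needing care is the diagonal of $\Sigma_L$, which is handled by the alternating property of $J_{n_L}$ rather than by the commuting hypothesis.
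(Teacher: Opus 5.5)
Your proposal is correct and follows the paper's proof essentially verbatim: $K_{\Sigma_L}=\ker 0=\F_2^m$, and the computation $(La)^TJ_{n_L}(Lb)=a^T\Sigma_L b=0$ gives $W_L\subseteq W_L^{\perp_J}$, hence $\rad(W_L)=W_L$. Two of your steps go beyond the paper but are correct. You justify $\Sigma_L=0$ from the commutation rule, handling the diagonal via the alternating property of $J_{n_L}$, which the paper takes as immediate from the word ``commuting.'' You also add a dimension cross-check via the exact sequence, which the paper omits.
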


\begin{proof}
Since $\Sigma_L=0$, $K_{\Sigma_L}=\ker \Sigma_L=\ker0=\F_2^m$.

Take arbitrary \(u,v\in W_L\). Since \(W_L=\operatorname{im}L\), there
exist \(a,b\in\mathbb{F}_2^m\) such that $u = La$, $v = Lb$. It follows that
\[
\begin{aligned}
    u^{T}J_{n_L}v
    &= (La)^{T}J_{n_L}(Lb) \\
    &= a^{T}L^{T}J_{n_L}Lb \\
    &= a^{T}\Sigma_L b \\
    &= 0,
\end{aligned}
\]
where the last equality follows from \(\Sigma_L=0\). Thus every vector in \(W_L\) is symplectically orthogonal to every
vector in \(W_L\). Equivalently,
\[
    W_L\subseteq W_L^{\perp_J}.
\]
By the definition of the symplectic radical, $\operatorname{rad}(W_L):=W_L\cap W_L^{\perp_J}$,  $\operatorname{rad}(W_L)= W_L$ is proved.
\end{proof}

\begin{proposition}[Prescribed kernel under a fixed anticommutation matrix]
\label{thm:prescribed-target-kernel}
Let
\begin{equation}
    \Sigma\in\mathbb{F}_2^{m\times m}
\end{equation}
be a symmetric matrix with zero diagonal, and let
\begin{equation}
    K\subseteq\mathbb{F}_2^m
\end{equation}
be a linear subspace. Then there exist an integer \(n\) and a binary
Pauli-label matrix
\begin{equation}
    L\in\mathbb{F}_2^{2n\times m}
\end{equation}
such that
\begin{equation}
    \ker L=K,
    \qquad
    L^T J_n L=\Sigma,
\end{equation}
if and only if
\begin{equation}
    K\subseteq\ker\Sigma.
\end{equation}

Moreover, a concrete construction of $L$ enables $n\leq m$ and 
\begin{equation}
    n=m
    \quad\Longleftrightarrow\quad
    \Sigma=0
    \text{ and }
    K=\{0\}.
\end{equation}
\end{proposition}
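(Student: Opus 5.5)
Necessity is immediate from Proposition~\ref{prop:graph-kernel-central-products}: if $\ker L=K$ and $L^TJ_nL=\Sigma$, then $K=K_L\subseteq K_{\Sigma_L}=\ker\Sigma$. The work is in sufficiency, where I will build $L$ explicitly. Since $\Sigma$ is alternating (symmetric with zero diagonal over $\F_2$), its rank is even, say $2q$. Set $s:=\dim\ker\Sigma-\dim K\ge0$. The target dimensions are $\rank L=m-\dim K=2q+s$ and, by Corollary~\ref{cor:graph-kernel-dimension-accounting}, $\dim\rad(\im L)=s$. This suggests working on $n=q+s$ qubits.

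\textbf{Construction.} First choose a complement $K'$ with $\ker\Sigma=K\oplus K'$, so $\dim K'=s$, and a complement $C$ with $\F_2^m=\ker\Sigma\oplus C$, so $\dim C=2q$. Because $\Sigma$ restricted to $C$ is a nondegenerate alternating form, I can pick a symplectic basis $e_1,f_1,\ldots,e_q,f_q$ of $C$ with $e_a^T\Sigma f_b=\delta_{ab}$ and all other pairings zero. Let $k_1,\ldots,k_s$ be a basis of $K'$ and $r_1,\ldots,r_t$ a basis of $K$. Together these vectors form a basis of $\F_2^m$.

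I then define a linear map $\phi:\F_2^m\to\F_2^{2n}$ on this basis:
\begin{itemize}
\item $e_a\mapsto$ the $X$-label of qubit $a$;
\item $f_a\mapsto$ the $Z$-label of qubit $a$;
\item $k_j\mapsto$ the $Z$-label of qubit $q+j$;
\item $r_i\mapsto 0$.
\end{itemize}
Set $L:=$ the matrix of $\phi$, so $\ell_i=\phi(\epsilon_i)$ for the standard basis vectors $\epsilon_i$.

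\textbf{Verification.} To check $L^TJ_nL=\Sigma$, I compare the bilinear forms $(x,y)\mapsto\phi(x)^TJ\phi(y)$ and $x^T\Sigma y$ on basis pairs. On $C$ they agree by the symplectic-pair choice. For any pair involving $k_j$ or $r_i$, both forms vanish: on the $\Sigma$ side because these vectors lie in $\ker\Sigma$, and on the label side because the images are zero or $Z$-labels on fresh qubits. Bilinearity then gives agreement everywhere.

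For $\ker L=K$: the images of $e_a,f_a,k_j$ are linearly independent and the $r_i$ map to $0$. Hence $\ker\phi=\Span\{r_i\}=K$.

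\textbf{Dimension claims.} We get $n=q+s=\tfrac12\rank\Sigma+\dim\ker\Sigma-\dim K\le\tfrac12\rank\Sigma+(m-\rank\Sigma)=m-\tfrac12\rank\Sigma\le m$. Equality holds iff $\rank\Sigma=0$ and $\dim K=0$, which is the stated equivalence.

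The main subtlety is the statement that this is "a concrete construction": the bound $n\le m$ is a property of this $L$, not of every realization. I will phrase the equivalence accordingly. The only nontrivial ingredient is the existence of a symplectic basis for an alternating form over $\F_2$, which I will obtain by the standard Gram–Schmidt-type pairing argument on the nondegenerate restriction to $C$. The Hermitian sign choices $\alpha_i$ are irrelevant here, since only labels are prescribed.
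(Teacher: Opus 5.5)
Your proof is correct and uses essentially the same construction as the paper. The paper passes to the quotient $\F_2^m/K$ and takes a canonical basis of the induced alternating form (symplectic pairs plus a basis of the radical $\ker\Sigma/K$); it sends the pairs to $X_i$/$Z_i$ labels and the radical vectors to $Z$ labels on fresh qubits, then composes with the quotient map. That is exactly your map $\phi$ once your complements $C$ and $K'$ are used to split the quotient, and your count $m-n=\tfrac12\rank\Sigma+\dim K$ together with its equality case matches the paper's.
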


\begin{proof}
We first prove necessity. Suppose that a matrix \(L\) satisfying
\[
    \ker L=K,
    \qquad
    L^T J_nL=\Sigma
\]
exists. For every \(c\in K\), we have \(Lc=0\), and hence
\[
\begin{aligned}
    \Sigma c
    &=
    L^T J_n Lc \\
    &=
    0.
\end{aligned}
\]
Therefore,
\[
    K\subseteq\ker\Sigma.
\]

For sufficiency, assume
\[
    K\subseteq\ker\Sigma.
\]
Consider the quotient space
\[
    \overline V
    :=
    \mathbb{F}_2^m/K,
\]
and let
\[
    \pi:\mathbb{F}_2^m\longrightarrow\overline V,
    \qquad
    \pi(c)=c+K
\]
be the quotient map.

We define a bilinear form on \(\overline V\) by
\begin{equation}
    \overline\omega(c+K,d+K)
    :=
    c^T\Sigma d.
    \label{eq:quotient-alternating-form}
\end{equation}
This form is well defined. Indeed, if \(k,k'\in K\), then
\[
\begin{aligned}
    (c+k)^T\Sigma(d+k')
    &=
    c^T\Sigma d
    +
    c^T\Sigma k'
    +
    k^T\Sigma d
    +
    k^T\Sigma k' \\
    &=
    c^T\Sigma d,
\end{aligned}
\]
because
\[
    \Sigma k=\Sigma k'=0.
\]
Thus changing either representative by an element of \(K\) does not
change the value of \(\overline\omega\).

The radical of the quotient form is
\[
    \operatorname{rad}(\overline V)
    =
    \ker\Sigma/K.
\]
To see this, observe that
\[
\begin{aligned}
    c+K\in\operatorname{rad}(\overline V)
    &\Longleftrightarrow
    \overline\omega(c+K,d+K)=0
    \quad
    \text{for every }d\in\mathbb{F}_2^m
    \\
    &\Longleftrightarrow
    c^T\Sigma d=0
    \quad
    \text{for every }d\in\mathbb{F}_2^m
    \\
    &\Longleftrightarrow
    \Sigma c=0
    \\
    &\Longleftrightarrow
    c\in\ker\Sigma.
\end{aligned}
\]

Write
\[
    \operatorname{rank}\Sigma=2q
\]
and
\[
    s
    :=
    \dim\ker\Sigma-\dim K.
\]
The rank is even because \(\Sigma\) defines an alternating bilinear
form. By the standard canonical-form theorem for alternating bilinear
spaces, there is a basis of \(\overline V\) of the form
\[
    p_1,q_1,\ldots,p_q,q_q,r_1,\ldots,r_s
\]
such that
\[
    \overline\omega(p_i,q_j)=\delta_{ij},
\]
while
\[
\begin{aligned}
    \overline\omega(p_i,p_j)
    &=
    0,\\
    \overline\omega(q_i,q_j)
    &=
    0,
\end{aligned}
\]
and
\[
    \overline\omega(r_a,v)=0
    \qquad
    \text{for every }v\in\overline V.
\]
Thus \(p_i,q_i\) form \(q\) symplectic pairs, whereas
\(r_1,\ldots,r_s\) span the radical.

Set
\[
    n:=q+s,
\]
and let
\[
    e_1,\ldots,e_n
\]
be the standard basis of \(\mathbb{F}_2^n\). Define a linear map
\[
    T:\overline V\longrightarrow\mathbb{F}_2^{2n}
\]
on the chosen basis by
\begin{equation}
\begin{aligned}
    T(p_i)
    &:=
    (e_i,0),
    &&
    i=1,\ldots,q,
    \label{eq:embed-p}
\end{aligned}
\end{equation}
\begin{equation}
\begin{aligned}
    T(q_i)
    &:=
    (0,e_i),
    &&
    i=1,\ldots,q,
    \label{eq:embed-q}
\end{aligned}
\end{equation}
\begin{equation}
\begin{aligned}
    T(r_a)
    &:=
    (0,e_{q+a}),
    &&
    a=1,\ldots,s.
    \label{eq:embed-radical}
\end{aligned}
\end{equation}
The vectors in
Eqs.~\eqref{eq:embed-p}--\eqref{eq:embed-radical} are linearly
independent, so \(T\) is injective.

Moreover, \(T\) preserves the alternating form. Indeed,
\[
\begin{aligned}
    T(p_i)^T J_n T(q_j)
    &=
    (e_i,0)^T
    \begin{pmatrix}
        0&I_n\\
        I_n&0
    \end{pmatrix}
    (0,e_j)
    \\
    &=
    e_i^Te_j
    \\
    &=
    \delta_{ij},
\end{aligned}
\]
while all other pairings among the images of
\(p_i,q_i,r_a\) vanish. Therefore, by bilinearity,
\begin{equation}
    T(v)^TJ_nT(w)
    =
    \overline\omega(v,w)
    \qquad
    \text{for every }v,w\in\overline V.
    \label{eq:T-isometry}
\end{equation}

Finally, define
\[
    L
    :=
    T\circ\pi:
    \mathbb{F}_2^m
    \longrightarrow
    \mathbb{F}_2^{2n}.
\]
Because \(T\) is injective,
\[
\begin{aligned}
    Lc=0
    &\Longleftrightarrow
    T(\pi(c))=0
    \\
    &\Longleftrightarrow
    \pi(c)=0
    \\
    &\Longleftrightarrow
    c\in K.
\end{aligned}
\]
Hence
\[
    \ker L=K.
\]

It remains to verify the anticommutation matrix. For arbitrary
\(c,d\in\mathbb{F}_2^m\), Eqs.~\eqref{eq:T-isometry} and
\eqref{eq:quotient-alternating-form} give
\[
\begin{aligned}
    (Lc)^TJ_n(Ld)
    &=
    T(\pi(c))^TJ_nT(\pi(d))
    \\
    &=
    \overline\omega(\pi(c),\pi(d))
    \\
    &=
    c^T\Sigma d.
\end{aligned}
\]
Since this equality holds for every \(c,d\in\mathbb{F}_2^m\), we
conclude that
\[
    L^TJ_nL=\Sigma.
\]
This completes the construction.

To show $n\leq m$, let
\[
    k := \dim K.
\]
Since $\Sigma \in \mathbb{F}_2^{m\times m}$ and
$\operatorname{rank}\Sigma = 2q$, the rank--nullity theorem gives
\[
\begin{aligned}
    \dim\ker\Sigma
    &= m-\operatorname{rank}\Sigma \\
    &= m-2q.
\end{aligned}
\]
By the definition of $s$,
\[
\begin{aligned}
    s
    &= \dim\ker\Sigma-\dim K \\
    &= m-2q-k.
\end{aligned}
\]
Substituting this expression into $n=q+s$, we obtain
\[
\begin{aligned}
    n
    &= q+s \\
    &= q+(m-2q-k) \\
    &= m-q-k.
\end{aligned}
\]
Equivalently,
\[
\begin{aligned}
    m-n
    &= q+k \\
    &= \frac{1}{2}\operatorname{rank}\Sigma+\dim K.
\end{aligned}
\]
Because $q$ and $k$ are nonnegative integers,
\[
    m-n \geq 0.
\]
Therefore,
\[
    n\leq m.
\]

Equality holds if and only if
\[
\begin{aligned}
    n=m
    &\Longleftrightarrow m-n=0 \\
    &\Longleftrightarrow q+k=0 \\
    &\Longleftrightarrow q=0
       \text{ and }
       k=0.
\end{aligned}
\]
Since
\[
\begin{aligned}
    q=0
    &\Longleftrightarrow \operatorname{rank}\Sigma=0
     \Longleftrightarrow \Sigma=0,
\end{aligned}
\]
and
\[
    k=0
    \Longleftrightarrow
    \dim K=0
    \Longleftrightarrow
    K=\{0\},
\]
we conclude that
\[
    n=m
    \quad\Longleftrightarrow\quad
    \Sigma=0
    \text{ and }
    K=\{0\}.
\]
Thus equality occurs precisely when the indexed Pauli family is both commuting and relation free.

In every other case,
\[
    \operatorname{rank}\Sigma>0
    \quad\text{or}\quad
    \dim K>0,
\]
so that
\[
    n<m.
\]
More explicitly, the reduction in the number of qubits is
\[
    m-n
    =
    \frac{1}{2}\operatorname{rank}\Sigma+\dim K.
\]
\end{proof}

\begin{proposition}[Universal relation-free realization]
\label{prop:universal-pauli-graph}
Every symmetric zero-diagonal matrix $\Sigma\in\F_2^{m\times m}$ has a projectively independent Pauli realization.  Explicitly, on $m$ qubits define
\begin{equation}
        \widehat P_i
        :=X_i\prod_{j<i}Z_j^{\Sigma_{ji}}.
        \label{eq:universal-pauli-graph-construction}
\end{equation}
Then the anticommutation matrix is $\Sigma$ and the relation kernel is zero.
\end{proposition}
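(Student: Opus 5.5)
The plan is to read off the label matrix of the family $\{\widehat P_i\}$ and then verify the two claims separately, both by elementary binary linear algebra. Using Definition~\ref{pauli}, $\widehat P_i=X_i\prod_{j<i}Z_j^{\Sigma_{ji}}$ acts on qubit $i$ by $X$ and on qubits $j<i$ by $Z^{\Sigma_{ji}}$, with no qubit carrying both an $X$ and a $Z$ factor. Indeed, the $Z$ factors sit only on indices $j<i$, while the $X$ sits on $i$. Hence no $i^{x_jz_j}$ phase arises, and $\widehat P_i=P_m(\hat a_i)$ is Hermitian with label $\hat a_i=(e_i,\,Te_i)$. Here $T$ is the strictly upper-triangular matrix of Eq.~\eqref{eq:main-triangular-sigma}, with $T_{ji}=\Sigma_{ji}$ for $j<i$ and $T_{ji}=0$ otherwise. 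So the label matrix is $\widehat A=\begin{pmatrix}I_m\\ T\end{pmatrix}$, as in Eq.~\eqref{eq:main-universal-source-matrix}.

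The anticommutation matrix is then computed directly:
\begin{equation*}
\widehat A^TJ_m\widehat A=\begin{pmatrix}I_m & T^T\end{pmatrix}\begin{pmatrix}T\\ I_m\end{pmatrix}=T+T^T .
\end{equation*}
Because $\Sigma$ is symmetric with zero diagonal, and $T$ collects exactly its strictly upper entries, we get $T+T^T=\Sigma$ over $\F_2$. The diagonal vanishes on both sides, and an off-diagonal pair $(j,i)$ with $j<i$ receives $\Sigma_{ji}$ from $T$ and $\Sigma_{ij}=\Sigma_{ji}$ from $T^T$. One can also check this directly from the commutation rule \eqref{eq:commutation-rule} as a sanity check. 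For $j<i$, the operators $\widehat P_i$ and $\widehat P_j$ overlap nontrivially only where $X_j$ meets $Z_j^{\Sigma_{ji}}$, since the $Z$ factors of $\widehat P_j$ lie on qubits $<j<i$ and commute with everything in $\widehat P_i$ there. This gives sign $(-1)^{\Sigma_{ji}}$.

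For the relation kernel, suppose $\widehat Ac=0$. The top block is $I_m c=c$, so $c=0$. Thus $\ker\widehat A=\{0\}$, and the products $\prod_i\widehat P_i^{y_i}$ are pairwise projectively distinct, which is the projective independence claimed. Equivalently, the $X$-part of any nonempty product is the indicator $y\ne0$, so it is never proportional to $I$.

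There is no real obstacle here. The only point needing care is the phase convention of Definition~\ref{pauli}: confirming that the displayed product really equals the Hermitian representative $P_m(\hat a_i)$ up to no extra phase. This holds because $X$ and $Z$ factors never share a qubit within a single $\widehat P_i$. In any case, the kernel and anticommutation statements are projective and insensitive to such phases.
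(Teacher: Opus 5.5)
Your proposal is correct and follows the paper's argument: you read off the label $(e_i,Te_i)$, evaluate the symplectic pairing (as $\widehat A^TJ_m\widehat A=T+T^T=\Sigma$ in matrix form, where the paper checks the entry $e_i\cdot z_j+z_i\cdot e_j=\Sigma_{ij}$ for $i<j$), and conclude $\ker\widehat A=\{0\}$ from the identity $X$-block. Your additional check that no qubit carries both an $X$ and a $Z$ factor, so no $i^{x_jz_j}$ phase arises, is a harmless extra the paper leaves implicit.
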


\begin{proof}
The binary label of $\widehat P_i$ is $(e_i,z_i)$ with
\[
        z_i=\sum_{j<i}\Sigma_{ji}e_j.
\]
For $i<j$,
\[
        (e_i,z_i)^TJ_m(e_j,z_j)
        =e_i\cdot z_j+z_i\cdot e_j
        =\Sigma_{ij}.
\]
Thus the pairwise anticommutation data are correct.  The $X$ half of the label matrix is the identity $I_m$, so the columns are linearly independent and the kernel is zero.
\end{proof}

This relation-free realization supplies the source Hamiltonian underlying HDQI.

\subsection{The source-to-target map is a central quotient}
\label{subsec:graph-kernel-transduction-consequences}

\begin{proposition}[The induced map is a central quotient]
\label{thm:induced-map-central-quotient}
Assume
\begin{equation}
        \Sigma_A=A^TJ_{n_A}A
        =B^TJ_{n_B}B=:\Sigma,
        \qquad
        K_A\subseteq K_B.
        \label{eq:graph-kernel-common-sigma}
\end{equation}
Define
\begin{equation}
        M:W_A\longrightarrow W_B,
        \qquad
        M(Az):=Bz.
        \label{eq:graph-kernel-induced-map}
\end{equation}
Then we have the map $M$ is well defined and surjective.  Moreover:
\begin{enumerate}[label=(\roman*)]
    \item $M$ preserves the symplectic pairing:
    \begin{equation}
        (Mu)^TJ_{n_B}(Mv)
        =u^TJ_{n_A}v
        \quad\text{for all }u,v\in W_A;
        \label{eq:M-preserves-pairing}
    \end{equation}
    \item the map
    \begin{equation}
        \psi:K_B/K_A\longrightarrow\ker M,
        \qquad
        \psi(c+K_A):=Ac,
        \label{eq:ker-M-quotient-code}
    \end{equation}
    is an isomorphism, and $\ker M\subseteq\rad(W_A)$;
    \item $M(\rad(W_A))=\rad(W_B)$ and the induced map
    \begin{equation}
        \overline M:
        W_A/\rad(W_A)
        \longrightarrow
        W_B/\rad(W_B),
        \qquad
        \overline M(u+\rad(W_A))
        :=Mu+\rad(W_B),
        \label{eq:nondegenerate-core-isomorphism}
    \end{equation}
    is a symplectic isomorphism;
    \item
    \begin{equation}
        \dim K_B-\dim K_A
        =\dim\ker M
        =\rank A-\rank B
        =\dim\rad(W_A)-\dim\rad(W_B).
        \label{eq:independence-loss-identities}
    \end{equation}
\end{enumerate}
\end{proposition}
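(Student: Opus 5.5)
Everything will follow from elementary linear algebra applied to $\Sigma=A^TJ_{n_A}A=B^TJ_{n_B}B$ and $K_A\subseteq K_B$, using Proposition~\ref{thm:graph-kernel-exact-sequence} and Corollary~\ref{cor:graph-kernel-dimension-accounting} for each family separately. First, well-definedness: if $Az=Az'$ then $z-z'\in K_A\subseteq K_B$, so $Bz=Bz'$. Linearity is immediate, and surjectivity holds because every $w\in W_B$ is $Bz=M(Az)$ for some $z$. For (i), write $u=Az$, $v=Az'$; then $(Mu)^TJ_{n_B}(Mv)=z^TB^TJ_{n_B}Bz'=z^T\Sigma z'=u^TJ_{n_A}v$.

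For (ii), the map $c\mapsto Ac$ on $K_B$ lands in $\ker M$, since $M(Ac)=Bc=0$. It is surjective onto $\ker M$: if $M(Az)=0$ then $Bz=0$, so $z\in K_B$. Its kernel is exactly $K_A$, so it descends to the isomorphism $\psi$. For the inclusion $\ker M\subseteq\rad(W_A)$, take $c\in K_B$. Then $c\in K_{\Sigma}$ by Proposition~\ref{prop:graph-kernel-central-products} applied to $B$. Applying the same proposition to $A$ (same $\Sigma$) gives $Ac\in\rad(W_A)$.

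For (iii), the first step is $M(\rad(W_A))\subseteq\rad(W_B)$: if $u\in\rad(W_A)$, then for any $w=Mv\in W_B$, part (i) gives $(Mu)^TJ w=u^TJv=0$. For the reverse inclusion, take $g\in\rad(W_B)$. By Proposition~\ref{thm:graph-kernel-exact-sequence} for $B$, $g=Bc$ with $c\in K_\Sigma$. The same proposition for $A$ gives $Ac\in\rad(W_A)$, and $M(Ac)=g$. Hence $\overline M$ is well defined and surjective. It preserves the induced nondegenerate forms by (i). Injectivity follows from nondegeneracy: if $Mu\in\rad(W_B)$, then for all $v$ we have $u^TJv=(Mu)^TJ(Mv)=0$, so $u\in\rad(W_A)$. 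Alternatively, a form-preserving map between nondegenerate spaces is automatically injective.

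For (iv), (ii) gives $\dim K_B-\dim K_A=\dim\ker M$. Rank–nullity for the surjection $M:W_A\to W_B$ gives $\dim\ker M=\rank A-\rank B$. Then Corollary~\ref{cor:graph-kernel-dimension-accounting}, $\dim\rad(W_L)=\rank L-\rank\Sigma$ with the common $\Sigma$, turns this into $\dim\rad(W_A)-\dim\rad(W_B)$. I expect no real obstacle. The only delicate point is the reverse inclusion in (iii), and it is resolved by the surjectivity statement of Proposition~\ref{thm:graph-kernel-exact-sequence} together with the fact that $K_\Sigma$ is shared by the two families.
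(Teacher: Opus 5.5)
Your proof is correct and follows the same route as the paper's. The steps match: well-definedness and surjectivity, pairing preservation, the isomorphism $\psi$, the radical correspondence together with injectivity of $\overline M$, and rank--nullity plus Corollary~\ref{cor:graph-kernel-dimension-accounting} for (iv).

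The only difference is minor. The paper gets both $\ker M\subseteq\rad(W_A)$ and $\rad(W_B)\subseteq M(\rad(W_A))$ directly from (i) and the surjectivity of $M$: any $M$-preimage of a radical vector is radical, which is exactly your injectivity argument. So the step you flag as delicate does not actually need Proposition~\ref{thm:graph-kernel-exact-sequence}.
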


\begin{proof}
\emph{Step 1: well-definedness and surjectivity.}
If $Az=Az'$, then $z+z'\in K_A\subseteq K_B$, so $Bz=Bz'$.  Hence $M(Az)$ does not depend on the representative.  Every $v\in W_B$ has the form $v=Bz=M(Az)$, so $M$ is surjective.

\emph{Step 2: preservation of the symplectic pairing.}
Write $u=Ac$ and $v=Ac'$.  Then
\[
\begin{aligned}
        (Mu)^TJ_{n_B}(Mv)
        &=(Bc)^TJ_{n_B}(Bc')\\
        &=c^TB^TJ_{n_B}Bc'\\
        &=c^T\Sigma c'\\
        &=c^TA^TJ_{n_A}Ac'\\
        &=u^TJ_{n_A}v.
\end{aligned}
\]

\emph{Step 3: identify the kernel with the quotient code.}
First, $\psi$ is well defined: if $c+K_A=c'+K_A$, then $c+c'\in K_A$, so $Ac=Ac'$.  If $c\in K_B$, then
\[
        M(Ac)=Bc=0,
\]
so the image lies in $\ker M$.  Conversely, if $u=Ac\in\ker M$, then $Bc=M(Ac)=0$, so $c\in K_B$ and $u=\psi(c+K_A)$.  Finally, $\psi(c+K_A)=0$ exactly when $Ac=0$, or $c\in K_A$, so the quotient class is zero.  Hence $\psi$ is an isomorphism.

If $u\in\ker M$, then for every $v\in W_A$,
\[
        u^TJ_{n_A}v
        =(Mu)^TJ_{n_B}(Mv)=0.
\]
Since $u\in W_A$, this proves $u\in\rad(W_A)$.  Thus every direction collapsed by the target map is central in the source.

\emph{Step 4: map the radicals onto each other.}
Take $r_A\in\rad(W_A)$.  For every $y\in W_B$, choose $v\in W_A$ with $Mv=y$.  Then
\[
        (Mr_A)^TJ_{n_B}y
        =(Mr_A)^TJ_{n_B}(Mv)
        =r_A^TJ_{n_A}v=0.
\]
Thus $Mr_A\in\rad(W_B)$, proving $M(\rad(W_A))\subseteq\rad(W_B)$.

Conversely, take $r_B\in\rad(W_B)$ and choose $u\in W_A$ with $Mu=r_B$.  For every $v\in W_A$,
\[
        u^TJ_{n_A}v
        =(Mu)^TJ_{n_B}(Mv)
        =r_B^TJ_{n_B}(Mv)=0.
\]
Hence $u\in\rad(W_A)$ and $r_B=Mu$.  Therefore
\begin{equation}
        M(\rad(W_A))=\rad(W_B).
        \label{eq:M-radical-surjective}
\end{equation}

\emph{Step 5: prove the quotient map is well defined and injective.}
Equation~\eqref{eq:M-radical-surjective} makes $\overline M$ well defined and surjective.  To prove injectivity, suppose
\[
        \overline M(u+\rad(W_A))
        =\rad(W_B).
\]
This means $Mu\in\rad(W_B)$.  For every $v\in W_A$,
\[
        u^TJ_{n_A}v
        =(Mu)^TJ_{n_B}(Mv)=0,
\]
because $Mu$ is radical.  Since $u\in W_A$, this implies $u\in\rad(W_A)$.  Therefore $u+\rad(W_A)$ is the zero quotient class, and $\ker\overline M=0$.  The map is injective.  Pairing preservation descends to the quotients, so it is a symplectic isomorphism.

\emph{Step 6: dimensions.}
The quotient isomorphism gives
\[
        \dim\ker M=\dim K_B-\dim K_A.
\]
Rank--nullity for the surjective map $M:W_A\to W_B$ gives
\[
        \dim\ker M=\dim W_A-\dim W_B=\rank A-\rank B.
\]
Finally, Corollary~\ref{cor:graph-kernel-dimension-accounting} applied to $A$ and $B$ with the same $\rank\Sigma$ gives the radical-dimension identity.
\end{proof}

Source and target have isomorphic nondegenerate symplectic quotients.
The map removes only central source labels, whose independent directions
are described by $K_B/K_A$; scalar phases are treated separately.

\section{Quotient codes, relative decoding, distance design, and the mapping hierarchy}
\label{app:relative}\label{sec:relative-decoding}\label{app:mapping}

\subsection{Quotient code and decoding goal}

When $K_A\subseteq K_B$, the map $B$ descends to
\begin{equation}
        B:\F_2^m/K_A\to \im B,
        \qquad
        B(z+K_A)=Bz.
\end{equation}
Its kernel is $K_B/K_A$.  The decoder should recover
\begin{equation}
        Bz\longmapsto Az,
        \label{eq:relative-decoding-goal}
\end{equation}
which is equivalent to recovering the class $z+K_A$, not the original representative $z$.

For $u\in\im A$, define the source-generator weight
\begin{equation}
        w_A(u):=\min_{z:Az=u}|z|.
\end{equation}
If $u=Az$, then all representatives of $u$ are $z+s$ with $s\in K_A$, so
\begin{equation}
        w_A(Az)=\min_{s\in K_A}|z+s|.
\end{equation}
The degree-$D$ support is therefore
\begin{equation}
        \mathcal S_D(A)=\{Az: |z|\le D\}=\{u\in\im A:w_A(u)\le D\}.
\end{equation}

\subsection{Relative distance and injectivity radius}

\begin{definition}[Relative distance]
The relative distance is
\begin{equation}
        d_{A\to B}:=
        \min_{\substack{0\ne u\in\im A\\ Mu=0}}w_A(u),
\end{equation}
with $d_{A\to B}=+\infty$ if $M$ is injective on all of $\im A$.
\end{definition}

This definition is intrinsic to the source-label space: it asks for the lightest nonzero source label that disappears after applying $M$.  The following equivalent formula is often easier to use because it is written directly in terms of the relation codes.

\begin{proposition}[Equivalent code formula]
\label{prop:relative-distance-code-formula}
Assuming $K_A\subseteq K_B$,
\begin{equation}
        d_{A\to B}=\min_{c\in K_B\setminus K_A}|c|,
\end{equation}
with $d_{A\to B}=+\infty$ when $K_A=K_B$.
\end{proposition}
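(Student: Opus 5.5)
The plan is to prove the two inequalities between $d_{A\to B}$, defined intrinsically as $\min\{w_A(u):0\ne u\in\im A,\ Mu=0\}$, and the code quantity $d^\ast:=\min_{c\in K_B\setminus K_A}|c|$. Both directions rest on one identification: $u=Ac$ satisfies $Mu=0$ exactly when $Bc=0$. This holds because $M(Ac)=Bc$ by definition, and $M$ is well defined under $K_A\subseteq K_B$ by Proposition~\ref{thm:induced-map-central-quotient}.

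First, I show $d^\ast\le d_{A\to B}$. Take any $0\ne u\in\im A$ with $Mu=0$, and choose a minimum-weight representative $c$ with $Ac=u$, so that $|c|=w_A(u)$. Then $Bc=M(Ac)=0$, hence $c\in K_B$. Moreover $Ac=u\ne0$, hence $c\notin K_A$. Therefore $d^\ast\le|c|=w_A(u)$. Minimizing over all such $u$ gives the inequality.

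Next, I show $d_{A\to B}\le d^\ast$. Take any $c\in K_B\setminus K_A$ and set $u:=Ac$. Then $u\in\im A$, and $u\ne0$ because $c\notin K_A$. Also $Mu=Bc=0$. By the definition of $w_A$ as a minimum over representatives, $w_A(u)\le|c|$, so $d_{A\to B}\le|c|$. Minimizing over $c$ gives the inequality. Equivalently, this step uses the isomorphism $\psi:K_B/K_A\to\ker M$ from Proposition~\ref{thm:induced-map-central-quotient}, together with $w_A(Ac)=\min_{s\in K_A}|c+s|$: the minimum over $\ker M\setminus\{0\}$ equals the minimum weight over the nonzero cosets $c+K_A\subseteq K_B\setminus K_A$, and these cosets exhaust $K_B\setminus K_A$.

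Finally, I treat the degenerate case. If $K_A=K_B$, then $K_B\setminus K_A=\emptyset$, and $\ker M\cong K_B/K_A=0$ means $M$ is injective. Both sides are then $+\infty$ by convention. Otherwise both minima range over nonempty sets, which the two-sided argument above shows correspond to each other. There is no real obstacle; the only care needed is that the nonzero condition $u\ne0$ translates exactly to $c\notin K_A$, and that well-definedness of $M$ requires the inclusion $K_A\subseteq K_B$.
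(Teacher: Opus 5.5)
Your proof is correct and follows essentially the same route as the paper's. Both translate $Mu=0$ into $c\in K_B$ and $u\ne0$ into $c\notin K_A$ for a representative $u=Ac$, then match weights via $w_A(Ac)=\min_{s\in K_A}|c+s|$; you organize this as two inequalities, while the paper phrases it as $K_B\setminus K_A$ being closed under translation by $K_A$. One harmless citation nuance: Proposition~\ref{thm:induced-map-central-quotient} also assumes $\Sigma_A=\Sigma_B$, but the parts you use (well-definedness of $M$ and the isomorphism $\psi$) need only $K_A\subseteq K_B$.
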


\begin{proof}
Take any source label $u\in\im A$ and choose a representative $c\in\F_2^m$ with $u=Ac$.  The condition that $u$ is killed by $M$ is
\[
        Mu=0
        \quad\Longleftrightarrow\quad
        M(Ac)=0
        \quad\Longleftrightarrow\quad
        Bc=0
        \quad\Longleftrightarrow\quad
        c\in K_B.
\]
The condition that the source label is nonzero is
\[
        u\ne0
        \quad\Longleftrightarrow\quad
        Ac\ne0
        \quad\Longleftrightarrow\quad
        c\notin K_A.
\]
Thus nonzero labels in $\ker M$ are represented precisely by vectors in $K_B\setminus K_A$.

The only remaining issue is the weight.  The same source label $Ac$ has all representatives $c+s$ with $s\in K_A$, so
\[
        w_A(Ac)=\min_{s\in K_A}|c+s|.
\]
If $c\in K_B\setminus K_A$, then $c+s$ remains in $K_B\setminus K_A$ for every $s\in K_A$.  Conversely, every element of $K_B\setminus K_A$ appears in this way by taking $s=0$.  Therefore minimizing the quotient weight over all nonzero labels in $\ker M$ is the same as minimizing the ordinary Hamming weight over $K_B\setminus K_A$.
\end{proof}

\begin{proposition}[Injectivity radius]
\label{thm:injectivity-radius}
The map $M$ is injective on $\mathcal S_D(A)$ if and only if
\begin{equation}
        2D<d_{A\to B}.
\end{equation}
\end{proposition}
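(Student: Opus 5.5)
The plan is to prove the two directions separately, relying throughout on the coset description $w_A(Az)=\min_{s\in K_A}|z+s|$ and on the code formula of Proposition~\ref{prop:relative-distance-code-formula}. The key observation is that $M$ is linear on $\im A$. Hence $Mu=Mu'$ holds exactly when $u+u'\in\ker M$. Moreover, $w_A$ is subadditive, $w_A(u+u')\le w_A(u)+w_A(u')$: take minimum-weight representatives $z,z'$ and note $|z+z'|\le|z|+|z'|$.

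\emph{Sufficiency ($2D<d_{A\to B}\Rightarrow$ injective).} Take $u,u'\in\mathcal S_D(A)$ with $Mu=Mu'$, and set $v:=u+u'$. Then $v\in\ker M$ and, by subadditivity, $w_A(v)\le 2D<d_{A\to B}$. The definition of $d_{A\to B}$ as the minimum of $w_A$ over nonzero elements of $\ker M$ then forces $v=0$, so $u=u'$. This also covers the case $d_{A\to B}=+\infty$, where $\ker M=0$.

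\emph{Necessity (injective $\Rightarrow 2D<d_{A\to B}$).} I argue the contrapositive: assume $d_{A\to B}\le 2D$, which means it is finite. By Proposition~\ref{prop:relative-distance-code-formula}, pick $c\in K_B\setminus K_A$ with $|c|=d_{A\to B}\le 2D$. Split its support into two disjoint parts, $c=z+z'$, with $|z|,|z'|\le D$; this is possible by putting $\lceil |c|/2\rceil\le D$ coordinates in $z$ and the rest in $z'$. Set $u=Az$ and $u'=Az'$. Both lie in $\mathcal S_D(A)$ because $|z|,|z'|\le D$. They are distinct, since $u+u'=Ac\ne0$ because $c\notin K_A$. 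Yet $Mu+Mu'=Bc=0$, so $Mu=Mu'$. Thus $M$ fails to be injective on $\mathcal S_D(A)$.

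There is no real obstacle. The only points needing care are bookkeeping: the $+\infty$ convention, and using linearity of $M$ (well-defined by $K_A\subseteq K_B$, as in Proposition~\ref{thm:induced-map-central-quotient}) so that collisions reduce to kernel elements.
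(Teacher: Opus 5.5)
Your proposal is correct and follows essentially the same route as the paper: sufficiency bounds $w_A(u+u')\le|z+z'|\le 2D$ to exclude a nonzero collision in $\ker M$. Necessity uses the code formula to split a residual relation $c\in K_B\setminus K_A$ with $|c|\le 2D$ into two halves of weight at most $D$, which gives distinct source labels with equal images under $M$.
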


\begin{proof}
We prove both directions.

First assume $2D<d_{A\to B}$.  Let $u,u'\in \mathcal S_D(A)$ satisfy $Mu=Mu'$.  By definition of $\mathcal S_D(A)$, choose representatives $z,z'$ with
\[
        u=Az,
        \qquad
        u'=Az',
        \qquad
        |z|\le D,
        \qquad
        |z'|\le D.
\]
If $u\ne u'$, then $u+u'$ is a nonzero element of $\ker M$, because
\[
        M(u+u')=Mu+Mu'=0.
\]
Its source-generator weight is at most the weight of the explicit representative $z+z'$:
\[
        w_A(u+u')=w_A(A(z+z'))\le |z+z'|\le |z|+|z'|\le 2D.
\]
This contradicts the definition of $d_{A\to B}$ as the minimum weight of a nonzero element of $\ker M$.  Hence $u=u'$, and $M$ is injective on $\mathcal S_D(A)$.

Conversely, suppose $d_{A\to B}\le 2D$.  By the code formula, there is a vector $c\in K_B\setminus K_A$ with $|c|\le2D$.  Split the support of $c$ into two disjoint subsets, each of size at most $D$, and write
\[
        c=z_1+z_2,
        \qquad
        |z_1|\le D,
        \qquad
        |z_2|\le D.
\]
Set $u_1=Az_1$ and $u_2=Az_2$.  Then $u_1,u_2\in \mathcal S_D(A)$.  They are distinct because
\[
        u_1+u_2=A(z_1+z_2)=Ac\ne0,
\]
where $Ac\ne0$ follows from $c\notin K_A$.  However their target labels agree:
\[
        Mu_1+Mu_2=M(u_1+u_2)=M(Ac)=Bc=0,
\]
where $Bc=0$ follows from $c\in K_B$.  Thus $Mu_1=Mu_2$ with $u_1\ne u_2$, so $M$ is not injective on $\mathcal S_D(A)$.
\end{proof}

\subsection{Source kernel affects relative distance}
\label{sec: kernel affect distance}

The ordinary target-code distance is
\begin{equation}
        d_B:=\min_{0\ne c\in K_B}|c|.
\end{equation}
Since $K_B\setminus K_A\subseteq K_B\setminus\{0\}$, one always has $d_{A\to B}\ge d_B$ when both are finite.

\begin{proposition}[Strict improvement]
\label{prop:strict-improvement}
If $d_B<\infty$, then
\begin{equation}
        d_{A\to B}>d_B
        \quad\Longleftrightarrow\quad
        \{c\in K_B: |c|=d_B\}\subseteq K_A.
\end{equation}
\end{proposition}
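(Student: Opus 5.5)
The plan is to work directly from the code formula of Proposition~\ref{prop:relative-distance-code-formula}, $d_{A\to B}=\min_{c\in K_B\setminus K_A}|c|$, together with the trivial inequality $d_{A\to B}\ge d_B$ already noted above. Both are minima of Hamming weight, one over $K_B\setminus K_A$ and one over $K_B\setminus\{0\}$. Since $0\in K_A$, the first index set is contained in the second. So the only question is whether the minimum over the larger set is attained inside the smaller one. I will prove the two implications separately, using throughout that $d_B<\infty$, so that the set $\mathcal C_{\min}:=\{c\in K_B:|c|=d_B\}$ is nonempty and consists of nonzero vectors.

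For ($\Leftarrow$), assume $\mathcal C_{\min}\subseteq K_A$. If $K_A=K_B$, then $d_{A\to B}=+\infty>d_B$ by convention, and we are done. Otherwise, take any $c\in K_B\setminus K_A$. Then $c\neq0$, so $|c|\ge d_B$. The case $|c|=d_B$ would put $c\in\mathcal C_{\min}\subseteq K_A$, which is a contradiction. Hence $|c|>d_B$, and since weights are integers, $|c|\ge d_B+1$. Taking the minimum over this finite set gives $d_{A\to B}\ge d_B+1>d_B$.

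For ($\Rightarrow$), argue by contrapositive. Suppose some $c\in\mathcal C_{\min}$ satisfies $c\notin K_A$. Then $c\in K_B\setminus K_A$, so $d_{A\to B}\le|c|=d_B$. Combined with $d_{A\to B}\ge d_B$, this gives $d_{A\to B}=d_B$, so the inequality is not strict.

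There is no real obstacle here. The only points needing care are the conventions. Note that $\mathcal C_{\min}$ is well defined and nonempty precisely because $d_B<\infty$. The case $K_A=K_B$ must be handled through the convention $d_{A\to B}=+\infty$. The expression $d_{A\to B}\ge d_B$ must also be read correctly when $d_{A\to B}=+\infty$. I will state these edge cases explicitly at the start.
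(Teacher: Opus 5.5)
Your proposal is correct and follows the same argument as the paper: for ($\Rightarrow$), a minimum-weight target relation outside $K_A$ forces $d_{A\to B}\le d_B$; for ($\Leftarrow$), every element of $K_B\setminus K_A$ then has weight strictly above $d_B$. Your explicit handling of the $K_A=K_B$ case through the $+\infty$ convention adds a little care that the paper leaves implicit.
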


\begin{proof}
If a minimum-weight target relation $c$ is not in $K_A$, then $c\in K_B\setminus K_A$ and $d_{A\to B}\le |c|=d_B$.  Together with $d_{A\to B}\ge d_B$, this gives equality.  Conversely, if every weight-$d_B$ target relation lies in $K_A$, then every element of $K_B\setminus K_A$ has weight strictly larger than $d_B$.
\end{proof}

\begin{property}[Residual distance is monotone under absorbing source relations]
\label{prop:relative-distance-monotonicity}
If
\begin{equation}
        K_A\subseteq K_A'\subseteq K_B,
\end{equation}
then
\begin{equation}
        d_{A'\to B}\ge d_{A\to B}.
        \label{eq:relative-distance-monotonicity}
\end{equation}
If $K_A'=K_B$, the distance jumps to $+\infty$.
\end{property}

\begin{proof}
The residual sets satisfy
\[
        K_B\setminus K_A'
        \subseteq
        K_B\setminus K_A.
\]
Taking a minimum over a smaller set cannot decrease the result.  When the two kernels are equal, the residual set is empty by definition.
\end{proof}

Enlarging the source kernel cannot decrease the relative distance.
Exploiting a larger degree requires efficient source preparation and
relative decoding.

\subsection{Fixed source: the largest achievable residual distance}
\label{subsec:linear-relative-distance-existence}

Assume a common anticommutation matrix $\Sigma$ and let $K_\Sigma:=\ker\Sigma$.  Appendix~\ref{app:graph-kernel-independence} proves that every compatible relation kernel lies between $K_A$ and $K_\Sigma$, and that every intermediate subspace is realizable at the unrestricted binary-Pauli level.  For $c\in K_\Sigma$, define the distance of its coset from $K_A$ by
\begin{equation}
        \operatorname{dist}(c,K_A)
        :=\min_{s\in K_A}|c+s|.
        \label{eq:extension-radius}
\end{equation}
Define the extension radius
\begin{equation}
        R_{\Sigma/A}
        :=\max_{c\in K_\Sigma}\operatorname{dist}(c,K_A).
        \label{eq:far-radical-coset}
\end{equation}

\begin{property}[Fixed source: exact optimum over compatible target kernels]
\label{prop:max-achievable-relative-distance}
If $K_A\subsetneq K_\Sigma$, then
\begin{equation}
        \max_{K_A\subsetneq K_B\subseteq K_\Sigma}
        d_{A\to B}
        =R_{\Sigma/A}.
        \label{eq:max-distance-equals-covering-radius}
\end{equation}
A one-dimensional extension
\begin{equation}
        K_B=K_A\oplus\Span\{c_\star\}
        \label{eq:one-dimensional-best-extension}
\end{equation}
attains the optimum whenever $c_\star+K_A$ is a farthest coset.
\end{property}

\begin{proof}
For any admissible target kernel,
\[
        d_{A\to B}
        =\min_{\substack{[c]\in K_B/K_A\\ {}[c]\ne0}}
          \operatorname{dist}(c,K_A).
\]
Every quotient class in $K_B/K_A$ is also a class in $K_\Sigma/K_A$, so this minimum cannot exceed the largest admissible coset distance $R_{\Sigma/A}$.

Choose $c_\star$ attaining the maximum and define the one-dimensional extension~\eqref{eq:one-dimensional-best-extension}.  The quotient $K_B/K_A$ then has exactly one nonzero class, $c_\star+K_A$.  Its quotient weight is $\operatorname{dist}(c_\star,K_A)=R_{\Sigma/A}$, so the upper bound is attained.
\end{proof}

For a source family, a target with $d_{A\to B}=\Theta(m)$ exists at the unrestricted binary-Pauli level exactly when $R_{\Sigma/A}=\Theta(m)$. 

\subsection{Fixed target: which source kernel achieves a threshold?}
\label{subsec:fixed-target-source-design}

For an integer threshold $t$, define
\begin{equation}
        K_B^{<t}
        :=\Span\{c\in K_B:|c|<t\}.
        \label{eq:fixed-target-low-weight-span}
\end{equation}

\begin{property}[Fixed target: exact low-weight-span criterion]
\label{prop:fixed-target-source-criterion}
There exists a proper source kernel $K_A\subsetneq K_B$ satisfying
\begin{equation}
        d_{A\to B}\ge t
        \label{eq:fixed-target-distance-threshold}
\end{equation}
if and only if
\begin{equation}
        K_B^{<t}\subsetneq K_B.
        \label{eq:fixed-target-span-proper}
\end{equation}
Whenever the condition holds, any subspace satisfying
\begin{equation}
        K_B^{<t}\subseteq K_A\subsetneq K_B
        \label{eq:fixed-target-optimal-distance}
\end{equation}
has residual distance at least $t$.
\end{property}

\begin{proof}
\emph{Necessity.}
If $d_{A\to B}\ge t$, every target relation of weight below $t$ must lie in $K_A$; otherwise it would be a residual relation below the threshold.  Because $K_A$ is a subspace, it contains the span of all such relations:
\[
        K_B^{<t}\subseteq K_A\subsetneq K_B.
\]
Thus the low-weight span is proper.

\emph{Sufficiency.}
Assume $K_B^{<t}\subsetneq K_B$ and choose any proper intermediate subspace~\eqref{eq:fixed-target-optimal-distance}; a hyperplane containing $K_B^{<t}$ always exists.  If $c\in K_B\setminus K_A$ had weight below $t$, then by definition $c\in K_B^{<t}\subseteq K_A$, a contradiction.  Hence every residual relation has weight at least $t$.
\end{proof}

\subsection{Mapping hierarchy: unitary equivalence and relation-changing maps}
\label{sec:mapping-hierarchy}
\label{subsec:mapping-hierarchy-summary}
\label{subsec:clifford-infinite-distance}

\begin{proposition}[Direct unitary conjugation transports every degree]
\label{prop:direct-unitary-all-degrees}
If $H_B=UH_AU^\dagger$, then for every polynomial $Q$,
\begin{equation}
        Q(H_B)=UQ(H_A)U^\dagger,
        \label{eq:unitary-polynomial-functional-calculus}
\end{equation}
and, on equal-dimensional doubled systems,
\begin{equation}
        \ket{\Psi_B^Q}
        =(U\otimes U^*)\ket{\Psi_A^Q}.
        \label{eq:normalized-unitary-choi-map}
\end{equation}
The Gibbs states obey $\rho_\beta(H_B)=U\rho_\beta(H_A)U^\dagger$.
\end{proposition}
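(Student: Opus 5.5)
The plan is to get all three claims from elementary functional calculus and the unitarity of $U\otimes U^*$. Throughout, $H_B=UH_AU^\dagger$ with $U$ unitary on an $n$-qubit space, so $n_A=n_B=n$.

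\textbf{Polynomial identity.} First I would show by induction on $\ell$ that $H_B^\ell=UH_A^\ell U^\dagger$. The case $\ell=0$ gives $I=UU^\dagger$. For the inductive step,
\[
H_B^{\ell+1}=UH_A^\ell U^\dagger\,UH_AU^\dagger=UH_A^{\ell+1}U^\dagger .
\]
By linearity in the coefficients $q_\ell$ this gives Eq.~\eqref{eq:unitary-polynomial-functional-calculus}. The same argument applied to the power series of the exponential, which converges in norm in finite dimension, gives $e^{-\beta H_B}=Ue^{-\beta H_A}U^\dagger$. Cyclicity of the trace then gives $\Tr e^{-\beta H_B}=\Tr e^{-\beta H_A}$, and hence the stated relation $\rho_\beta(H_B)=U\rho_\beta(H_A)U^\dagger$.

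\textbf{Choi-state identity.} The one nontrivial ingredient is the transpose trick for the unnormalized maximally entangled vector:
\[
(I\otimes X^T)\ket{\Phi_n}=(X\otimes I)\ket{\Phi_n},
\]
which I would check on the basis $\sum_x\ket{x}\ket{x}$. Taking $X=U^\dagger$ gives $X^T=U^*$, so
\[
(U^\dagger\otimes I)\ket{\Phi_n}=(I\otimes U^*)\ket{\Phi_n}.
\]
Then I would compute
\[
(Q(H_B)\otimes I)\ket{\Phi_n}=(UQ(H_A)U^\dagger\otimes I)\ket{\Phi_n}=(UQ(H_A)\otimes U^*)\ket{\Phi_n}=(U\otimes U^*)(Q(H_A)\otimes I)\ket{\Phi_n}.
\]
The last equality holds because $U^*$ on the reference factor commutes with operators acting only on the physical factor.

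\textbf{Normalization.} Since $U\otimes U^*$ is unitary, the norms agree. Alternatively, $\Tr Q(H_B)^2=\Tr Q(H_A)^2$ by cyclicity. In particular $Q(H_B)\neq0$ if and only if $Q(H_A)\neq0$, so dividing by the common norm yields Eq.~\eqref{eq:normalized-unitary-choi-map}.

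The only step that needs care is the transpose trick and keeping track of which conjugation, $U^*$ rather than $U^\dagger$ or $U^T$, lands on the reference register. Everything else is routine.
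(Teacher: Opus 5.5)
Your proposal is correct and follows essentially the same route as the paper's proof. Both arguments cancel adjacent $U^\dagger U$ factors in powers, extend by linearity and functional calculus to the exponential, move the rightmost $U^\dagger$ onto the reference register as $U^*$ via the transpose identity on $\ket{\Phi_n}$, and use unitarity of $U\otimes U^*$ for normalization. You additionally spell out the trace-cyclicity step for the partition function and note that $Q(H_B)\ne0$ if and only if $Q(H_A)\ne0$.
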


\begin{proof}
For every integer $\ell\ge0$,
\[
        H_B^\ell=(UH_AU^\dagger)^\ell=UH_A^\ell U^\dagger,
\]
because adjacent factors $U^\dagger U$ cancel.  Linearity gives the polynomial identity, and functional calculus gives the exponential identity.  For the Choi vector, use
\[
        (U^\dagger\otimes I)\ket{\Phi_n}
        =(I\otimes U^*)\ket{\Phi_n}
\]
to move the rightmost $U^\dagger$ to the reference register.  Unitary conjugation preserves the normalization.
\end{proof}

The operator identity $H_B=UH_AU^\dagger$ need not respect a chosen decomposition into indexed Pauli terms.  The relation-code hierarchy concerns the stronger termwise condition.

\begin{property}[Termwise unitary equivalence is the infinite-distance endpoint]
\label{prop:clifford-infinite-distance}
\label{cor:finite-distance-not-unitary}
Let
\begin{equation}
        \widetilde P_i^A:=\alpha_iP_A(a_i),
        \qquad
        \widetilde P_i^B:=\beta_iP_B(b_i).
        \label{eq:signed-pauli-generators-hierarchy}
\end{equation}
If a unitary satisfies
\begin{equation}
        \widetilde P_i^B
        =U\widetilde P_i^AU^\dagger
        \quad\text{for every }i,
        \label{eq:termwise-unitary-equivalence}
\end{equation}
then
\begin{equation}
        K_A=K_B,
        \qquad
        d_{A\to B}=+\infty,
        \label{eq:clifford-relative-distance-infinity}
\end{equation}
and every signed relation phase is preserved.  If $U$ is Clifford, there is an invertible symplectic matrix $S$ such that
\begin{equation}
        B=SA,
        \qquad
        M(Az)=S(Az).
        \label{eq:clifford-symplectic-label-map}
\end{equation}
Conversely,
\begin{equation}
        d_{A\to B}<\infty
        \quad\Longrightarrow\quad
        K_A\subsetneq K_B
        \quad\Longrightarrow\quad
        \text{no termwise unitary conjugation exists}.
        \label{eq:finite-distance-nontrivial-certificate}
\end{equation}
Every finite residual distance belongs to $\{1,\ldots,m\}$.
\end{property}

\begin{proof}
For a selection vector $c$, define the ordered signed products
\[
        R_L(c):=\prod_{i=1}^m(\widetilde P_i^L)^{c_i}.
\]
Termwise conjugation gives
\begin{equation}
        R_B(c)=UR_A(c)U^\dagger.
        \label{eq:unitary-relation-product-map}
\end{equation}
A unitary maps a scalar operator to the same scalar and a non-scalar operator to a non-scalar operator.  Thus $R_A(c)$ is scalar if and only if $R_B(c)$ is scalar, proving $K_A=K_B$.  If $R_A(c)=\zeta I$, then $R_B(c)=\zeta I$, so signed phases agree.  The residual set is empty, giving infinite distance.

For a Clifford, conjugation induces an invertible symplectic transformation on binary Pauli labels, giving $B=SA$ after matching the indexed terms.

Conversely, if the distance is finite, choose $c\in K_B\setminus K_A$.  The selected target product is scalar, while the selected source product has nonzero Pauli label $Ac$ and is non-scalar.  Equation~\eqref{eq:unitary-relation-product-map} is impossible.  Finally, every nonzero binary vector has Hamming weight between one and $m$.
\end{proof}

Finite relative distance implies a new target relation and therefore
rules out termwise unitary conjugation.

\subsection{About relative decoder}
\label{subsec:standard-vs-relative}

\begin{proposition}[Ordinary decoders as relative decoders]
\label{thm:ordinary-decoder-relative-decoder}
Let
\begin{equation}
    A\in\mathbb{F}_2^{r_A\times m},
    \qquad
    B\in\mathbb{F}_2^{r_B\times m},
\end{equation}
and assume
\begin{equation}
    K_A:=\ker A
    \subsetneq
    K_B:=\ker B.
\end{equation}
Define the ordinary target distance and the relative distance by
\begin{equation}
    d_B
    :=
    \min_{0\neq c\in K_B}|c|,
    \qquad
    d_{A\to B}
    :=
    \min_{c\in K_B\setminus K_A}|c|,
\end{equation}
Fix a degree bound
\(D\ge 0\).

\begin{enumerate}[label=(\roman*)]
    \item Suppose an ordinary target decoder
    \begin{equation}
        \operatorname{Dec}_B:\operatorname{im}B
        \longrightarrow \mathbb{F}_2^m
    \end{equation}
    satisfies
    \begin{equation}
        \operatorname{Dec}_B(Bz)=z
        \qquad
        \text{for every }z\in\mathbb{F}_2^m
        \text{ with }|z|\le D.
    \end{equation}
    Then
    \begin{equation}
        \operatorname{Dec}_{A\to B}^{\mathrm{std}}(y)
        :=
        A\operatorname{Dec}_B(y)
        \label{eq:standard-to-relative-decoder}
    \end{equation}
    is a valid relative decoder:
    \begin{equation}
        \operatorname{Dec}_{A\to B}^{\mathrm{std}}(Bz)
        =
        Az
        \qquad
        \text{for every }|z|\le D.
    \end{equation}
    In particular, an ordinary unique decoder certified by the target
    distance works whenever
    \begin{equation}
        2D<d_B.
    \end{equation}
    This construction does not exploit a possible separation
    \(d_{A\to B}>d_B\).

    \item For \(y\in\operatorname{im}B\), choose any minimum-weight
    representative
    \begin{equation}
        z^\star(y)
        \in
        \arg\min_{z':\,Bz'=y}|z'|,
        \label{eq:minimum-weight-representative}
    \end{equation}
    and define an ordinary minimum-weight decoder
    \begin{equation}
        \operatorname{Dec}_{A\to B}^{\mathrm{MW}}(y)
        :=
        Az^\star(y).
        \label{eq:minimum-weight-relative-decoder}
    \end{equation}
    If
    \begin{equation}
        2D<d_{A\to B},
        \label{eq:minimum-weight-relative-condition}
    \end{equation}
    then, for every \(z_0\in\mathbb{F}_2^m\) satisfying
    \(|z_0|\le D\),
    \begin{equation}
        \operatorname{Dec}_{A\to B}^{\mathrm{MW}}(Bz_0)
        =
        Az_0.
        \label{eq:minimum-weight-relative-correctness}
    \end{equation}
    Hence every minimum-weight representative of the fiber
    \(\{z':Bz'=Bz_0\}\) has the same source label \(Az_0\).
\end{enumerate}
\end{proposition}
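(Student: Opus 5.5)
Both parts reduce to short coset arguments, and we take them in order.

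For part (i), fix $z$ with $|z|\le D$. By hypothesis $\operatorname{Dec}_B(Bz)=z$, so applying $A$ gives $\operatorname{Dec}^{\mathrm{std}}_{A\to B}(Bz)=Az$. No further property of $A$ is needed. For the ``in particular'' clause we must exhibit a decoder with this property when $2D<d_B$. Injectivity of $z\mapsto Bz$ on $\{|z|\le D\}$ holds in that regime; this is the relation-free case $K_A=0$ of Proposition~\ref{thm:injectivity-radius}, or it follows directly since $z+z'\in K_B\setminus\{0\}$ would have weight at most $2D<d_B$. The unique-decoding map (for instance minimum weight) therefore returns $z$. The remark that this does not exploit $d_{A\to B}>d_B$ only observes that the hypothesis is stated in terms of $d_B$.

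For part (ii), fix $z_0$ with $|z_0|\le D$ and set $y=Bz_0$. Let $z^\star=z^\star(y)$ be any minimum-weight representative. Since $z_0$ lies in the fiber, $|z^\star|\le|z_0|\le D$. The difference $c:=z^\star+z_0$ satisfies $Bc=0$, so $c\in K_B$, and $|c|\le|z^\star|+|z_0|\le 2D<d_{A\to B}$. By the code formula of Proposition~\ref{prop:relative-distance-code-formula}, every element of $K_B\setminus K_A$ has weight at least $d_{A\to B}$. Hence $c\notin K_B\setminus K_A$, which forces $c\in K_A$.

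Therefore $Az^\star=Az_0+Ac=Az_0$. This is exactly \eqref{eq:minimum-weight-relative-correctness}, and because $z^\star$ was an arbitrary minimizer, it also gives the claim that every minimum-weight representative of the fiber has source label $Az_0$.

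Alternatively, one could argue that $Az^\star,Az_0\in\mathcal S_D(A)$ have the same image under $M$ and invoke injectivity from Proposition~\ref{thm:injectivity-radius}. The direct coset argument above is equivalent and shorter.

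There is essentially no obstacle. The one point that needs care is the bound $|z^\star|\le D$, which is where minimality of the representative is used. Without it, an arbitrary preimage of $y$ could have large weight and differ from $z_0$ by a residual relation in $K_B\setminus K_A$. The case $K_A=K_B$ is excluded by hypothesis, but the argument would still go through with $d_{A\to B}=+\infty$.
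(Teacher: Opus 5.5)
Your proposal is correct and matches the paper's proof. Part (i) is the same composition with $A$. In part (ii) you use the same minimality bound $|z^\star|\le|z_0|\le D$ to place $z^\star+z_0$ in $K_B$ with weight at most $2D<d_{A\to B}$, and you conclude directly that it lies in $K_A$, where the paper phrases the identical step as a contradiction. The appeal to the code formula is unnecessary, since in this statement $d_{A\to B}$ is already defined as $\min_{c\in K_B\setminus K_A}|c|$.
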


\begin{proof}
For part~\emph{(i)}, let \(z\in\mathbb{F}_2^m\) satisfy \(|z|\le D\).
Using the assumed correctness of the ordinary decoder,
\[
\begin{aligned}
    \operatorname{Dec}_{A\to B}^{\mathrm{std}}(Bz)
    &=
    A\operatorname{Dec}_B(Bz)
    \\
    &=
    Az.
\end{aligned}
\]
Thus composing an ordinary decoder with the linear map \(A\) always
produces a valid relative decoder on the same promised error set.
However, the distance-based correctness guarantee for ordinary unique
decoding is only
\[
    2D<d_B.
\]
Therefore, this construction inherits the ordinary decoding radius and
does not use the potentially larger relative distance.

For part~\emph{(ii)}, let
\[
    y=Bz_0,
    \qquad
    |z_0|\le D,
\]
and let \(z^\star=z^\star(y)\) be any minimizer in
Eq.~\eqref{eq:minimum-weight-representative}. Since \(z_0\) is a
feasible representative of \(y\), minimum-weight optimality gives
\begin{equation}
    |z^\star|
    \le
    |z_0|
    \le
    D.
    \label{eq:minimizer-weight-bound}
\end{equation}

Because \(Bz^\star=Bz_0\), we have
\[
    B(z^\star+z_0)=0,
\]
and therefore
\begin{equation}
    z^\star+z_0\in K_B.
    \label{eq:difference-in-target-kernel}
\end{equation}
Suppose, for contradiction, that
\[
    Az^\star\neq Az_0.
\]
Then
\[
    A(z^\star+z_0)\neq0,
\]
so
\begin{equation}
    z^\star+z_0\notin K_A.
    \label{eq:difference-not-source-kernel}
\end{equation}
Combining
Eqs.~\eqref{eq:difference-in-target-kernel} and
\eqref{eq:difference-not-source-kernel} gives
\[
    z^\star+z_0\in K_B\setminus K_A.
\]
By the definition of the relative distance,
\[
\begin{aligned}
    d_{A\to B}
    &\le
    |z^\star+z_0|
    \\
    &\le
    |z^\star|+|z_0|
    \\
    &\le
    2D,
\end{aligned}
\]
where the last inequality follows from
Eq.~\eqref{eq:minimizer-weight-bound}. This contradicts
\(2D<d_{A\to B}\). Hence
\[
    Az^\star=Az_0,
\]
which proves
Eq.~\eqref{eq:minimum-weight-relative-correctness}.
\end{proof}

An ordinary bounded-distance decoder supplies relative decoding within
its certified radius through Eq.~\eqref{eq:standard-to-relative-decoder}.
A minimum-weight representative gives a valid relative decoder throughout
$2D<d_{A\to B}$ by Eq.~\eqref{eq:minimum-weight-relative-decoder}, but its
computation need not be efficient.

\section{Circuit workflow in detail}
\label{subsec:circuit-workflow}

The operational input to the transduction circuit is the physical polynomial-filtered Choi state, not an abstract Bell-label state.  Thus the first step is a Bell transform on the source physical-reference pair.  Recall
\begin{equation}
        \ket{\Psi_A^Q}_{A\bar A}
        =\frac{(Q_D(H_A)\otimes I)\ket{\Phi_{n_A}}}
        {\norm{(Q_D(H_A)\otimes I)\ket{\Phi_{n_A}}}}.
\end{equation}
Applying the source Bell transform gives the source label representation
\begin{equation}
        U_{\Bell,A}\ket{\Psi_A^Q}_{A\bar A}
        =\ket{\widetilde\Psi_A^Q}_S
        =\frac{1}{N_A}\sum_{u\in \mathcal S_D(A)}C_A(u)\ket{u}_S.
        \label{eq:source-bell-transform-start}
\end{equation}
Only after this Bell transform do we run the relative-decoding transduction.  Use three computational registers: the source Bell-label register $S$, a target Bell-label register $T$, and a recovery register $R$.  The target register is initially blank; at the end it is converted back to a physical target Choi state by the inverse target Bell transform.  The complete reversible workflow is
\begin{equation}
\begin{aligned}
    \ket{\Psi_A^Q}_{A\bar A}\ket{0}_T\ket{0}_R
    &\xrightarrow{\ U_{\Bell,A}\ }
    \frac{1}{N_A}\sum_{u\in \mathcal S_D(A)} C_A(u)\ket{u}_S\ket{0}_T\ket{0}_R,
    \label{eq:workflow-bell-start}
\end{aligned}
\end{equation}
\begin{equation}
\begin{aligned}
    &\xrightarrow{\ U_M\ }
    \frac{1}{N_A}\sum_{u} C_A(u)\ket{u}_S\ket{Mu}_T\ket{0}_R,
    \\
    &\xrightarrow{\ U_\chi\ }
    \frac{1}{N_A}\sum_{u} \chi(u)C_A(u)\ket{u}_S\ket{Mu}_T\ket{0}_R,
    \\
    &\xrightarrow{\ U_{\rm dec}\ }
    \frac{1}{N_A}\sum_{u} \chi(u)C_A(u)\ket{u}_S\ket{Mu}_T\ket{u}_R,
    \\
    &\xrightarrow{\ \operatorname{XOR}_{R\to S}\ }
    \frac{1}{N_A}\sum_{u} \chi(u)C_A(u)\ket{0}_S\ket{Mu}_T\ket{u}_R,
    \\
    &\xrightarrow{\ U_{\rm dec}^\dagger\ }
    \ket{0}_S\left(\frac{1}{N_A}\sum_{u} \chi(u)C_A(u)\ket{Mu}_T\right)\ket{0}_R,
    \\
    &\xrightarrow{\ U_{\Bell,B}^\dagger\ }
    \ket{0}_S\ket{\Psi_B^Q}_{B\bar B}\ket{0}_R.
    \label{eq:workflow-physical-end}
\end{aligned}
\end{equation}
Here $U_M$ computes $u\mapsto Mu$, $U_\chi$ applies the phase $\chi(u)$, and $U_{\rm dec}$ coherently recovers $u$ from the target label $Mu$.  The source register can be erased because the decoder supplies a second copy of $u$ in the recovery register.  After applying $U_{\rm dec}^\dagger$, only the target Bell-label superposition remains; by coefficient matching, this state is exactly $\ket{\widetilde\Psi_B^Q}$.  The final inverse Bell transform therefore prepares the physical target polynomial-filtered Choi state $\ket{\Psi_B^Q}$.


\section{DQI and HDQI pilot states as source polynomial states}
\label{app:dqi-hdqi-relation-free}

This appendix identifies the pilot states of DQI and HDQI with
polynomial-filtered states of explicit relation-free source
Hamiltonians. The result explains both constructions as special cases
of Hamiltonian transduction. We write
\begin{equation}
    Q_D(x)=\sum_{k=0}^{D}q_kx^k
\end{equation}
for a real polynomial of degree at most \(D\).

\subsection{The DQI pilot state}

The DQI source Hamiltonian is the product Hamiltonian
\begin{equation}
    H_{\rm sel}:=\sum_{i=1}^{m}Z_i.
\end{equation}
For \(x\in\mathbb F_2^m\),
\begin{equation}
    H_{\rm sel}\ket{x}=(m-2|x|)\ket{x}.
\end{equation}
Define
\begin{equation}
    f_Q(x):=Q_D(m-2|x|)
\end{equation}
and its Boolean Fourier coefficients
\begin{equation}
    \widehat f_Q(z)
    :=
    2^{-m}\sum_{x\in\mathbb F_2^m}
    f_Q(x)(-1)^{x\cdot z}.
\end{equation}

The unnormalized DQI pilot state is
\begin{equation}
    \ket{\widetilde R_Q^{\rm DQI}}
    :=
    \sum_{z\in\mathbb F_2^m}
    \widehat f_Q(z)\ket{z}.
    \label{eq:dqi-pilot}
\end{equation}
Because \(f_Q(x)\) depends only on \(|x|\), the coefficient
\(\widehat f_Q(z)\) depends only on \(|z|\). Moreover,
\(\widehat f_Q(z)=0\) whenever \(|z|>D\). Hence the pilot has the
weighted-Dicke form
\begin{equation}
    \ket{\widetilde R_Q^{\rm DQI}}
    =
    \sum_{k=0}^{D}
    \beta_k
    \sum_{|z|=k}\ket{z}
\end{equation}
for suitable real coefficients \(\beta_k\). This is the pilot state
used in DQI~\cite{JordanEtAlDQI}.

\begin{proposition}[DQI pilot as a source polynomial state]
\label{prop:dqi-pilot-source}
Assume $Q_D(H_{\rm sel})\ne0$.
Let
\begin{equation}
    \ket{\Psi_{\rm sel}^Q}
    :=
    \frac{(Q_D(H_{\rm sel})\otimes I)\ket{\Phi_m}}
    {\norm{(Q_D(H_{\rm sel})\otimes I)\ket{\Phi_m}}}.
\end{equation}
Let \(U_{\Bell,Z}\) denote the Bell-label isometry on the diagonal
Pauli Choi subspace (the zero $X$-label register is suppressed), defined by
\begin{equation}
    U_{\Bell,Z}
    (Z(z)\otimes I)\ket{\Phi_m}
    =
    \ket{z}.
\end{equation}
Then
\begin{equation}
    U_{\Bell,Z}\ket{\Psi_{\rm sel}^Q}
    =
    \frac{\ket{\widetilde R_Q^{\rm DQI}}}
    {\norm{\widetilde R_Q^{\rm DQI}}}.
\end{equation}
Equivalently, if \(V_{\rm copy}\ket{x}=\ket{x}\ket{x}\), then
\begin{equation}
    V_{\rm copy}H^{\otimes m}
    \ket{\widetilde R_Q^{\rm DQI}}
    =
    (Q_D(H_{\rm sel})\otimes I)\ket{\Phi_m}.
\end{equation}
\end{proposition}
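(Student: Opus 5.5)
The plan is to compute everything in the computational basis, since $H_{\rm sel}$ is diagonal. First I expand the filtered Choi vector. Because $H_{\rm sel}\ket{x}=(m-2|x|)\ket{x}$,
\[
(Q_D(H_{\rm sel})\otimes I)\ket{\Phi_m}
=2^{-m/2}\sum_{x\in\F_2^m} f_Q(x)\ket{x}\ket{x}.
\]
Next I rewrite the diagonal operator $Q_D(H_{\rm sel})=\sum_x f_Q(x)\ket{x}\bra{x}$ in the $Z$-string basis. The Boolean Fourier inversion formula gives
\[
f_Q(x)=\sum_z\widehat f_Q(z)(-1)^{x\cdot z},
\]
so that
\[
Q_D(H_{\rm sel})=\sum_z \widehat f_Q(z)Z(z).
\]
Applying this to $\ket{\Phi_m}$ and using the defining action of $U_{\Bell,Z}$ yields $U_{\Bell,Z}(Q_D(H_{\rm sel})\otimes I)\ket{\Phi_m}=\ket{\widetilde R_Q^{\rm DQI}}$.

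For the normalization, the Choi vectors $(Z(z)\otimes I)\ket{\Phi_m}$ are orthonormal, so $U_{\Bell,Z}$ is an isometry on their span. Hence the norms of the two vectors agree, and dividing gives the first claim; this requires only $Q_D(H_{\rm sel})\neq0$. The support restriction $|z|\le D$ and the weighted-Dicke form are not needed here. They follow separately: $Q_D(H_{\rm sel})$ is a degree-$D$ polynomial in the $Z_i$, which reduces, using $Z_i^2=I$, to $Z$-strings of weight at most $D$, and permutation symmetry then makes $\widehat f_Q(z)$ depend only on $|z|$.

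For the equivalent form, I compute directly:
\[
H^{\otimes m}\ket{\widetilde R_Q^{\rm DQI}}
=2^{-m/2}\sum_x\Bigl(\sum_z\widehat f_Q(z)(-1)^{x\cdot z}\Bigr)\ket{x}
=2^{-m/2}\sum_x f_Q(x)\ket{x}.
\]
Applying $V_{\rm copy}$ then gives $2^{-m/2}\sum_x f_Q(x)\ket{x}\ket{x}$, which matches the first display exactly, with no normalization issue since both sides are unnormalized.

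The main obstacle is keeping the Fourier normalization conventions consistent. The factor $2^{-m}$ in $\widehat f_Q$ and the factor $2^{-m/2}$ from $H^{\otimes m}$ and from $\ket{\Phi_m}$ must combine correctly. I would check the bookkeeping once on $m=1$ with $Q_D(x)=x$.
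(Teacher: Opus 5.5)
Your proposal is correct and follows the paper's proof essentially step for step. You use Fourier inversion to write $Q_D(H_{\rm sel})=\sum_z\widehat f_Q(z)Z(z)$, apply $U_{\Bell,Z}$, and then compute the Hadamard-and-copy form directly; your remark that the Choi vectors $(Z(z)\otimes I)\ket{\Phi_m}$ are orthonormal simply spells out the normalization step the paper leaves implicit.
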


\begin{proof}
Boolean Fourier inversion gives
\[
    f_Q(x)
    =
    \sum_{z\in\mathbb F_2^m}
    \widehat f_Q(z)(-1)^{x\cdot z}.
\]
Since
\[
    Z(z)\ket{x}=(-1)^{x\cdot z}\ket{x},
\]
we obtain the operator identity
\[
    Q_D(H_{\rm sel})
    =
    \sum_{z\in\mathbb F_2^m}
    \widehat f_Q(z)Z(z).
\]
Applying the Bell-label isometry gives
\[
\begin{aligned}
    U_{\Bell,Z}
    (Q_D(H_{\rm sel})\otimes I)\ket{\Phi_m}
    &=
    \sum_z
    \widehat f_Q(z)
    U_{\Bell,Z}(Z(z)\otimes I)\ket{\Phi_m} \\
    &=
    \sum_z\widehat f_Q(z)\ket{z}
     =
    \ket{\widetilde R_Q^{\rm DQI}}.
\end{aligned}
\]
Normalization proves the first identity.

For the second identity,
\[
\begin{aligned}
    H^{\otimes m}\ket{\widetilde R_Q^{\rm DQI}}
    &=
    2^{-m/2}
    \sum_x
    \left(
      \sum_z\widehat f_Q(z)(-1)^{x\cdot z}
    \right)\ket{x} \\
    &=
    2^{-m/2}\sum_xf_Q(x)\ket{x}.
\end{aligned}
\]
Applying \(V_{\rm copy}\) gives
\[
    2^{-m/2}\sum_xQ_D(m-2|x|)\ket{x}\ket{x}
    =
    (Q_D(H_{\rm sel})\otimes I)\ket{\Phi_m}.
\]
\end{proof}

\subsection{The HDQI pilot state}
Here $T$ is the triangular matrix of Eq.~\eqref{eq:main-triangular-sigma}.
The notation $R_Q^{\rm HDQI}(G)$ below denotes the same normalized pilot
as $\mathcal R_D(G)$ in Eq.~\eqref{eq:main-general-hdqi-pilot}.

For a general noncommuting target, let
\(\Sigma\in\mathbb F_2^{m\times m}\) be its anticommutation matrix.
On \(m\) source qubits define
\begin{equation}
    \widehat P_i
    :=
    X_i\prod_{j<i}Z_j^{\Sigma_{ji}},
    \qquad
    \widehat H_G
    :=
    \sum_{i=1}^{m}\widehat P_i.
    \label{eq:hdqi-relation-free-source}
\end{equation}
These generators satisfy
\begin{equation}
    \widehat P_i\widehat P_j
    =
    (-1)^{\Sigma_{ij}}
    \widehat P_j\widehat P_i.
\end{equation}
Their binary label matrix has the form
\begin{equation}
    \widehat A=
    \begin{pmatrix}
        I_m\\
        T
    \end{pmatrix},
\end{equation}
and therefore
\begin{equation}
    \ker\widehat A=\{0\}.
\end{equation}
Thus \(\widehat H_G\) is relation free and has the same
anticommutation graph as the target.

For a word
\begin{equation}
    w=(i_1,\ldots,i_k)\in[m]^k,
\end{equation}
let \(y(w)\in\mathbb F_2^m\) record the parity of the number of
occurrences of each index. Reordering the word into increasing order
produces the sign
\begin{equation}
    \vartheta_\Sigma(w)
    :=
    (-1)^{
      \sum_{\substack{p<q\\ i_p>i_q}}
      \Sigma_{i_pi_q}
    }.
\end{equation}
Define the HDQI beta functions
\begin{equation}
    \beta_G^{(k)}(y)
    :=
    \sum_{\substack{w\in[m]^k\\y(w)=y}}
    \vartheta_\Sigma(w)
\end{equation}
and
\begin{equation}
    a_G(y)
    :=
    \sum_{k=0}^{D}q_k\beta_G^{(k)}(y).
\end{equation}
The normalized HDQI pilot is
\begin{equation}
    \ket{R_Q^{\rm HDQI}(G)}
    :=
    \frac{1}{N_G}
    \sum_{y\in\mathbb F_2^m}
    a_G(y)\ket{y},
    \qquad
    N_G^2:=\sum_y|a_G(y)|^2.
    \label{eq:hdqi-pilot}
\end{equation}
This is the graph-dependent pilot used in
HDQI~\cite{HamiltonianDQI}. The complete HDQI initialization tensors
this pilot with a fresh maximally entangled target register.

\begin{proposition}[HDQI pilot as a source polynomial state]
\label{prop:hdqi-pilot-source}
Assume $Q_D(\widehat H_G)\ne0$ and define the normalized source polynomial state
\begin{equation}
    \ket{\Psi_G^Q}
    :=
    \frac{(Q_D(\widehat H_G)\otimes I)\ket{\Phi_m}}
    {\norm{(Q_D(\widehat H_G)\otimes I)\ket{\Phi_m}}}.
\end{equation}
There exists an efficiently implementable graph-adapted Bell-label unitary
\(U_G^{\rm lab}\) such that
\begin{equation}
    U_G^{\rm lab}\ket{\Psi_G^Q}
    =
    \ket{R_Q^{\rm HDQI}(G)}\ket{0^m}.
\end{equation}
\end{proposition}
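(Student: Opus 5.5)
The approach is to expand $Q_D(\widehat H_G)$ in the canonical ordered products $\widehat{\mathcal P}(y):=\prod_{i=1}^m\widehat P_i^{y_i}$ (increasing $i$) and then build $U_G^{\rm lab}$ so that it sends each Choi vector $(\widehat{\mathcal P}(y)\otimes I)\ket{\Phi_m}$ to $\ket{y}\ket{0^m}$.

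\textbf{Step 1: expansion.} By the reordering lemma of Appendix~\ref{app:phase}, applied with all signs $+1$ and all $\lambda_i=1$, every word $w$ of length $k$ satisfies $\widehat P_{i_1}\cdots\widehat P_{i_k}=\vartheta_\Sigma(w)\widehat{\mathcal P}(y(w))$. Summing over words gives
\begin{equation*}
Q_D(\widehat H_G)=\sum_{y}a_G(y)\,\widehat{\mathcal P}(y).
\end{equation*}
Here I deliberately use $\widehat{\mathcal P}(y)$ itself rather than the Hermitian representative $P_m(\widehat Ay)$. This avoids the phase $\nu(y)$, and the $\beta_G^{(k)}$ then appear without any extra factor.

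\textbf{Step 2: orthonormality.} Since $\ker\widehat A=0$, distinct $y$ give distinct labels $\widehat Ay$. Each $\widehat{\mathcal P}(y)$ is a unitary Pauli up to phase, so the vectors $\ket{v_y}:=(\widehat{\mathcal P}(y)\otimes I)\ket{\Phi_m}$ are orthonormal. It follows that $\norm{(Q_D(\widehat H_G)\otimes I)\ket{\Phi_m}}=N_G$, because both sides equal the $\ell_2$ norm of the vector $(a_G(y))_y$.

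\textbf{Step 3: constructing $U_G^{\rm lab}$.} Because the $X$-part of the label $\widehat Ay$ is exactly $y$, we have $\widehat{\mathcal P}(y)=X^y\cdot Z^{Ty}$ up to a phase $\omega(y)\in\{\pm1,\pm i\}$. This phase comes from commuting each $X_i$ past the $Z$ strings of earlier factors. It is efficiently computable, by an argument like Subsection~\ref{subsec:chi-efficient}: it is a quadratic form in $y$ built from $T$ and the ordering.

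The transform will be the composition of three pieces.
\begin{enumerate}
\item Apply the standard Bell transform $U_{\Bell}$. It maps $(X^xZ^z\otimes I)\ket{\Phi_m}$ to $\ket{x}\ket{z}$ up to a fixed convention phase, so $\ket{v_y}$ is sent to $\omega'(y)\ket{y}\ket{Ty}$.
\item Erase the second register by the reversible linear map $\ket{y}\ket{z}\mapsto\ket{y}\ket{z+Ty}$, which takes $O(m^2)$ CNOTs.
\item Apply a diagonal phase unitary $\ket{y}\mapsto\overline{\omega'(y)}\ket{y}$ on the first register. It is built from CZ/S-type gates if $\omega'$ is a quadratic form mod 4, or in general by coherently computing the mod-4 exponent with the running-label recursion~\eqref{eq:nu-recursion} and kicking back the phase.
\end{enumerate}
The result is $U_G^{\rm lab}\ket{v_y}=\ket{y}\ket{0^m}$. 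By linearity this gives $U_G^{\rm lab}\ket{\Psi_G^Q}=N_G^{-1}\sum_y a_G(y)\ket{y}\ket{0^m}$, which is the claim.

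The main obstacle is the phase bookkeeping in Step 3. One must verify that the product of the Bell-convention phase (including the $i^{x\cdot z}$ from the Hermitian representative~\eqref{eq:pauli-representative}) and the ordering phase of $\widehat{\mathcal P}(y)$ is a function of $y$ alone that can be computed efficiently. I expect to handle this by computing $e(y)$ mod 4 with the scan of Subsection~\ref{subsec:chi-efficient} into an ancilla, applying $i^{-e}$ by a controlled $S^\dagger$ pattern, and uncomputing.
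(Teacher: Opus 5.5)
Your proof is correct. Steps 1 and 2 coincide with the paper's argument. The paper also expands $Q_D(\widehat H_G)=\sum_y a_G(y)\widehat P^{\,y}$ in increasing-order products using the reordering sign $\vartheta_\Sigma$. It then uses $\ker\widehat A=0$ to get orthonormal Choi vectors, so the unnormalized norm is $N_G$.

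The difference is in Step 3. The paper only says that, since these Choi vectors are orthonormal, one ``may choose'' a unitary sending $(\widehat P^{\,y}\otimes I)\ket{\Phi_m}$ to $\ket{y}\ket{0^m}$. That gives existence but does not argue the efficient implementability the statement claims. Your composition of the standard Bell transform, the CNOT erasure $\ket{y}\ket{z}\mapsto\ket{y}\ket{z+Ty}$, and a diagonal phase correction does supply efficiency, so your route proves slightly more than the paper's written proof.

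The phase bookkeeping you flag as the main obstacle is lighter than you expect. Since $\widehat P_i=X_iZ^{z_i}$ with $z_i$ supported on qubits $j<i$, each later $X_{i_t}$ in the increasing-order product commutes with all earlier $Z$-strings. Hence $\widehat P^{\,y}=X^yZ^{Ty}$ exactly, i.e.\ $\omega\equiv1$. The only remaining phase is the Hermitian-convention factor $i^{-\#\{q:\,y_q=(Ty)_q=1\}}$ from Eq.~\eqref{eq:pauli-representative}, which is $\nu(y)$ in the paper's notation. One layer of controlled-$S$-type gates between the bits $y_q$ and $(Ty)_q$ removes it, applied after the Bell transform and before the erasure.
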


\begin{proof}
For \(y\in\mathbb F_2^m\), define the increasing-order product
\[
    \widehat P^y
    :=
    \prod_{i=1}^{m}\widehat P_i^{y_i}.
\]
Expanding the \(k\)th power of \(\widehat H_G\) gives one operator word
for each \(w\in[m]^k\). Reordering that word into increasing order
contributes the sign \(\vartheta_\Sigma(w)\), while pairs of repeated
generators cancel because \(\widehat P_i^2=I\). Hence
\[
    \widehat H_G^k
    =
    \sum_{y\in\mathbb F_2^m}
    \beta_G^{(k)}(y)\widehat P^y.
\]
Therefore
\[
\begin{aligned}
    Q_D(\widehat H_G)
    &=
    \sum_{k=0}^{D}q_k\widehat H_G^k \\
    &=
    \sum_{y\in\mathbb F_2^m}
    a_G(y)\widehat P^y.
\end{aligned}
\]

Because \(\ker\widehat A=\{0\}\), distinct vectors \(y\) give distinct
Pauli labels. Consequently, the Choi vectors
\[
    (\widehat P^y\otimes I)\ket{\Phi_m},
    \qquad
    y\in\mathbb F_2^m,
\]
are mutually orthonormal. We may therefore choose a unitary
\(U_G^{\rm lab}\) satisfying
\[
    U_G^{\rm lab}
    (\widehat P^y\otimes I)\ket{\Phi_m}
    =
    \ket{y}\ket{0^m}.
\]
Applying it to the unnormalized source polynomial state yields
\[
\begin{aligned}
    U_G^{\rm lab}
    (Q_D(\widehat H_G)\otimes I)\ket{\Phi_m}
    &=
    \sum_y
    a_G(y)\ket{y}\ket{0^m} \\
    &=
    N_G\ket{R_Q^{\rm HDQI}(G)}\ket{0^m}.
\end{aligned}
\]
Orthogonality implies that the norm of the unnormalized source polynomial vector is
\(N_G\), so normalization proves the result.
\end{proof}


\section{Interaction complexity, locality, frustration, and classical simulability}
\label{app:dense}\label{sec:dense-target-completion}

This appendix asks what the algebraic invariants imply about the \emph{simplest physical realization} of a relation kernel.  The answer depends strongly on which notion of simplicity is optimized.  Kernel dimension counts independent dependencies; relation distance counts how many formal terms participate in the smallest dependency; Pauli weight, qubit participation, and geometric range describe a physical embedding; none of these parameters alone determines the others.

\subsection{Physical definitions}
\label{subsec:algebraic-versus-physical-locality}

For a Pauli label $\ell_i=(x_i,z_i)$, define its support and Pauli weight by
\begin{equation}
        S_i:=\{q:(x_{i,q},z_{i,q})\ne(0,0)\},
        \qquad
        \operatorname{wt}_P(\ell_i):=|S_i|.
        \label{eq:locality-support-and-weight}
\end{equation}
The number of inter-qubit interaction terms is
\begin{equation}
        t_{\rm int}(L)
        :=\#\{i:\operatorname{wt}_P(\ell_i)\ge2\}.
        \label{eq:appendix-interaction-count-properties}
\end{equation}
A family is \emph{reduced} when no column is zero and no two columns are equal.  This removes identity shifts and repeated Pauli terms, which can be combined in the Hamiltonian description.

For each physical qubit $q$, define
\begin{equation}
        d_q:=\#\{i:q\in S_i\},
        \qquad
        \Delta_{\rm part}:=\max_q d_q.
        \label{eq:locality-participation}
\end{equation}
If the qubits are embedded in a graph or metric space $G_{\rm phys}$, define
\begin{equation}
        R_{\rm loc}
        :=\max_i\operatorname{diam}_{G_{\rm phys}}(S_i).
        \label{eq:locality-range}
\end{equation}
For a prescribed kernel $K$, one may define the minimum possible maximum Pauli weight
\begin{equation}
        \ell_P(K)
        :=\min_{L:\ker L=K}\max_i\operatorname{wt}_P(\ell_i),
        \label{eq:minimal-pauli-locality-kernel}
\end{equation}
and, if a common anticommutation matrix is also prescribed,
\begin{equation}
        \ell_P(K,\Sigma)
        :=\min_{\substack{L:\ker L=K\\L^TJL=\Sigma}}
          \max_i\operatorname{wt}_P(\ell_i).
        \label{eq:minimal-pauli-locality-kernel-graph}
\end{equation}
These optimization parameters depend on the full code or binary-matroid structure, not only on $\dim K$.

\subsection{How many interactions are forced by a relation kernel?\label{sec: howmanyinter}}

\begin{property}[Systematic commuting realization: a product core plus at most $k$ interactions]
\label{prop:systematic-commuting-realization-expanded}
Let $K\subseteq\F_2^m$ have dimension $k$ and let $r=m-k$.  There is a commuting $Z$-type realization of $K$ on $r$ qubits containing $r$ one-qubit fields and at most $k$ additional parity interactions.  If
\begin{equation}
        d(K):=\min_{0\ne c\in K}|c|\ge3,
        \label{eq:kernel-distance-reducedness}
\end{equation}
the realization is reduced and contains exactly $k$ genuine interactions.
\end{property}

\begin{proof}
After permuting the $m$ term indices, choose a systematic generator matrix
\[
        G_K=\begin{pmatrix}I_k&P\end{pmatrix},
        \qquad
        P\in\F_2^{k\times r}.
\]
Define
\[
        L_{\rm sys}
        :=\begin{pmatrix}P^T&I_r\end{pmatrix}
        \in\F_2^{r\times m}.
\]
Write a selection vector as $z=(x,y)$ with $x\in\F_2^k$ and $y\in\F_2^r$.  Then
\[
        L_{\rm sys}z=P^Tx+y.
\]
Therefore
\[
        L_{\rm sys}z=0
        \quad\Longleftrightarrow\quad
        y=P^Tx,
\]
so
\[
        \ker L_{\rm sys}
        =\{(x,P^Tx):x\in\F_2^k\}
        =K.
\]
Interpret the columns as diagonal $Z$ labels.  The final $r$ identity columns are the one-qubit fields $Z_1,\ldots,Z_r$.  The first $k$ columns are the rows of $P$, interpreted as parity interactions.  Hence there are at most $k$ non-one-local terms.

If $d(K)\ge3$, no parity-check matrix with kernel $K$ can have a zero column, since that would create a weight-one codeword.  It cannot have two equal columns, since that would create a weight-two codeword.  In the systematic matrix, a first-block column of weight zero would be zero, while one of weight one would coincide with an identity column.  Thus all first $k$ columns have weight at least two, and all $k$ are genuine interactions.
\end{proof}

The construction uses at most $k$ interaction terms, each potentially
acting on up to $m-k$ qubits.

\begin{property}[Reduced commuting families require at least one interaction per independent relation]
\label{prop:commuting-interaction-lower-bound-expanded}
Let $L$ be a reduced commuting Pauli family and set $k:=\dim K_L$.  Then
\begin{equation}
        t_{\rm int}(L)\ge k.
        \label{eq:commuting-interaction-lower-bound-expanded}
\end{equation}
Moreover, after a Clifford change of basis, the family can be written, up to signs and spectator qubits, as $m-k$ independent one-qubit $Z$ fields plus exactly $k$ additional commuting $Z$-string interactions.  Consequently, for a fixed kernel $K$ with $d(K)\ge3$,
\begin{equation}
        \min_{\substack{L:\ker L=K\\L^TJL=0\\L\text{ reduced}}}
        t_{\rm int}(L)
        =\dim K.
        \label{eq:minimum-commuting-interactions-equals-kernel-dim}
\end{equation}
\end{property}

\begin{proof}
Let $S_1$ be the set of one-local terms.  In a reduced commuting family, these labels are linearly independent:
\begin{itemize}
    \item one-local terms on different qubits have disjoint binary support;
    \item distinct nonidentity Pauli axes on the same qubit anticommute, so at most one axis can occur in a commuting family;
    \item repeated labels are excluded by reducedness.
\end{itemize}
Hence
\[
        \rank L\ge|S_1|=m-t_{\rm int}(L).
\]
Rank--nullity gives
\[
        k=m-\rank L
        \le m-(m-t_{\rm int}(L))
        =t_{\rm int}(L).
\]

For the matching upper bound, choose $r=m-k$ linearly independent indexed generators.  Since they commute, symplectic Gaussian elimination produces a Clifford mapping them to $Z_1,\ldots,Z_r$.  Every remaining generator is, up to a sign, a binary product of this basis and hence becomes a signed $Z$ string on those $r$ qubits.  There are exactly $m-r=k$ remaining indexed terms.  Reducedness excludes the identity and excludes equality with a basis field, so every remaining string has weight at least two.  Thus the Clifford normal form contains exactly $k$ interaction terms.  Property~\ref{prop:systematic-commuting-realization-expanded} gives the same conclusion for a prescribed kernel.
\end{proof}

\subsection{The noncommuting generalization\label{sec:noncommuting}}

\begin{property}[Kernel dimension can be stored in a noncommuting product Hamiltonian]
\label{prop:noncommuting-product-counterexample-expanded}
For every $k$, the one-local family
\begin{equation}
        \{X_j,Y_j,Z_j:j=1,\ldots,k\}
        \label{eq:XYZ-product-family-expanded}
\end{equation}
has
\begin{equation}
        \dim K_L=k,
        \qquad
        t_{\rm int}(L)=0.
\end{equation}
\end{property}

\begin{proof}
On each qubit,
\[
        a_{X_j}+a_{Y_j}+a_{Z_j}=0,
\]
which gives one relation.  Different qubits have disjoint label support, so these relations are independent and there are no additional independent cross-qubit relations.  Every term is one local.
\end{proof}

Thus $\dim K$ alone cannot force inter-qubit interactions in a noncommuting realization.  Some relation dimension can be explained entirely by on-site $X/Y/Z$ triples.

\begin{property}[General lower bounds for noncommuting realizations]
\label{prop:noncommuting-interaction-lower-bounds-expanded}
Assume $L$ is reduced (all columns are nonzero and distinct). Let $L_1$ be the submatrix containing its one-local columns, and define
\begin{equation}
        k:=\dim\ker L,
        \qquad
        k_1:=\dim\ker L_1.
\end{equation}
Then
\begin{equation}
        t_{\rm int}(L)\ge k-k_1.
        \label{eq:k-minus-k1-bound-expanded}
\end{equation}
For a reduced one-local subfamily,
\begin{equation}
        k_1\le\frac12\rank\Sigma_L,
\end{equation}
so
\begin{equation}
        t_{\rm int}(L)
        \ge
        \max\left\{0,\dim K_L-\frac12\rank\Sigma_L\right\}.
        \label{eq:rankSigma-interaction-bound-expanded}
\end{equation}
There is also a graph-shape obstruction.  Let $\tau_{\rm 1loc}(\Sigma_L)$ be the minimum number of term vertices whose removal makes the anticommutation graph a disjoint union of cliques of sizes at most three.  Then
\begin{equation}
        t_{\rm int}(L)\ge\tau_{\rm 1loc}(\Sigma_L).
        \label{eq:graph-deletion-bound-expanded}
\end{equation}
\end{property}

\begin{proof}
Let $m_1$ be the number of one-local columns, so $m_1=m-t_{\rm int}(L)$.  Since $L_1$ is a submatrix of $L$,
\[
        \rank L\ge\rank L_1=m_1-k_1.
\]
Therefore
\[
\begin{aligned}
        k=m-\rank L
        &\le m-(m_1-k_1)\\
        &=t_{\rm int}(L)+k_1,
\end{aligned}
\]
which proves Eq.~\eqref{eq:k-minus-k1-bound-expanded}.

For the rank bound, inspect the reduced one-local family qubit by qubit.  One Pauli axis gives no local relation and an anticommutation block of rank zero.  Two distinct axes give no relation and a $2\times2$ anticommutation block of rank two.  All three axes give one relation $X+Y+Z=0$, while the $K_3$ anticommutation block still has rank two.  Different qubits give direct-sum blocks.  Hence every one-local relation contributes at least two to the rank of the one-local anticommutation matrix:
\[
        2k_1\le\rank\Sigma_{L_1}\le\rank\Sigma_L.
\]
Substitution gives Eq.~\eqref{eq:rankSigma-interaction-bound-expanded}.

Finally, remove all interaction-term vertices.  Every remaining term is one local.  Terms on different qubits commute; the one, two, or three distinct axes on a given qubit form a clique $K_1$, $K_2$, or $K_3$.  Hence the remaining graph is a disjoint union of cliques of size at most three.  At least $\tau_{\rm 1loc}(\Sigma_L)$ vertices must have been removed.
\end{proof}

The quantity $k_1$ counts independent relations within decoupled
one-qubit algebras; the remaining relation dimension requires
interaction terms.

\subsection{Signed relations and frustration}
\label{subsec:kernels-frustration-locality-hardness}

Consider
\begin{equation}
        H=\sum_i\lambda_i\gamma_iP_i,
        \qquad
        \lambda_i>0,\quad \gamma_i\in\{\pm1\},\quad P_i\ne I.
        \label{eq:frustration-hamiltonian-convention}
\end{equation}
Define the preferred constraint observables
\begin{equation}
        C_i:=-\gamma_iP_i,
        \qquad
        H=-\sum_i\lambda_iC_i.
        \label{eq:preferred-constraint-observables}
\end{equation}
Each term is individually minimized by $C_i=+1$.  For $c\in K_L$, write
\begin{equation}
        \prod_iC_i^{c_i}=\zeta_C(c)I.
        \label{eq:preferred-relation-phase}
\end{equation}

\begin{property}[Commuting frustration is exactly detected by signed relations]
\label{prop:commuting-pauli-frustration-criterion}
If the $C_i$ commute, then
\begin{equation}
        H\text{ is frustration free}
        \quad\Longleftrightarrow\quad
        \zeta_C(c)=+1
        \text{ for every }c\in K_L.
        \label{eq:commuting-frustration-free-condition-expanded}
\end{equation}
\end{property}

\begin{proof}
\noindent\emph{Necessity.}
Suppose that $H$ is frustration free. Then there exists a nonzero
state $|\psi\rangle$ such that
\[
    C_i|\psi\rangle=|\psi\rangle
    \qquad
    \text{for every }i.
\]
Let $c\in K_L$. By the definition of the relation kernel, the
corresponding product has zero projective Pauli label and is therefore
a scalar:
\[
    \prod_{i=1}^{m}C_i^{c_i}
    =
    \zeta_C(c)I.
\]
Applying both sides to $|\psi\rangle$ gives
\[
\begin{aligned}
    \zeta_C(c)|\psi\rangle
    &=
    \left(\prod_{i=1}^{m}C_i^{c_i}\right)|\psi\rangle \\
    &=
    |\psi\rangle.
\end{aligned}
\]
Since $|\psi\rangle\neq 0$, it follows that
\[
    \zeta_C(c)=+1.
\]
Therefore every signed relation must have positive phase.

\noindent\emph{Sufficiency.}
Suppose now that
\[
    \zeta_C(c)=+1
    \qquad
    \text{for every }c\in K_L.
\]
Let
\[
    \mathcal{S}
    :=
    \langle C_1,\ldots,C_m\rangle
\]
be the Pauli group generated by the preferred observables.

Because the $C_i$ commute and satisfy $C_i^2=I$, every element of
$\mathcal{S}$ can be written as
\[
    G(c)
    =
    \prod_{i=1}^{m}C_i^{c_i},
    \qquad
    c\in\mathbb{F}_2^m.
\]
Moreover, every such product is Hermitian, so a scalar element of
$\mathcal{S}$ can only be $+I$ or $-I$.

We first show that the assumed relation condition is equivalent to
\[
    -I\notin\mathcal{S}.
\]
Indeed, if $-I\in\mathcal{S}$, then for some
$c\in\mathbb{F}_2^m$,
\[
    \prod_{i=1}^{m}C_i^{c_i}=-I.
\]
Since the product is scalar, its projective Pauli label is zero, so
$c\in K_L$. This would imply
\[
    \zeta_C(c)=-1,
\]
contradicting the assumption. Conversely, a relation with
$\zeta_C(c)=-1$ immediately generates $-I$. Therefore,
\[
    \zeta_C(c)=+1
    \text{ for every }c\in K_L
    \quad\Longleftrightarrow\quad
    -I\notin\mathcal{S}.
\]

Define the group-averaging operator
\[
    \Pi_{\mathcal{S}}
    :=
    \frac{1}{|\mathcal{S}|}
    \sum_{G\in\mathcal{S}}G.
\]
This operator is a projector. Indeed,
\[
\begin{aligned}
    \Pi_{\mathcal{S}}^2
    &=
    \frac{1}{|\mathcal{S}|^2}
    \sum_{G,H\in\mathcal{S}}GH \\
    &=
    \frac{1}{|\mathcal{S}|}
    \sum_{K\in\mathcal{S}}K \\
    &=
    \Pi_{\mathcal{S}},
\end{aligned}
\]
because for each fixed $K\in\mathcal{S}$ there are exactly
$|\mathcal{S}|$ pairs $(G,H)$ satisfying $GH=K$.

Furthermore, left multiplication by any $C_i$ permutes the elements
of $\mathcal{S}$. Hence
\[
\begin{aligned}
    C_i\Pi_{\mathcal{S}}
    &=
    \frac{1}{|\mathcal{S}|}
    \sum_{G\in\mathcal{S}}C_iG \\
    &=
    \frac{1}{|\mathcal{S}|}
    \sum_{G\in\mathcal{S}}G \\
    &=
    \Pi_{\mathcal{S}}.
\end{aligned}
\]
Therefore every state in the image of $\Pi_{\mathcal{S}}$ is a
simultaneous $+1$ eigenstate of all the $C_i$.

It remains to show that this image is nonzero. Every nonidentity
Pauli operator has zero trace. Since $-I\notin\mathcal{S}$, the
identity is the only scalar element of $\mathcal{S}$, and therefore
\[
\begin{aligned}
    \operatorname{Tr}\Pi_{\mathcal{S}}
    &=
    \frac{1}{|\mathcal{S}|}
    \sum_{G\in\mathcal{S}}\operatorname{Tr}G \\
    &=
    \frac{\operatorname{Tr}I}{|\mathcal{S}|} \\
    &=
    \frac{2^n}{|\mathcal{S}|}
    >0.
\end{aligned}
\]
Because $\Pi_{\mathcal{S}}$ is a projector with positive trace, it has
nonzero rank and $H$ is frustration free.
\end{proof}

For commuting nonidentity Pauli terms, frustration requires an
incompatible signed relation. In a noncommuting family, anticommutation
can cause frustration even when the relation kernel is trivial. For example,
\begin{equation}
        H=-X-Z
        \label{eq:noncommuting-frustration-example}
\end{equation}
has $K_L=0$ but is frustrated because $X$ and $Z$ anticommute and cannot share a $+1$ eigenstate.

\subsection{Commuting relation-free Hamiltonian\label{sec: classicalsimu}}

\begin{property}[Polynomial filters of commuting relation-free
Hamiltonians are Clifford-equivalent to low-bond MPS]
\label{prop:commuting-source-mps}
Let
\begin{equation}
    H
    =
    \sum_{i=1}^{m}
    \lambda_i\gamma_i P_n(\ell_i),
    \qquad
    \lambda_i>0,
    \qquad
    \gamma_i\in\{\pm1\},
\end{equation}
be an $n$-qubit Pauli Hamiltonian whose indexed Pauli terms commute
pairwise. Let
\begin{equation}
    L
    =
    \begin{pmatrix}
        \ell_1 & \cdots & \ell_m
    \end{pmatrix}
\end{equation}
be its Pauli-label matrix, and suppose that
\begin{equation}
    K_L
    =
    \ker L
    =
    \{0\}.
\end{equation}
Let
\begin{equation}
    Q_D(x)
    =
    \sum_{k=0}^{D}q_kx^k
\end{equation}
be a real polynomial satisfying
\begin{equation}
    Q_D(H)\neq 0.
\end{equation}

Then there is an efficiently computable Clifford unitary $U$ such
that
\begin{equation}
    (U\otimes U^*)
    \left|\Psi_H^{Q_D}\right\rangle
    =
    |\varphi_D\rangle
    \otimes
    |\Phi_{n-m}\rangle,
\end{equation}
where, in the paired-site ordering
\begin{equation}
    (1,\overline 1),
    (2,\overline 2),
    \ldots,
    (m,\overline m),
\end{equation}
the state $|\varphi_D\rangle$ has an exact open-boundary MPS
representation of bond dimension at most
\begin{equation}
    D+1.
\end{equation}

\end{property}

\begin{proof}
The proof consists of four steps.

\medskip
\noindent\emph{Step 1: reduce the Hamiltonian to independent fields.}

Because the indexed Pauli terms commute,
\[
    L^T J_n L=0.
\]
Thus the subspace
\[
    W_L:=\operatorname{im}L
\]
is isotropic with respect to the binary symplectic form. Relation
freedom gives
\[
    \ker L=\{0\},
\]
and hence
\[
    \dim W_L
    =
    \operatorname{rank}L
    =
    m.
\]
An isotropic subspace of $\mathbb F_2^{2n}$ has dimension at most
$n$, so
\[
    m\leq n.
\]

By symplectic Gaussian elimination,
there is an efficiently computable Clifford unitary $U$ and signs
$\epsilon_i\in\{\pm1\}$ such that, after relabeling the qubits,
\[
    U P_n(\ell_i)U^\dagger
    =
    \epsilon_i Z_i,
    \qquad
    i\in[m].
\]
Absorbing these signs into the coefficients, define
\[
    a_i
    :=
    \epsilon_i\lambda_i\gamma_i.
\]
The transformed Hamiltonian is therefore
\[
    \widetilde H
    :=
    UHU^\dagger
    =
    H_{\mathrm{act}}\otimes I_{\mathrm{sp}},
\]
where
\[
    H_{\mathrm{act}}
    :=
    \sum_{i=1}^{m}a_iZ_i
\]
acts on the first $m$ qubits, and $I_{\mathrm{sp}}$ acts on the
remaining $n-m$ spectator qubits.

Polynomial functional calculus gives
\[
    Q_D(\widetilde H)
    =
    UQ_D(H)U^\dagger.
\]
Using the maximally entangled-state identity
\[
    (U^\dagger\otimes I)|\Phi_n\rangle
    =
    (I\otimes U^*)|\Phi_n\rangle,
\]
we obtain
\[
    \left|\Psi_{\widetilde H}^{Q_D}\right\rangle
    =
    (U\otimes U^*)
    \left|\Psi_H^{Q_D}\right\rangle.
\]
It is therefore sufficient to prove the MPS bound for
$\widetilde H$.

\medskip
\noindent\emph{Step 2: remove the spectator qubits.}

Since
\[
    \widetilde H
    =
    H_{\mathrm{act}}\otimes I_{\mathrm{sp}},
\]
one has
\[
    Q_D(\widetilde H)
    =
    Q_D(H_{\mathrm{act}})
    \otimes
    I_{\mathrm{sp}}.
\]
Also,
\[
    |\Phi_n\rangle
    =
    |\Phi_m\rangle
    \otimes
    |\Phi_{n-m}\rangle.
\]
Therefore the normalized filtered Choi state factorizes as
\[
    \left|\Psi_{\widetilde H}^{Q_D}\right\rangle
    =
    \left|\Psi_{H_{\mathrm{act}}}^{Q_D}\right\rangle
    \otimes
    |\Phi_{n-m}\rangle.
\]
The spectator Bell pairs have bond dimension one when each physical
qubit is grouped with its reference qubit. It remains only to bound
the bond dimension of
$\left|\Psi_{H_{\mathrm{act}}}^{Q_D}\right\rangle$.

\medskip
\noindent\emph{Step 3: bound the Schmidt rank across an arbitrary cut.}

Fix a cut after the $j$th active paired site. Define
\[
\begin{aligned}
    H_{\mathrm L}^{(j)}
    &:=
    \sum_{i=1}^{j}a_iZ_i,\\
    H_{\mathrm R}^{(j)}
    &:=
    \sum_{i=j+1}^{m}a_iZ_i.
\end{aligned}
\]
Across this bipartition,
\[
    H_{\mathrm{act}}
    =
    H_{\mathrm L}^{(j)}\otimes I_{\mathrm R}
    +
    I_{\mathrm L}\otimes H_{\mathrm R}^{(j)}.
\]
The two summands act on disjoint subsystems and commute. Hence the
binomial theorem gives
\[
\begin{aligned}
    Q_D(H_{\mathrm{act}})
    &=
    \sum_{k=0}^{D}
    q_k
    \left(
        H_{\mathrm L}^{(j)}\otimes I_{\mathrm R}
        +
        I_{\mathrm L}\otimes H_{\mathrm R}^{(j)}
    \right)^k
    \\
    &=
    \sum_{k=0}^{D}
    q_k
    \sum_{r=0}^{k}
    \binom{k}{r}
    \left(H_{\mathrm L}^{(j)}\right)^r
    \otimes
    \left(H_{\mathrm R}^{(j)}\right)^{k-r}.
\end{aligned}
\]
Regrouping the terms according to the left power $r$ gives
\[
    Q_D(H_{\mathrm{act}})
    =
    \sum_{r=0}^{D}
    \left(H_{\mathrm L}^{(j)}\right)^r
    \otimes
    B_r^{(j)},
\]
where
\[
    B_r^{(j)}
    :=
    \sum_{k=r}^{D}
    q_k
    \binom{k}{r}
    \left(H_{\mathrm R}^{(j)}\right)^{k-r}.
\]
Thus, across this cut, $Q_D(H_{\mathrm{act}})$ is a sum of at most
$D+1$ left-right product operators.

For an operator $A$, define its Choi vector by
\[
    |A\rangle\rangle
    :=
    (A\otimes I)|\Phi\rangle.
\]
Vectorization turns an operator tensor product into a state tensor
product:
\[
    |A_{\mathrm L}\otimes A_{\mathrm R}\rangle\rangle
    =
    |A_{\mathrm L}\rangle\rangle
    \otimes
    |A_{\mathrm R}\rangle\rangle.
\]
Therefore, the unnormalized filtered Choi vector satisfies
\[
    \left|
        Q_D(H_{\mathrm{act}})
    \right\rangle\rangle
    =
    \sum_{r=0}^{D}
    \left|
        \left(H_{\mathrm L}^{(j)}\right)^r
    \right\rangle\rangle
    \otimes
    \left|
        B_r^{(j)}
    \right\rangle\rangle.
\]
This is a decomposition into at most $D+1$ product vectors across
the doubled cut between $\mathrm L\overline{\mathrm L}$ and $\mathrm R\overline{\mathrm R}$. Consequently,
\[
    \operatorname{SchmidtRank}_{\mathrm L\overline{\mathrm L}
    :
    \mathrm R\overline{\mathrm R}}
    \left(
        \left|\Psi_{H_{\mathrm{act}}}^{Q_D}\right\rangle
    \right)
    \leq
    D+1.
\]
The cut $j$ was arbitrary, so this bound holds across every
consecutive cut in the paired-site ordering.

Successive Schmidt decompositions from left to right therefore give
an exact open-boundary MPS whose bond dimension at every cut is at
most
\[
    D+1.
\]

\medskip
\noindent\emph{Step 4: explicit tensors and preparation access.}
The rank bound can be made constructive. For $0\le r,s\le D$, define
\[
 B(t)_{rs}:=\begin{cases}\binom{s}{r}t^{s-r},&r\le s,\\0,&r>s,\end{cases}
 \qquad \mathbf q:=(q_0,\ldots,q_D)^T,
 \qquad \mathbf e_0:=(1,0,\ldots,0)^T.
\]
The identity $(1,p,\ldots,p^D)B(t)=(1,p+t,\ldots,(p+t)^D)$ gives
\[
 \mathbf e_0^TB(t_1)\cdots B(t_m)\mathbf q
 =Q_D(t_1+\cdots+t_m).
\]
Consequently the paired-site tensors
\[
 A_i^{bb'}:=2^{-1/2}\delta_{b,b'}B(a_i(-1)^b),
 \qquad b,b'\in\{0,1\},
\]
with boundaries $\mathbf e_0^T$ and $\mathbf q$, represent exactly
$(Q_D(H_{\rm act})\otimes I)|\Phi_m\rangle$ and have bond dimension $D+1$.
They require only polynomially many arithmetic operations in $m,D$.
\end{proof}

\end{document}